\pdfoutput=1
\pdftrailerid{<417869656e742d50617065722d49492d72302e332e332d706466>}
\documentclass[11pt]{article}

\usepackage[T1]{fontenc}
\usepackage[utf8]{inputenc}
\usepackage{lmodern}
\usepackage[a4paper,margin=1in]{geometry}
\usepackage{microtype}
\usepackage{amsmath,amssymb,amsthm,mathtools}
\usepackage{aliascnt}
\usepackage{booktabs,longtable,array,multirow,tabularx}
\usepackage{graphicx}
\usepackage{float}
\usepackage{enumitem}
\usepackage{xcolor}
\usepackage{natbib}
\usepackage{xurl}
\usepackage{xspace}
\usepackage{hyperref}
\usepackage[nameinlink,capitalise,noabbrev]{cleveref}
\usepackage{listings}
\usepackage{algorithm}
\usepackage{algpseudocode}
\usepackage{tikz}
\usetikzlibrary{arrows.meta,positioning,shapes.geometric,fit,calc}

\hypersetup{
  colorlinks=true,
  linkcolor=black,
  citecolor=black,
  urlcolor=black,
  pdftitle={Axient: On-Chain Credit and Loss Allocation for Leveraged Event Markets},
  pdfauthor={Maksym Nechepurenko},
  pdfsubject={A venue-agnostic on-chain credit protocol with endogenous capital supply, strategic execution liquidity, withdrawal dynamics, and agent-based validation},
  pdfkeywords={prediction markets, event contracts, DeFi lending, agent-based computational economics, margin finance, liquidation, credit pools, loss waterfalls, market making, reserves, financial contagion}
}
\newtheorem{definition}{Definition}[section]
\newaliascnt{assumption}{definition}
\newtheorem{assumption}[assumption]{Assumption}
\aliascntresetthe{assumption}
\newaliascnt{proposition}{definition}
\newtheorem{proposition}[proposition]{Proposition}
\aliascntresetthe{proposition}
\newaliascnt{theorem}{definition}
\newtheorem{theorem}[theorem]{Theorem}
\aliascntresetthe{theorem}
\newaliascnt{corollary}{definition}
\newtheorem{corollary}[corollary]{Corollary}
\aliascntresetthe{corollary}
\newaliascnt{lemma}{definition}
\newtheorem{lemma}[lemma]{Lemma}
\aliascntresetthe{lemma}
\newaliascnt{remark}{definition}
\newtheorem{remark}[remark]{Remark}
\aliascntresetthe{remark}
\newaliascnt{example}{definition}

\aliascntresetthe{example}

\crefname{definition}{definition}{definitions}
\Crefname{definition}{Definition}{Definitions}
\crefname{assumption}{assumption}{assumptions}
\Crefname{assumption}{Assumption}{Assumptions}
\crefname{proposition}{proposition}{propositions}
\Crefname{proposition}{Proposition}{Propositions}
\crefname{theorem}{theorem}{theorems}
\Crefname{theorem}{Theorem}{Theorems}
\crefname{corollary}{corollary}{corollaries}
\Crefname{corollary}{Corollary}{Corollaries}
\crefname{lemma}{lemma}{lemmas}
\Crefname{lemma}{Lemma}{Lemmas}
\crefname{remark}{remark}{remarks}
\Crefname{remark}{Remark}{Remarks}
\crefname{example}{example}{examples}
\Crefname{example}{Example}{Examples}

\newcommand{\Axient}{\textnormal{\textsc{Axient}}\xspace}
\newcommand{\R}{\mathbb{R}}
\newcommand{\E}{\mathbb{E}}

\newcommand{\one}{\mathbf{1}}

\newcommand{\Pset}{\mathcal{P}}
\newcommand{\Vset}{\mathcal{V}}
\newcommand{\Eset}{\mathcal{E}}
\newcommand{\Iset}{\mathcal{I}}
\newcommand{\Mset}{\mathcal{M}}

\newcommand{\pospart}[1]{\left[#1\right]^+}
\newcommand{\state}[1]{\texttt{#1}}

\lstdefinestyle{axientcode}{
  basicstyle=\ttfamily\small,
  frame=single,
  breaklines=true,
  columns=fullflexible,
  showstringspaces=false,
  keywordstyle=\bfseries,
  commentstyle=\itshape,
  xleftmargin=0.5em,
  xrightmargin=0.5em
}
\title{\textbf{Axient: On-Chain Credit and Loss Allocation for Leveraged Event Markets}\\[0.35em]\large A Venue-Agnostic Protocol for Traders, Credit Providers, Market Makers, and Liquidation Backstops}
\author{Maksym Nechepurenko\thanks{Founder and Director of Research, ForesightFlow, the Research Department of Devnull FZCO, Dubai, United Arab Emirates. Email: \href{mailto:maksym@devnull.ae}{maksym@devnull.ae}. Research profile: \url{https://www.foresightflow.org/}.}}
\date{July 15, 2026}

\begin{document}
\maketitle

\begin{abstract}
A physically backed leveraged event position requires real credit. If a trader contributes collateral $C$ and receives leverage $L$, the protocol must supply $(L-1)C$ in stablecoin and use the combined amount to acquire recognized event exposure. Once third-party capital supplies that financed leg, the product becomes a credit protocol whose design must specify lender priority, liquidation authority, withdrawal liquidity, loss allocation, and the relationship between executable market liquidity and financial capital.

This paper develops a venue-agnostic on-chain architecture for that capital layer and extends it from static capital accounting to an endogenous capital market. The protocol separates traders; Senior Credit LPs; market makers; liquidators; and Liquidation Backstop Providers (LBPs), which supply junior loss-absorbing provider capital ahead of Senior principal. We formalize pool and debt shares, utilization- and risk-sensitive interest, collateral-locked position accounts, venue capability vectors, collateralized market-maker commitments, withdrawal queues, isolated risk pools, non-redeemable reserves, and a deterministic loss waterfall. We then endogenize Senior and LBP participation, trader leverage demand, market-maker delivery and strategic withdrawal, liquidator entry, reserve replenishment, withdrawal coordination, and common-factor contagion.

The formal results establish balanced real- and integer-unit accounting; settlement-confirmed debt priority; trader ownership of value remaining after admitted debt and fees; cumulative and idempotent partial settlement; non-dilutive share issuance; an exact deterministic waterfall; junior-before-Senior impairment; request-time neutrality for loss-participating withdrawal queues; existence of a utilization fixed point; sufficient conditions for its uniqueness and local stability; conditions for market-maker commitment delivery and liquidator coverage; a capacity-induced-shortfall result showing that added backstop capacity can increase aggregate loss when admission expands too quickly; finite-time reserve replenishment under sustained positive inflow; and persistence of common-factor loss covariance despite strict ledger isolation.

No live Axient or venue data are used. The release preserves the 28 exact fixtures and 31,082 deterministic checks from r0.2.2 and adds a fixed-seed agent-based experiment with 96 two-year paths, three pools, four protocol configurations, three leverage policies, and 636 heterogeneous agents per path. The run records 70,207,488 scalar invariant evaluations with zero failures. Six of seven release-registered hypotheses pass. Under the phased leverage mix, the full configuration reduces two-year path-level Senior-loss incidence from 99.0 to 57.3 percent relative to the Senior-only benchmark, but still admits only 38.0 percent of requested credit, impairs LBP capital in 94.8 percent of paths, and leaves material Senior loss. The bonded-market-maker hypothesis fails: modest commitment capacity expands admitted demand faster than it reduces unit shortfall. A $5\times$-heavy policy raises Senior-loss incidence by 159.1 percent relative to $2\times$ only, while a common stablecoin factor preserves 0.842 cross-pool raw-shortfall correlation despite local isolation. An independently coded daily robustness model reproduces the directional value of layered protection, the leverage and hard-flat trade-off, and the withdrawal-buffer phase transition while producing materially different numerical levels. The disagreement reinforces the model-risk boundary. These are synthetic mechanism comparisons under author-specified behavioral rules, not forecasts of APY, default probability, provider participation, venue liquidity, settlement reliability, or production safety. Empirical calibration, external contract audit, and live adapter validation remain separate tasks.
\end{abstract}

\noindent\textbf{Keywords:} decentralized finance; prediction markets; event contracts; agent-based computational economics; margin finance; liquidation; credit pools; market making; loss waterfalls; financial contagion.\\
\textbf{JEL Classification:} G13, G23, G24, G32.

\medskip
\noindent\textit{Research and implementation disclosure.} \Axient is an active research-and-development initiative of the author.\footnote{A non-archival project page is maintained at \url{https://axient.app}.} This creates an interest in the protocol's success. The paper separates accounting identities, theorem-conditional guarantees, registered scenario results, synthetic comparative results, implementation assumptions, and unvalidated deployment claims. It does not promise LP yield, principal protection, production safety, venue liquidity, or deployability beyond the conditions stated in each result.

\section{Introduction}
\label{sec:introduction}

\paragraph{Manuscript status (r0.3.3).}
This reproducible publication build preserves the deterministic r0.3.1 numerical evidence while presenting the agent-based economic revision of the repository-linked r0.2.2 protocol specification with a publication-grade vector visualization layer. The real- and integer-accounting results, tagged reference implementation, 28 exact fixtures, and 31,082 deterministic checks are preserved. The new contribution is a release-registered, fixed-seed agent model in which Senior Credit LPs, Liquidation Backstop Providers (LBPs), traders, market makers, and liquidators respond to rates, losses, queue delay, adverse selection, bounties, and one another. The model adds endogenous capital supply, strategic execution capacity, withdrawal coordination, dynamic reserves, and common-factor contagion. Its behavioral parameters are author-specified and synthetic; the experiment is a mechanism stress test, not empirical calibration.

A leveraged purchase of an event outcome token is a financed transaction. If a trader contributes collateral $C>0$ and receives leverage $L>1$, the gross acquisition notional is
\begin{equation}
N=LC,
\label{eq:gross-notional}
\end{equation}
and the financed amount is
\begin{equation}
D_0=(L-1)C.
\label{eq:initial-debt-intro}
\end{equation}
If the position is described as physically backed, the amount $D_0$ cannot be only an internal number. Stablecoin capital equal to the financed leg must be transferred into a recognized position account and used, together with trader collateral, to acquire the outcome exposure. The protocol must then specify who supplies that capital, who receives interest, who can liquidate the financed asset, which claim has priority over sale or redemption proceeds, how withdrawals are handled while credit is deployed, and who bears loss when realized recovery is insufficient.

These questions change the economic category of the product. A leveraged event-market interface becomes a credit protocol. The protocol studied here has five roles:
\begin{enumerate}[leftmargin=*]
\item \emph{traders}, who provide first-loss collateral and demand event exposure;
\item \emph{Senior Credit LPs}, who supply stablecoin principal and receive priority repayment plus a share of interest;
\item \emph{market makers}, who supply ordinary quotes or collateralized executable commitments;
\item \emph{liquidators}, who perform admissible risk-reducing execution and receive a bounty; and
\item \emph{Liquidation Backstop Providers} (LBPs), who commit junior loss-absorbing provider capital ahead of Senior principal and receive enhanced protocol income.
\end{enumerate}
The last two roles are deliberately distinct. A liquidator supplies an action; an LBP supplies a balance sheet. The same address or institution may perform both, but the mechanism must not depend on that coincidence.

The position-level primitive is inherited from \citet{nechepurenko2026axient}. Debt is treated as repaid only after sale proceeds are settled and applied to the obligation; a matched trade, provisional venue balance, or unconfirmed redemption does not extinguish debt. That work separates leverage maturity from claim maturity and proves Debt-Free Finality conditional on successful pre-finality debt extinction. The present paper studies the protocol-level questions that follow once the financed leg comes from external capital providers.

\subsection{Research question}
The central question is:
\begin{quote}
How can an on-chain protocol finance physically backed leveraged event positions across heterogeneous venues while preserving lender priority, transparent provider returns, deterministic loss allocation, withdrawal honesty, and explicit failure boundaries when capital and execution participants respond endogenously to protocol state?
\end{quote}

The answer cannot be ``put the debt in a database.'' A backend ledger does not create an enforceable lien, prevent collateral escape, segregate reserves, protect pool-share accounting, or constrain governance from changing priority after capital has been deposited. Conversely, moving accounting on-chain does not guarantee that external venues will maintain executable depth, accept liquidations, resolve events correctly, or settle on time. Nor does static accounting show whether providers stay, whether market makers withdraw, whether liquidators enter, or whether queues become self-reinforcing. The paper therefore separates contract guarantees, behavioral mechanisms, and external assumptions.

\subsection{Contributions}
The paper makes thirteen contributions.
\begin{enumerate}[leftmargin=*]
\item \textbf{Protocol category and role separation.} It specifies Axient as a venue-agnostic event-margin credit protocol rather than a leveraged frontend and separates Senior credit, ordinary market-making liquidity, liquidation execution, protocol reserve equity, and LBP provider capital.
\item \textbf{On-chain and discrete accounting model.} It formalizes pool assets, claims, debt shares, Senior and junior pool shares, settlement-confirmed repayment, debt-first trader residual release, partial-settlement idempotency, fee routing, integer rounding, and double-entry transition closure.
\item \textbf{Credit and leverage geometry.} It derives the exact credit fraction $1-1/L$ and gross exposure per unit of credit $L/(L-1)$, showing why $5\times$ requires 80 percent real financing and supports only 1.25 units of gross exposure per unit of credit.
\item \textbf{Deterministic loss allocation.} It defines a finite waterfall across position buffers, local reserves, LBP capital, pool reserves, optional global reserves, and Senior principal, with existence, uniqueness, and conservation results.
\item \textbf{Collateralized market-maker commitments.} It distinguishes revocable displayed depth from enforceable capacity and proves when delivered value plus bond slashing can substitute for reserve.
\item \textbf{Isolation and protection contagion.} It proves no direct local loss contagion under strict pool isolation and identifies liability-independent correlation and protection-contagion channels created by common factors and a shared reserve.
\item \textbf{Withdrawal-liquidity and first-exit discipline.} It models share-denominated queues, requires queued shares to remain exposed to NAV changes until payment, and gives clearance and request-time-neutrality results.
\item \textbf{Endogenous capital-market equilibrium.} It defines heterogeneous Senior, LBP, and trader response functions; proves fixed-point existence; gives contraction conditions for uniqueness and stability; and identifies wrong-way utilization pricing under stress.
\item \textbf{Strategic execution supply.} It formalizes market-maker delivery incentives and permissionless liquidator entry and derives a capacity-induced-shortfall condition under which additional bonded capacity increases aggregate loss by expanding admitted credit faster than it reduces unit shortfall.
\item \textbf{Dynamic reserve and run mechanics.} It gives sufficient conditions for finite-time reserve replenishment and for a unique stable withdrawal state while retaining explicit run and repeated-loss failure boundaries.
\item \textbf{Agent-based capital-market experiments.} It introduces a release-registered weekly model with 636 heterogeneous agents per path and a separately coded daily robustness model with different time scale, aggregation, policy definitions, and scenario grids.
\item \textbf{Negative design findings.} It reports the failure of the registered bonded-MM hypothesis, frequent junior-capital impairment, material residual Senior loss, high sensitivity to $5\times$ exposure, and persistent common-factor contagion despite strict local ledgers.
\item \textbf{Repository-linked reproducibility.} It preserves the tagged implementation-parity release and adds versioned agent parameters, source, outputs, tables, figures, release-registered hypotheses, and 70,207,488 zero-failure scalar invariant evaluations.
\end{enumerate}

\subsection{Evidence discipline}
The mathematical results are conditional statements about the specified transition system. The deterministic verifier and tagged reference harness check arithmetic, transitions, and registered invariants; neither is a proof substitute or an external audit. The earlier stochastic experiments and the new agent model are synthetic, author-specified mechanism comparisons. Release registration records hypotheses before the final run, but it is not external preregistration. The outputs are not observations of Axient, a live venue, or an existing DeFi credit pool and cannot support a public APY, a production default probability, a capital-adequacy promise, or a claim that one configuration is commercially optimal.

This discipline follows the broader ForesightFlow programme's separation of structural results, observed microstructure, counterfactual replay, and policy parameters \citep{nechepurenko2026resolutionaware,nechepurenko2026fillside}. Here the restriction remains stronger: there are no live Axient observations.

\subsection{Roadmap}
\Cref{sec:related-work} positions the protocol relative to prediction markets, DeFi credit, liquidation, bank runs, financial networks, and agent-based computational economics. \Cref{sec:setting} defines roles, pool state, venue capability classes, and the position-level interface. \Cref{sec:accounting}--\ref{sec:interest} formalize accounting, shares, debt, rates, and returns. \Cref{sec:liquidation-roles} and \Cref{sec:mm} separate liquidators, LBPs, and market makers. \Cref{sec:waterfall}--\ref{sec:withdrawals} develop loss allocation, isolation, reserves, and queues. \Cref{sec:leverage} derives credit-capacity and leverage constraints. \Cref{sec:architecture} specifies the contract and agent architecture; \Cref{sec:simulation-design}--\ref{sec:simulation-results} preserve the registered static synthetic evaluation. \Cref{sec:endogenous-capital}--\ref{sec:runs-reserves} develop the endogenous and strategic extensions, and \Cref{sec:agent-design}--\ref{sec:agent-results} report the new agent experiment. The remaining sections state contract guarantees, security invariants, implications, limitations, and conclusions. Detailed proofs, algorithms, parameters, and reproducibility materials appear in the appendices.

\section{Related Work}
\label{sec:related-work}

The paper lies at the intersection of prediction-market design, DeFi lending, liquidation microstructure, liquidity transformation, and financial-network loss allocation.

\subsection{Prediction markets and leveraged event instruments}
Prediction markets have traditionally been studied as information-aggregation mechanisms and as markets for bounded claims \citep{hanson2003,wolfers2004,manski2006}. The ForesightFlow Event-Linked Perpetuals programme establishes why ordinary crypto-perpetual mechanics do not port cleanly to event claims: bounded support, terminal collapse, asymmetric depth, and oracle-mediated finality create risk channels absent from continuous underlyings \citep{nechepurenko2026resolutionaware}. The taxonomy paper extends the instrument space beyond single binary markets \citep{nechepurenko2026taxonomy}; the manipulation paper shows that leverage changes both market-price and real-world outcome-manipulation incentives \citep{nechepurenko2026manipulation}; the fill-side study characterizes the concentrated non-retail participation and quote-attribution constraints of a Polymarket-class venue \citep{nechepurenko2026fillside}.

The first Axient paper responds to one specific terminal-risk problem by separating leverage maturity from claim maturity and extinguishing debt before the event claim enters non-tradable finality \citep{nechepurenko2026axient}. The present work begins where that paper stops: it asks how the financed leg is formed, priced, withdrawn, protected, and written down when supplied by third parties.

\subsection{DeFi money markets and adaptive rates}
Compound and Aave popularized pooled stablecoin lending, interest-bearing shares, utilization-based borrow rates, reserve factors, and permissionless liquidation \citep{compound2019,aave2026}. The general DeFi literature documents how composability, overcollateralization, oracle dependence, and liquidation incentives distinguish protocol credit from conventional banking \citep{werner2022,gudgeon2020}. Recent work studies adaptive rate setting and the interaction between fast market feedback and slower governance or learning loops \citep{bastankhah2024agilerate,bastankhah2024fastslow}.

Axient inherits the accounting primitives but not the collateral geometry. In ordinary overcollateralized lending, the borrower pledges an asset that exists independently of the loan. Here trader equity and borrowed stablecoin jointly create the financed event position. The position may resolve to zero, lose executable liquidity before finality, or depend on an external venue's signing and settlement model. The design is therefore closer to secured margin finance or on-chain prime brokerage than to an ordinary money market.

\subsection{Liquidation and execution externalities}
DeFi liquidation research documents the dependence of protocol solvency on liquidation incentives, gas costs, price impact, and market congestion \citep{perez2020,qin2023miqado,sadeghi2026}. Liquidators can protect lenders while also extracting rents or amplifying price dislocations. The distinction between liquidation execution and loss-bearing capital is not always explicit in deployed systems. This paper makes it central: liquidators are action providers, whereas LBPs are junior capital providers. Market makers form a third class because they can supply executable capacity without supplying either senior or junior credit.

\subsection{Funding liquidity, runs, and networks}
The interaction between market liquidity and funding liquidity is a classic source of financial amplification \citep{brunnermeier2009}. Pooled maturity transformation creates withdrawal-run incentives even when a balance sheet is solvent at book value \citep{diamond1983}. Financial-network models show how shared obligations and protection can transmit or absorb losses depending on priority and clearing rules \citep{eisenberg2001}.

Axient's isolated pools, withdrawal queues, and optional global reserve map these questions into smart-contract form. Strict isolation prevents direct transfer of one pool's liability to another. A shared global reserve can still create \emph{protection contagion}: one pool's draw reduces protection available elsewhere without transferring the original liability.

\subsection{Boundary of the contribution}
The paper does not claim an optimal rate curve, welfare-optimal waterfall, complete equilibrium among traders and capital providers, or empirically calibrated loss distribution. Its contribution is a formal and executable reference architecture: explicit roles, enforceable claims, conservation rules, finite-scenario guarantees, and reproducible synthetic comparisons that can later be replaced or calibrated with venue and protocol data.

\subsection{Agent-based computational economics and heterogeneous finance}
The representative-agent abstraction is poorly suited to the present questions because Senior LPs, LBPs, traders, market makers, and liquidators have different objectives, constraints, and timing. Agent-based computational economics treats aggregate outcomes as the result of interacting heterogeneous rules rather than an imposed single equilibrium agent \citep{tesfatsion2006,farmer2009}. A recent DeFi-focused model applies the same approach to multi-asset lending protocols and motivates agent-level treatment of deposits, borrowing, collateral, and market feedback \citep{chaudhary2022abm}. Heterogeneous-belief asset-pricing models show how boundedly rational strategy switching can create endogenous instability and multiple regimes \citep{brock1998,lux1999}. The validation literature also warns that an internally coherent agent model is not empirically validated merely because it reproduces plausible stylized outcomes \citep{fagiolo2007}. We therefore use the agent model as a registered synthetic stress laboratory and report negative findings rather than as an estimator of production probabilities.

\subsection{Runs, liquidity feedback, and network contagion}
Classical run models establish that liquidity transformation can create self-fulfilling withdrawal pressure even when long-run assets are valuable \citep{diamond1983,goldstein2005}. In Axient the queue is share-denominated and loss participating, which removes one request-time transfer but not expectations about delay, future impairment, or peer withdrawals. Market and funding liquidity can also reinforce one another \citep{brunnermeier2009}, while network research distinguishes direct contractual contagion from correlated common exposures and shared-protection channels \citep{gai2010,acemoglu2015,eisenberg2001}. These distinctions motivate the separate treatment of local ledger isolation, global reserve depletion, and common stablecoin, venue, and oracle factors.

\subsection{Strategic liquidity provision and adverse selection}
Market-making capacity is endogenous to adverse selection, inventory risk, and latency-sensitive placement incentives \citep{lehalle2016adverse,fodra2012inventory}. Sequential-trade and strategic-information models explain why liquidity suppliers widen or withdraw when order flow becomes more toxic \citep{glosten1985,kyle1985}. An event-credit protocol cannot treat visible book depth as committed capacity: ordinary quotes are revocable, while a bonded commitment has value only if delivery or slashing is enforceable and the bond retains value in the protected state. The agent model therefore separates ordinary MM capacity, bonded capacity, strategic withdrawal, and liquidator competition.

\section{Protocol Setting}
\label{sec:setting}

\subsection{Sets and time}
Let $p\in\Pset$ index credit pools, $v\in\Vset$ venues, $e\in\Eset$ events, $i\in\Iset$ financed positions, and $m\in\Mset$ market makers. Contract accounting is discrete in block or checkpoint time $t\in\mathbb{N}_0$. Continuous interest notation is used where convenient, but the reference implementation uses a debt index updated at discrete transitions.

All monetary quantities are denominated in one pool settlement asset. Cross-asset and bridge risk require separate pools or explicit conversion haircuts; the base model does not net heterogeneous stablecoins at par.

\subsection{Economic roles}
\begin{definition}[Trader]
A trader supplies collateral $C_i$, chooses an admissible leverage tier $L_i$, and owns residual position value after the debt-first rule has been satisfied.
\end{definition}

\begin{definition}[Senior credit liquidity provider]
A senior credit liquidity provider supplies lendable stablecoin capital to pool $p$ and receives senior pool shares. Debt repayment has priority over trader residual value. If recognized position proceeds are insufficient, designated reserves and junior LBP capital are impaired before senior principal; senior capital is protected by those layers but is not guaranteed against loss.
\end{definition}

\begin{definition}[Liquidator]
A liquidator is an execution agent that submits an admissible risk-reducing transition and receives a contract-defined bounty. Validation of the Axient-side transition may be permissionless; external venue execution remains constrained by the active adapter capability. A liquidator need not supply pool capital and need not absorb principal loss.
\end{definition}

\begin{definition}[Liquidation backstop provider]
A liquidation backstop provider supplies junior loss-absorbing provider capital $J_{p,t}$ to an isolated pool or risk bucket. LBP capital absorbs shortfall before senior principal, but may stand behind protocol-funded market or position reserves, and receives a designated share of interest or fees. An LBP may also operate a liquidator, but the roles are not identified in protocol state.
\end{definition}

\begin{definition}[Market maker]
A market maker supplies ordinary quotes on an external venue, a protocol RFQ, or a collateralized executable commitment. Only capacity with an objective delivery condition and enforceable failure value can be credited as robust protection under \Cref{sec:mm}.
\end{definition}

\subsection{Pool and position state}
For each pool $p$, define:
\begin{align*}
A_{p,t} &\ge 0 &&\text{aggregate settled cash controlled by the local pool},\\
A^L_{p,t} &\ge 0 &&\text{unencumbered senior cash eligible for lending},\\
B_{p,t} &\ge 0 &&\text{recognized outstanding debt receivable},\\
V^S_{p,t} &\ge 0 &&\text{net asset value allocated to senior shares},\\
V^J_{p,t} &\ge 0 &&\text{net asset value allocated to LBP shares},\\
R^{M}_{p,e,v,t} &\ge 0 &&\text{market/event/venue-specific reserve},\\
R^{P}_{p,t} &\ge 0 &&\text{pool reserve},\\
G_t &\ge 0 &&\text{optional protocol-global reserve},\\
W_{p,t} &\ge 0 &&\text{outstanding withdrawal claims},\\
Z^S_{p,t}, Z^J_{p,t} &\ge 0 &&\text{senior and junior share supplies}.
\end{align*}
The global reserve is optional. A pool mandate may prohibit its use to preserve strict isolation.

For position $i$ assigned to pool $p(i)$, define:
\begin{align*}
C_i &>0 &&\text{trader collateral},\\
L_i &\in [1,5] &&\text{admitted leverage},\\
N_i &=L_iC_i &&\text{gross acquisition notional},\\
D_{i,t} &\ge 0 &&\text{recognized debt},\\
K_{i,t} &\ge 0 &&\text{settled position cash},\\
Q_{i,t} &&\text{recognized assets or venue claims},\\
P^{\mathrm{set}}_{i,t} &\ge 0 &&\text{settled proceeds available for allocation},\\
P^{\mathrm{pend}}_{i,t} &\ge 0 &&\text{unsettled proceeds}.
\end{align*}
Pending proceeds do not reduce $D_{i,t}$.

\subsection{Venue capability vectors and backing modes}
Venue independence is encoded through capability rather than branding.

\begin{definition}[Venue capability vector]
A venue adapter publishes
\begin{equation}
\chi_v=(c_v,\ell_v,a_v,s_v,r_v,h_v)\in\{0,1\}^6,
\label{eq:capability-vector}
\end{equation}
where the coordinates indicate contract-controlled custody, enforceable lender lien, delegated liquidation authority, verifiable trade settlement, verifiable redemption, and withdrawal lock while debt is outstanding.
\end{definition}

The binary vector is deliberately conservative. A production registry may use graded scores and evidence hashes, but the pool mandate must still decide whether a capability class is admissible.

\begin{definition}[Backing mode]
A position belongs to one of four backing modes.
\begin{enumerate}[leftmargin=*]
\item \textbf{Native physically backed:} the recognized outcome asset is held in a contract-controlled account with enforceable lien, withdrawal lock, and delegated liquidation.
\item \textbf{Verifiable venue settlement:} the asset and settlement are on-chain verifiable, but some signing, matching, or liquidation authority remains venue- or operator-dependent.
\item \textbf{Attested or custodial:} ownership or liquidation relies on a custodian, controlled signer, or attestation.
\item \textbf{Protocol-native synthetic:} Axient records the trader claim and hedges net exposure externally; the user claim is not represented one-for-one by an outcome token.
\end{enumerate}
\end{definition}

Modes are not interchangeable. A pool may admit multiple venues only when its mandate specifies allowed capability classes, concentration limits, and loss treatment. A senior pool that represents itself as natively physically backed cannot silently include synthetic or attested positions.

\subsection{Pool mandates}
\begin{definition}[Pool mandate]
A pool mandate is
\begin{equation}
\mu_p=(\mathcal A_p,\mathcal V_p,\mathcal E_p,\mathcal L_p,\mathcal C_p,\mathcal R_p),
\end{equation}
where $\mathcal A_p$ is the allowed settlement-asset set, $\mathcal V_p$ the venue and capability set, $\mathcal E_p$ the event-class set, $\mathcal L_p\subseteq\{1,2,3,5\}$ the leverage tiers, $\mathcal C_p$ concentration limits, and $\mathcal R_p$ the reserve and waterfall policy.
\end{definition}
A mandate is prospective. Governance may alter future admission but cannot retroactively subordinate an outstanding claim or reorder the waterfall for an existing position.

\subsection{Position-level primitive}
The protocol assumes that every financed position has a position-level debt-clearing certificate from \citet{nechepurenko2026axient}. At decision time $u$, the risk controller identifies an admissible sale quantity $\widehat x_{i,u}$ such that a lower envelope of settlement-confirmed proceeds covers an upper envelope of debt and expenses.

\begin{assumption}[Certified debt-clearing interface]
\label{ass:position-certificate}
For each admitted position $i$, the risk layer publishes
\begin{equation}
\mathfrak C_i=(\widehat x_{i,u},\underline B_{i,u,\Delta},\overline H_{i,u,\Delta},m_i,\Delta_i)
\end{equation}
with
\begin{equation}
K_{i,u}+\underline B_{i,u,\Delta}(\widehat x_{i,u})
\ge
\overline H_{i,u,\Delta}+m_i.
\label{eq:position-certificate}
\end{equation}
The certificate is valid only inside its registered operating set and settlement horizon.
\end{assumption}

Protocol-level admission adds pool cash, concentration, withdrawal-liquidity, reserve, provider, and shared-book constraints to \eqref{eq:position-certificate}. A position that lacks a valid position certificate cannot be made acceptable merely by increasing the reserve.

\subsection{Protocol network}
\begin{figure}[H]
\centering
\begin{tikzpicture}[
node distance=1.4cm and 1.8cm,
box/.style={draw, rounded corners, align=center, minimum width=3.0cm, minimum height=0.9cm},
arr/.style={-{Latex[length=2mm]}, thick}
]
\node[box] (trader) {Trader\\collateral + demand};
\node[box, right=of trader] (position) {Position account\\collateral lock + debt};
\node[box, above=of position] (senior) {Senior credit LPs\\stablecoin principal};
\node[box, below=of position] (lbp) {LBPs\\junior provider capital};
\node[box, right=of position] (venue) {Venue adapter\\recognized exposure};
\node[box, above=of venue] (mm) {Market makers\\quotes / bonded RFQ};
\node[box, below=of venue] (liq) {Liquidators\\risk-reducing execution};
\draw[arr] (trader) -- node[above,sloped]{collateral} (position);
\draw[arr] (senior) -- node[right]{credit} (position);
\draw[arr] (position) -- node[above]{entry capital} (venue);
\draw[arr] (venue) -- node[below]{settled proceeds} (position);
\draw[arr] (mm) -- (venue);
\draw[arr] (liq) -- (venue);
\draw[arr] (lbp) -- node[right]{loss absorption} (position);
\end{tikzpicture}
\caption{Axient protocol roles. Credit capital, execution liquidity, liquidation execution, and junior loss absorption are separate functions even when one institution performs more than one role.}
\label{fig:roles}
\end{figure}

\section{On-Chain Balance Sheet and Accounting}
\label{sec:accounting}

\subsection{Pool book value}
The contract ledger treats a credit pool as a balance sheet rather than as a collection of front-end positions. Let
\begin{equation}
\mathcal A_{p,t}=A_{p,t}+B_{p,t}
\label{eq:pool-assets}
\end{equation}
be recognized local pool assets: aggregate settled cash and recognized debt receivables. The cash term includes tagged or segregated sub-accounts for senior, junior, reserve, and protocol claims. Their encumbrance determines whether cash is lendable, not whether it is an asset. Let
\begin{equation}
\mathcal C_{p,t}=V^S_{p,t}+V^J_{p,t}+R^M_{p,t}+R^P_{p,t}+T_{p,t}
\label{eq:pool-claims}
\end{equation}
be local claims and equity layers: senior NAV, LBP NAV, aggregate market reserves assigned to the pool, pool reserve, and protocol fee claim retained inside the accounting boundary. The global reserve $G_t$ is accounted for separately because it may protect more than one pool.

\begin{assumption}[Single-asset book]
\label{ass:single-asset}
All terms in \eqref{eq:pool-assets}--\eqref{eq:pool-claims} are denominated in the same settlement asset. Assets requiring conversion are not recognized at par unless the pool mandate specifies a conversion oracle, haircut, and settlement path.
\end{assumption}

Let $A^L_{p,t}\le A_{p,t}$ denote the unencumbered senior-cash subset eligible for new lending. It is a control-state partition of aggregate cash, not an additional asset. At deployment or after a fully reconciled transition,
\begin{equation}
\mathcal A_{p,t}=\mathcal C_{p,t}.
\label{eq:pool-closure}
\end{equation}
The equality is book-value closure, not a statement that every debt receivable is liquid or riskless.

\subsection{Position balance sheet}
Let $X_{i,t}$ denote settlement-recognized value controlled by position account $i$. It excludes matched but unsettled proceeds. Define trader residual equity and latent credit shortfall by
\begin{align}
E_{i,t}&=\pospart{X_{i,t}-D_{i,t}},\label{eq:position-equity}\\
\ell_{i,t}&=\pospart{D_{i,t}-X_{i,t}}.\label{eq:position-shortfall}
\end{align}
Then
\begin{equation}
X_{i,t}+\ell_{i,t}=D_{i,t}+E_{i,t}.
\label{eq:position-identity}
\end{equation}
Trader equity is first loss by construction. The pool receives at most $D_{i,t}$ from the account; the trader receives only $E_{i,t}$ after repayment. If $X_{i,t}<D_{i,t}$, the difference enters the pool waterfall only after the position transition recognizes the shortfall.

At entry, ignoring unfinanced fees,
\begin{equation}
X_{i,0}=C_i+D_{i,0}=N_i,
\end{equation}
so $E_{i,0}=C_i$ and $\ell_{i,0}=0$. During live trading, risk decisions use the position-level robust execution value. Realized debt reduction still requires settled cash.

\subsection{Debt-first residual ownership}
Let $y\ge0$ be settlement-confirmed cash received by a position account, let $D_i$ be recognized debt immediately before allocation, and let $F_i^{\mathrm{due}}$ be only those fees that are already earned, authorized, and senior to trader residual under the admitted position terms. Define
\begin{align}
y_i^D&=\min\{y,D_i\},\\
y_i^F&=\min\{\pospart{y-y_i^D},F_i^{\mathrm{due}}\},\\
y_i^T&=\pospart{y-y_i^D-y_i^F}.
\label{eq:residual-allocation}
\end{align}
The amount $y_i^T$ is trader residual cash. It is not a reserve contribution, provider yield, or protocol fee merely because it passed through a financed position account.

\begin{proposition}[Trader residual ownership]
\label{prop:trader-residual}
Under the debt-first rule in \eqref{eq:residual-allocation}, any settlement-confirmed value remaining after recognized debt and approved senior fees have been paid belongs to the trader position claim and cannot be credited to a reserve or provider class without a separate, pre-authorized transfer by the trader.
\end{proposition}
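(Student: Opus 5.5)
The plan is to split the claim into an arithmetic part and an authorization part. The arithmetic part is a direct computation on the three-way split in \eqref{eq:residual-allocation}. The authorization part is an invariant argument over the transition system.

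\textbf{Step 1: the split is exhaustive and exclusive.} I first show that $y_i^D+y_i^F+y_i^T=y$ and that each component is nonnegative. I would split into three cases.
\begin{itemize}[leftmargin=*]
\item If $y\le D_i$, then $y_i^D=y$ and $y_i^F=y_i^T=0$.
\item If $D_i<y\le D_i+F_i^{\mathrm{due}}$, then $y_i^D=D_i$, $y_i^F=y-D_i$, and $y_i^T=0$.
\item If $y>D_i+F_i^{\mathrm{due}}$, then $y_i^D=D_i$, $y_i^F=F_i^{\mathrm{due}}$, and $y_i^T=y-D_i-F_i^{\mathrm{due}}>0$.
\end{itemize}
In every case the positive-part operators never truncate a positive quantity wrongly. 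So the allocation is a partition of $y$ in which $y_i^T$ is exactly the value left after recognized debt and approved senior fees. This also shows that $y_i^T>0$ only when debt and due fees are fully paid. That is the precondition in the statement.

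\textbf{Step 2: where each component goes.} Next I would use the ledger semantics to fix the destination of each piece. The allocation transition credits $y_i^D$ to the pool debt receivable, reducing $B_{p,t}$ and $D_{i,t}$. It credits $y_i^F$ to the fee claim $T_{p,t}$ or to designated fee routing. It credits $y_i^T$ to the trader position claim, which is the residual equity $E_{i,t}$ of \eqref{eq:position-equity}. Reserve layers ($R^M,R^P,G$) and provider NAV ($V^S,V^J$) can increase only through the $y_i^D$ and $y_i^F$ channels. Both channels are capped by $D_i$ and $F_i^{\mathrm{due}}$, which are fixed before allocation. The mandate is prospective, so these caps cannot be raised retroactively to absorb residual. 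Combined with the closure identity \eqref{eq:pool-closure}, any further increase in a reserve or provider class would then have to be matched by a decrease in the trader claim.

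\textbf{Step 3 (main obstacle): ruling out later unauthorized debits.} The statement quantifies over all subsequent transitions, not just the allocation step. So I must show that no admissible transition debits the trader residual claim in favour of a reserve or provider class, other than a trader-signed, pre-authorized transfer. I would handle this by enumeration against the transition set. The only transitions that debit a position's trader claim are trader withdrawal and explicitly authorized transfers. Loss recognition acts on the shortfall $\ell_{i,t}$, and $\ell_{i,t}=0$ whenever $E_{i,t}>0$ by \eqref{eq:position-identity}, so it does not touch residual.

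The proof is therefore conditional on the stated transition specification rather than purely algebraic. I would state that dependence explicitly: the result holds because the residual lies outside every route into reserves, not because of any property of $y$ itself.
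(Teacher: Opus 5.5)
Your proposal is correct and follows essentially the paper's own argument. The paper likewise shows that debt and approved fees exhaust exactly $y_i^D+y_i^F$, uses the position identity to assign value above debt to trader equity, and observes that crediting $y_i^T$ elsewhere would raise pool claims without a corresponding trader liability. Your Step~3 makes explicit, as an assumption on the transition set, what the paper's appendix compresses into ``any other allocation requires a separate authorized transfer.''

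Two small citation fixes. First, \eqref{eq:pool-closure} does not contain the trader claim, so ``matched by a decrease in the trader claim'' also needs \eqref{eq:position-identity} on the combined position-plus-pool boundary. Second, $\ell_{i,t}=0$ whenever $E_{i,t}>0$ follows from the positive-part definitions \eqref{eq:position-equity}--\eqref{eq:position-shortfall}, not from \eqref{eq:position-identity} alone.
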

\begin{proof}
Debt and approved fees exhaust exactly $y_i^D+y_i^F$. The position identity in \eqref{eq:position-identity} assigns value above debt to trader equity. Crediting $y_i^T$ to another claim would increase pool claims without a corresponding contractual liability of the trader and would violate the admitted priority rule. \qedhere
\end{proof}

\subsection{Debt shares}
The pool uses a borrow index $I^B_{p,t}>0$ and position debt shares $d_i\ge0$:
\begin{equation}
D_{i,t}=d_iI^B_{p,t}.
\label{eq:debt-shares}
\end{equation}
Over an accrual interval $\Delta t$ with effective borrow rate $r^B_{p,t}$,
\begin{equation}
I^B_{p,t+1}=I^B_{p,t}\left(1+r^B_{p,t}\Delta t\right),
\label{eq:borrow-index}
\end{equation}
under the simple discrete reference implementation. A continuous implementation may use an exponential index; the accounting results are unchanged.

A new borrow $b$ mints
\begin{equation}
\Delta d=\frac{b}{I^B_{p,t}}.
\end{equation}
A settlement-confirmed repayment $y\le D_{i,t}$ burns $y/I^B_{p,t}$ debt shares, subject to a final-dust rule that sets shares to zero when full debt is repaid.

\begin{proposition}[Settlement-confirmed debt priority]
\label{prop:settled-only}
Suppose debt shares can be burned only by a transition that receives settled pool-asset cash or an on-chain asset transfer recognized by the pool mandate. Then matched, pending, reverted, or merely attested proceeds cannot reduce $D_{i,t}$.
\end{proposition}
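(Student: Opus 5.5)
The argument is an invariant over the transition system: every change to the debt-share balance $d_i$ is traced back to the transitions permitted to make it. Since $D_{i,t}=d_iI^B_{p,t}$ by \eqref{eq:debt-shares} and $I^B_{p,t}>0$, $D_{i,t}$ can fall only through two channels: a reduction of $d_i$, or a reduction of $I^B_{p,t}$. The index channel is closed first. Under \eqref{eq:borrow-index} with $r^B_{p,t}\ge0$, the factor $1+r^B_{p,t}\Delta t\ge1$, so the index is nondecreasing. No proceeds event of any kind enters the index update, so the index cannot reduce $D_{i,t}$ on behalf of a particular position. Any write-down of an unrecoverable shortfall is not a reduction by proceeds; it belongs to the separate waterfall transition, and the statement is restricted to proceeds.

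The second step classifies the transitions that decrease $d_i$. By the hypothesis of the proposition, these are exactly the burn transitions, and each one requires, as a precondition, receipt of settled pool-asset cash $y$ or a mandate-recognized on-chain asset transfer. The amount burned is $y/I^B_{p,t}$, with $y\le D_{i,t}$, together with the final-dust rule. The dust rule also fires only inside such a transition, and only when full repayment is already established. So every decrease of $d_i$ is paired with a verified inflow of pool-asset value at least equal to the debt reduced.

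The third step takes the four excluded categories in turn and shows that none meets the precondition. Matched and pending proceeds are $P^{\mathrm{pend}}_{i,t}$, which by definition are not part of settlement-recognized $X_{i,t}$ and are not settled cash. Reverted proceeds never produce a final transfer. Merely attested proceeds are an off-chain claim rather than an on-chain transfer recognized by the mandate. A transition carrying any of these therefore fails its guard and reverts, leaving $d_i$ unchanged, and hence $D_{i,t}$ unchanged. An induction over the transition sequence then shows that at every $t$ the cumulative reduction of $D_{i,t}$ equals the cumulative settled repayments applied.

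The main obstacle is the reverted case. A proceeds event could appear settled at one checkpoint and later be reverted, for example through a chain reorganization or a venue reversal. I would handle this by defining ``settled'' as finality under the mandate's recognized settlement path. On that definition a reverted transfer was never settled, and the burn guard could not have accepted it. If the implementation instead permits provisional acceptance, the proposition needs a companion rule that re-mints the burned shares on reversal, and I would state this as part of what ``recognized by the pool mandate'' means.
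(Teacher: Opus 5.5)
Your proposal is correct and follows essentially the same route as the paper. Both arguments write $D_{i,t}=d_iI^B_{p,t}$, note that accrual moves the index only weakly upward, and observe that the settled-value repayment (burn) transition is the only way to reduce $d_i$, so matched, pending, reverted, or merely attested proceeds fail its precondition; your treatment of reversal as finality under the mandate's settlement path just spells out the paper's remark that a reverted settlement never satisfies the precondition.
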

\begin{proof}
By \eqref{eq:debt-shares}, debt changes only if $d_i$ or $I^B_{p,t}$ changes. Interest accrual changes the index weakly upward. The repayment transition is the only transition permitted to reduce $d_i$, and its precondition requires recognized settlement value. Therefore no pending state can reduce debt. A reverted settlement never satisfies the precondition. \qedhere
\end{proof}

\subsection{Partial settlement and replay resistance}
A venue match $j$ has matched capacity $M_j\ge0$. Settlement may arrive in chunks indexed by unique evidence identifiers $e\in\mathcal E_j$. Let $s_{j,e}\ge0$ be the amount recognized under evidence $e$ and
\begin{equation}
S_j=\sum_{e\in\mathcal E_j}s_{j,e}\le M_j.
\label{eq:cumulative-settlement}
\end{equation}
Each evidence identifier is consumable at most once. A partial settlement reduces debt only by the recognized amount and leaves the position in \state{PartiallySettled} or another active risk state until cumulative allocation is complete.

\begin{proposition}[Idempotent settlement allocation]
\label{prop:settlement-idempotency}
If (i) every settlement evidence identifier is unique and marked consumed atomically with allocation, and (ii) cumulative recognized settlement is capped by matched capacity, then replaying an already consumed evidence item cannot reduce debt, increase provider cash, or release trader residual a second time.
\end{proposition}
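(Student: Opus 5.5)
The argument treats settlement allocation as a guarded state transition and shows that a replayed evidence item can never pass the guard. Let the allocation transition for match $j$ with evidence $e$ have three parts: the precondition $e\in\mathcal E_j$, $e\notin\mathrm{Consumed}_j$, and $S_j+s_{j,e}\le M_j$; the effect of marking $e$ consumed, setting $S_j\leftarrow S_j+s_{j,e}$, and applying the debt-first split \eqref{eq:residual-allocation} to $y=s_{j,e}$; and atomicity, meaning these effects either all commit or all revert. Following the structure of \Cref{prop:settled-only}, the first step is to enumerate which transitions can change each of the three protected quantities. Debt can fall only through a burn of $d_i$, since the index $I^B_{p,t}$ only rises. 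Provider cash rises from settlement only through the $y_i^D$ and $y_i^F$ legs of an allocation. Trader residual is released only through the $y_i^T$ leg. By \Cref{prop:settled-only}, each of these is reachable only through a settlement-allocation transition.

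The core step is an induction over the transition sequence, carrying the invariant that consumption and allocation of $e$ occur together. After any reachable state, $e\in\mathrm{Consumed}_j$ holds if and only if exactly one committed allocation used $s_{j,e}$. Atomicity (i) gives this invariant: there is no intermediate state where allocation has happened but $e$ is unmarked, and none where $e$ is marked but allocation was rolled back. It follows that a replay of a consumed $e$ fails the precondition and reverts. A reverted transition leaves $d_i$, pool cash and the position's residual account unchanged, so debt, provider cash and trader residual are all unaffected.

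I also need to exclude replay under a fresh identifier that carries the same economic settlement. Condition (i) makes identifiers unique per recognized chunk, and the cap (ii) bounds the total. Summing over all committed allocations gives $\sum y\le S_j\le M_j$, so the total debt reduction, provider credit and residual released for match $j$ is at most the matched capacity. Such a duplicate therefore cannot allocate value beyond what was matched. I will state this as a bound on cumulative allocation, since the proposition's claim about a literal replay is already settled by the guard.

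I expect the main obstacle to be making atomicity precise. The proof needs a model in which a transition is all-or-nothing, as with EVM revert semantics, and in which no reentrant call can observe $e$ as unconsumed between the guard check and the state write. I will make this an explicit modelling assumption, that the consumed mark is written before any external transfer in checks-effects-interactions order. The remaining steps are routine bookkeeping from \eqref{eq:residual-allocation} and \eqref{eq:cumulative-settlement}.
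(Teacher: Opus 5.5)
Your proposal is correct and follows the paper's proof: the consumed flag is written in the same atomic transition that allocates $s_{j,e}$, so a replay fails the precondition before any journal entry, and the cumulative cap $S_j\le M_j$ limits a distinct identifier to the unmatched remainder. Your invariant induction and explicit checks-effects-interactions assumption simply make precise what the paper folds into hypothesis (i). One caveat: \Cref{prop:settled-only} speaks only to debt reduction, so your enumeration for provider cash and trader residual rests on your transition model rather than on that proposition.
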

\begin{proof}
The first successful transition marks the identifier consumed in the same state transition that allocates $s_{j,e}$. A replay fails the uniqueness precondition before any journal entry. The cumulative cap prevents a distinct identifier from allocating more than the unmatched remainder. Hence each settled base unit enters the allocation function at most once. \qedhere
\end{proof}

\subsection{Senior and junior pool shares}
Let $Z^S_{p,t}$ be senior share supply. When $Z^S_{p,t}>0$, the senior share price is
\begin{equation}
\pi^S_{p,t}=\frac{V^S_{p,t}}{Z^S_{p,t}}.
\label{eq:senior-share-price}
\end{equation}
A deposit $a>0$ mints
\begin{equation}
\Delta Z^S=\frac{a}{\pi^S_{p,t}}=a\frac{Z^S_{p,t}}{V^S_{p,t}}.
\label{eq:senior-share-mint}
\end{equation}
If the pool is empty, the initial exchange rate is fixed by deployment specification. Junior shares $Z^J_{p,t}$ and price $\pi^J_{p,t}=V^J_{p,t}/Z^J_{p,t}$ are defined analogously.

\begin{proposition}[Proportional issuance does not dilute incumbents]
\label{prop:no-dilution}
If deposit $a$ is added to senior cash and senior NAV and shares are minted according to \eqref{eq:senior-share-mint}, the senior share price is unchanged by the deposit.
\end{proposition}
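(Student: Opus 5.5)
The plan is a direct computation of the post-deposit share price, with the empty-pool case and the scope of the claim handled separately. I will work in the case $Z^S_{p,t}>0$ and $V^S_{p,t}>0$, where \eqref{eq:senior-share-price} defines $\pi^S_{p,t}$ and the mint rule \eqref{eq:senior-share-mint} is well defined. The empty-pool case is covered by the deployment exchange rate, since there are no incumbents to dilute. The deposit transition acts on three quantities only:
\begin{itemize}
\item senior cash, and hence $A_{p,t}$ and $A^L_{p,t}$, increases by $a$;
\item $V^S$ becomes $V^S_{p,t}+a$;
\item $Z^S$ becomes $Z^S_{p,t}+\Delta Z^S$, with $\Delta Z^S=aZ^S_{p,t}/V^S_{p,t}$.
\end{itemize}
I hold all other state fixed within the atomic transition, in particular $B_{p,t}$, the borrow index, and the junior and reserve layers. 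This means no accrual is interleaved with the deposit.

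The key step is the computation
\begin{equation*}
\pi^{S,+}_{p,t}=\frac{V^S_{p,t}+a}{Z^S_{p,t}+aZ^S_{p,t}/V^S_{p,t}}
=\frac{V^S_{p,t}+a}{Z^S_{p,t}\,(V^S_{p,t}+a)/V^S_{p,t}}
=\frac{V^S_{p,t}}{Z^S_{p,t}}=\pi^S_{p,t}.
\end{equation*}
It follows that each incumbent holding $z$ shares keeps the claim $z\pi^S_{p,t}$ unchanged. The depositor receives claim value exactly $\Delta Z^S\pi^S_{p,t}=a$. I will also check that book closure \eqref{eq:pool-closure} is preserved, because assets and claims both rise by $a$.

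The only real obstacle is the precondition. If $V^S_{p,t}=0$ while $Z^S_{p,t}>0$, for example after a full Senior write-down, the mint formula divides by zero. In that case I restrict the statement to $V^S_{p,t}>0$ and defer to the deployment or reinitialization rule. A second caveat is ordering: the NAV used in the mint must already include accrued interest and recognized losses, or the price is fixed only relative to stale NAV. I will state this as part of the atomic-transition assumption rather than prove it. Integer rounding is deferred to the discrete-accounting results.
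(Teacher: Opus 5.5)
Your proposal is correct and is essentially the paper's own proof: the paper also computes post-deposit NAV $V^S+a$ and supply $Z^S+aZ^S/V^S=Z^S(V^S+a)/V^S$, then observes that their ratio is $V^S/Z^S=\pi^S$. Your extra scope conditions are sensible clarifications that the paper leaves implicit: $V^S_{p,t}>0$ so the mint rule is defined, an atomic transition with NAV already current, and integer rounding handled separately.
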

\begin{proof}
After deposit, NAV is $V^S+a$ and share supply is $Z^S+aZ^S/V^S=Z^S(V^S+a)/V^S$. Their ratio is $V^S/Z^S=\pi^S$. \qedhere
\end{proof}

\begin{corollary}[Loss-free share-price monotonicity]
\label{cor:share-monotonicity}
Between proportional deposits and withdrawals, if senior-retained income is non-negative and no senior loss is recognized, $\pi^S_{p,t}$ is weakly increasing.
\end{corollary}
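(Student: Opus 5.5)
The argument decomposes the interval between two observation times into a finite sequence of elementary ledger transitions. Each transition is then classified by its effect on the pair $(V^S_{p,t},Z^S_{p,t})$, and we show that none of the admitted types can lower $\pi^S_{p,t}=V^S_{p,t}/Z^S_{p,t}$. The hypotheses restrict the admissible types to four: proportional senior deposits, proportional senior withdrawals, accrual or crediting of senior-retained income, and transitions that move value only among non-senior layers (junior NAV, reserves, the protocol fee claim, trader residual) or reclassify cash within $A_{p,t}$ without touching $V^S$. Senior loss recognition is excluded by assumption. Because $\pi^S$ at the end of the interval is obtained by composing these steps, it suffices to prove that each step weakly increases $\pi^S$.

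\textbf{Deposits and withdrawals.} For deposits we invoke \Cref{prop:no-dilution} directly: the price is unchanged. For a withdrawal we carry out the symmetric computation. Burning $\Delta Z$ shares and paying $\pi^S\Delta Z$ out of senior NAV gives
\[
\frac{V^S-\pi^S\Delta Z}{Z^S-\Delta Z}=\frac{\pi^S(Z^S-\Delta Z)}{Z^S-\Delta Z}=\pi^S
\]
whenever $\Delta Z<Z^S$. When $\Delta Z=Z^S$ the pool empties, $\pi^S$ is undefined, and the next deposit takes the deployment rate. We state this as the empty-pool convention and set it aside, since the corollary concerns intervals on which $Z^S>0$.

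\textbf{Income and non-senior transitions.} An income step adds $\iota\ge0$ to $V^S$ and leaves $Z^S$ unchanged, so $\pi^S$ rises by $\iota/Z^S\ge0$. The other admissible transitions leave both $V^S$ and $Z^S$ fixed. These include borrow issuance, which converts lendable cash into an equal debt receivable inside $\mathcal A_{p,t}$ and so preserves $V^S$ by the closure identity \eqref{eq:pool-closure}. They also include settlement-confirmed repayments (\Cref{prop:settled-only}) and fee and residual routing (\Cref{prop:trader-residual}). Every one of these preserves $\pi^S$.

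\textbf{Main obstacle.} The delicate point is showing that the hypothesis ``no senior loss is recognized'' really forces $V^S$ to be non-decreasing across all transitions other than withdrawals. In particular, two things must be checked. First, a shortfall $\ell_{i,t}$ recognized at the position level must be fully absorbed by buffers, reserves and LBP capital before any amount reaches $V^S$; this follows from the waterfall ordering in the senior LP definition. Second, interest accrual on the borrow index $I^B$ must credit senior NAV only through the non-negative senior-retained share. My approach is to define ``senior loss recognized'' as exactly any transition with $\Delta V^S<0$ other than a proportional withdrawal. With that definition the corollary reduces to the case analysis above, and I would state the definition explicitly so the argument does not rely on unspecified waterfall mechanics.
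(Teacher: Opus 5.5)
Your proposal is correct and is essentially the paper's proof: deposits are handled by \Cref{prop:no-dilution}, a withdrawal at the prevailing price removes equal proportions of $V^S$ and $Z^S$, non-negative retained income raises $V^S$ with $Z^S$ fixed, and the no-senior-loss hypothesis excludes any negative jump in $V^S$, which is exactly the reading you make explicit. One small precision: borrow and repayment leave $V^S$ unchanged because their claim-side journal in \Cref{tab:balanced-transitions} is empty, not because of the closure identity \eqref{eq:pool-closure}, which only equates aggregate assets and claims.
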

This result is conditional on honest valuation and realized income. It does not imply that a pool-share token is riskless or immediately redeemable.

\subsection{Reserve equity is not a provider share class}
Protocol reserves are finite segregated equity accounts. They may receive protocol seed capital, designated fee allocations, slashed bonds, or governance-approved surplus, and may absorb loss according to the versioned waterfall. The reference mechanism issues no ordinary redeemable reserve shares.

\begin{proposition}[Reserve non-redemption]
\label{prop:reserve-nonredemption}
If a reserve vault exposes only contribution, authorized allocation, and loss-absorption transitions, then an LP withdrawal request cannot reduce reserve balance or create a reserve redemption claim.
\end{proposition}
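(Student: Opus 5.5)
The plan is an induction over transitions that mirrors the proof of \Cref{prop:settled-only}: identify the transitions that can touch the reserve balance, then check each one against the LP withdrawal path. Write the reserve state as the balance $R_t$ (standing for $R^M_{p,e,v,t}$, $R^P_{p,t}$, or $G_t$) together with a claim register $\mathcal K_t$ of outstanding redemption claims against the vault. By hypothesis the vault's interface is exactly three transition types:
\begin{itemize}
\item \emph{contribution}, which weakly increases $R_t$ and writes nothing to $\mathcal K_t$;
\item \emph{authorized allocation}, whose precondition is a versioned-policy authorization, not an LP request;
\item \emph{loss absorption}, whose precondition is a recognized shortfall routed through the waterfall.
\end{itemize}
Any change to $R_t$ or $\mathcal K_t$ must be the effect of one of these. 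The first step is to record this as a frame condition: every other transition in the system, including the share-burn and queue transitions of the Senior and LBP vaults, leaves $(R_t,\mathcal K_t)$ unchanged.

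The second step treats an LP withdrawal request. The request operates on $Z^S_{p,t}$ or $Z^J_{p,t}$ and on the claim $W_{p,t}$. These are liabilities against $V^S_{p,t}$ or $V^J_{p,t}$ in \eqref{eq:pool-claims}, which is a different term from $R^M_{p,t}$ and $R^P_{p,t}$. I then verify three things:
\begin{itemize}
\item the request is not a contribution;
\item it carries no allocation authorization, since authorization is a governance or policy credential that a share holder does not possess by holding shares;
\item it is not a recognized shortfall.
\end{itemize}
So none of the reserve transitions' preconditions can be met by the request. Combined with the frame condition, the request leaves $R_t$ unchanged. It also adds nothing to $\mathcal K_t$, because no reserve transition mints a redemption claim; the vault issues no redeemable reserve shares, as stated just before the proposition. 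Because every transition either preserves $(R_t,\mathcal K_t)$ or is one of the three listed types, induction over any transition sequence started by a withdrawal request gives the same conclusion. This covers queued requests, partial payment, and re-entrant calls made within the same atomic transition.

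The main obstacle is indirect reduction, where a withdrawal leaves the pool with a cash shortfall and that shortfall is then charged to the reserve as a loss. I would close this off by arguing that loss absorption requires a \emph{recognized credit shortfall} $\ell_{i,t}$ from a position transition, as in \eqref{eq:position-shortfall}. A liquidity gap in paying $W_{p,t}$ is not such a shortfall, since queued claims are simply paid from senior or junior NAV when cash is available. Authorized allocation is excluded for the same reason: its authorization cannot be derived from the request. This step depends on reading the hypothesis as fixing the preconditions of the three transitions, not only their names, so I would state that reading explicitly as part of the proof.
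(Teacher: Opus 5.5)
Your proposal is correct and follows the paper's argument. The paper's proof is simply that no exposed reserve transition maps an LP share balance to a reserve claim or reserve cash outflow, so every reserve change is confined to the enumerated transitions; your frame condition, your explicit reading of the hypothesis as fixing the transitions' preconditions, and your treatment of the indirect loss-charging channel make explicit what that one-line proof takes for granted (one loose spot: induction over a whole transition sequence cannot show $R_t$ unchanged if an unrelated loss absorption occurs in it, only that no change is attributable to the request).
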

\begin{proof}
No transition maps an LP share balance to a reserve claim or reserve cash outflow. Reserve changes are therefore confined to the enumerated contribution and allocation transitions. \qedhere
\end{proof}

\subsection{Integer base units, rounding, and initial-share protection}
Smart contracts operate on integers. Let $u_A$ and $u_Z$ denote one asset and one share base unit. For non-negative integers, define $\lfloor ab/c\rfloor$ and $\lceil ab/c\rceil$ as explicit multiply-divide-down and multiply-divide-up operators. The reference policy rounds provider share minting and withdrawal payments down, debt-share issuance up, and routes fee-allocation remainders to an explicit dust account. Deterministic pro-rata reserve allocation uses a largest-remainder rule with a fixed tie-break.

For a share vault with virtual offsets $v_A,v_Z>0$, the integer share mint for deposit $a$ is
\begin{equation}
\operatorname{mint}(a)=
\left\lfloor
 a\frac{Z+v_Z}{V+v_A}
\right\rfloor.
\label{eq:virtual-share-mint}
\end{equation}
A pool also sets a public minimum deposit $a_{\min}$ and may lock seed shares permanently.

\begin{proposition}[Minimum-share guard]
\label{prop:minimum-share-guard}
If $a_{\min}(Z+v_Z)\ge V+v_A$ at admission, then every admitted deposit $a\ge a_{\min}$ mints at least one share base unit under \eqref{eq:virtual-share-mint}. Unsolicited asset donations may change the exchange rate but cannot make an admitted deposit mint zero shares while the inequality is enforced.
\end{proposition}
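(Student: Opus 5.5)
The argument is a direct monotonicity bound on the argument of the floor in \eqref{eq:virtual-share-mint}, applied at the current vault state. Write $q(a)=a(Z+v_Z)/(V+v_A)$ for the real-valued mint ratio evaluated on the state $(V,Z)$ that exists when the deposit is processed. Since $v_A>0$ and $V\ge0$, the denominator is strictly positive, so $q$ is well defined and strictly increasing in $a$ because $Z+v_Z\ge v_Z>0$.

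\emph{First step.} Take an admitted deposit with $a\ge a_{\min}$ and the guard $a_{\min}(Z+v_Z)\ge V+v_A$ holding for the current state. Then
\[
q(a)\ \ge\ q(a_{\min})=\frac{a_{\min}(Z+v_Z)}{V+v_A}\ \ge\ 1 .
\]
Because $\lfloor x\rfloor\ge 1$ whenever $x\ge1$, we get $\operatorname{mint}(a)=\lfloor q(a)\rfloor\ge 1$, which is one share base unit. In integer arithmetic, the multiply-divide-down operator on non-negative integers computes $\lfloor a(Z+v_Z)/(V+v_A)\rfloor$ exactly. The product $a(Z+v_Z)$ is at least $V+v_A$, so the integer quotient is at least $1$, and no intermediate rounding can break the bound.

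\emph{Second step: donations.} An unsolicited donation $\delta>0$ increases $V$ and leaves $Z$ unchanged. This lowers $q$ and hence changes the exchange rate, which is the first half of the claim. The second half is that the guard is checked \emph{at admission}, against the post-donation state. Two cases follow. If the enforced inequality still holds with $V+\delta$, the first step applies verbatim. If it fails, the clause ``while the inequality is enforced'' means one of two things, and both preserve the bound. Either the deposit is not admitted, so no zero-share mint occurs. Or the protocol raises $a_{\min}$ to $\lceil (V+\delta+v_A)/(Z+v_Z)\rceil$, so the inequality is restored and the first step applies again.

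\emph{Main obstacle.} The mathematics here is trivial. The only delicate point is interpretive: the guard must be evaluated on the same state $(V,Z)$ that the mint formula reads. The failure to rule out is a donation that lands between the admission check and the mint. I would resolve this by invoking the atomicity of the deposit transition, as was done for settlement in \Cref{prop:settlement-idempotency}. The check and the mint happen in one state transition, so no donation can interleave between them.
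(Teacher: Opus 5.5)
Your proof is correct and follows the paper's argument: the guard gives $a(Z+v_Z)/(V+v_A)\ge 1$ for every $a\ge a_{\min}$, so the floor in \eqref{eq:virtual-share-mint} is at least one. Your discussion of donations, and of evaluating the guard atomically on the same state the mint reads, makes explicit an assumption the paper leaves implicit, but it does not change the argument.
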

\begin{proof}
The inequality implies $a(Z+v_Z)/(V+v_A)\ge1$. Its floor is therefore at least one. \qedhere
\end{proof}

\begin{theorem}[Integer accounting closure with explicit dust]
\label{thm:integer-closure}
Suppose every integer transition is the rounded form of a balanced real-valued journal, each rounding residual is posted to a recognized dust account, and a transition contains at most $n_j$ downward recipient allocations in asset units. Then accounting closure is exact when the dust account is included. If dust is omitted from a displayed summary after $T$ transitions, the absolute displayed mismatch is bounded by
\begin{equation}
0\le \Delta_T < u_A\sum_{j=1}^{T}n_j
\label{eq:dust-bound}
\end{equation}
plus any separately bounded debt-share conversion residual specified by the implementation.
\end{theorem}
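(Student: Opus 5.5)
The plan is a per-transition argument followed by induction over the transition sequence. Fix a transition $j$ and write its real-valued journal as a finite set of signed postings $\{x_{j,k}\}$ with $\sum_k x_{j,k}=0$ (balance), where one posting is the total source amount (e.g.\ settled cash $y$, an accrued fee pool, or a loss to be allocated) and the others are recipient allocations such as $y_i^D,y_i^F,y_i^T$ in \eqref{eq:residual-allocation} or pro-rata reserve shares. In the integer implementation the source amount is already an integer number of base units. Each recipient posting is replaced by its integer rounding: downward via $\lfloor\cdot\rfloor$ for provider payments and allocations, upward via $\lceil\cdot\rceil$ for debt-share issuance. Then the dust posting is defined as the source minus the sum of the rounded recipient postings. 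By construction the integer journal, dust line included, sums to zero. This gives exact closure per transition: the asset side \eqref{eq:pool-assets} and the claim side \eqref{eq:pool-claims}, augmented by the dust account, change by equal integer amounts. Induction from the deployment identity \eqref{eq:pool-closure} then yields exact closure at every $t$.

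For the bound, the per-transition residual is $\delta_j=\sum_{k=1}^{n_j}\bigl(x_{j,k}-u_A\lfloor x_{j,k}/u_A\rfloor\bigr)$, summed over the downward recipient allocations. Each summand lies in $[0,u_A)$, so $0\le\delta_j<n_j u_A$ when $n_j\ge1$, and $\delta_j=0$ when $n_j=0$. The largest-remainder rule changes which recipients receive leftover units but not the total remainder, so the dust remains nonnegative and strictly below $n_j u_A$. Omitting the dust account from a displayed summary after $T$ transitions creates a mismatch equal to the cumulative dust $\Delta_T=\sum_{j\le T}\delta_j$. Provided the dust account only accumulates and is never drawn down without a matching journal, this is a sum of nonnegative terms each bounded strictly, which gives \eqref{eq:dust-bound}. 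The strict inequality needs at least one $n_j\ge1$; the degenerate case with all $n_j=0$ has to be treated separately as $\Delta_T=0$.

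I expect the main obstacle to be the upward-rounded debt-share conversions. Minting $\lceil b/I^B\rceil$ debt shares creates a receivable $d_iI^B$ that can exceed the cash actually lent. Likewise, burning $y/I^B$ shares and applying the final-dust rule can produce a residual of either sign, measured in share units times the index rather than in asset units. These residuals do not fit the downward-allocation count $n_j$. I would isolate them as a separate residual term $\rho_T$ and show that each conversion contributes at most one share base unit times the current index, plus the final-dust write-off. That is the ``separately bounded debt-share conversion residual'' the statement allows, so the theorem's bound is stated modulo it. Everything else reduces to the balance identity and the elementary inequality $0\le x-\lfloor x\rfloor<1$.
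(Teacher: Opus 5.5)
Your proposal is correct and follows the paper's own argument: each downward allocation leaves a residual in $[0,u_A)$, posting that residual to dust keeps every journal balanced so closure propagates by induction, and summing the per-allocation bounds over $T$ transitions gives \eqref{eq:dust-bound}. Your separate treatment of upward-rounded debt-share conversions and your observation that the strict inequality fails in the degenerate case where every $n_j=0$ are refinements the paper's short proof leaves implicit; one small slip is that the largest-remainder rule does change the total remainder, since it hands out the leftover whole units and drives that transition's dust to zero, but this only shrinks the dust and does not affect the bound.
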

\begin{proof}
Each downward allocation leaves a residual strictly smaller than one asset base unit. Posting that residual to dust preserves the balanced journal exactly. Omitting dust can hide at most the sum of the per-allocation residual bounds; summing over transitions gives \eqref{eq:dust-bound}. \qedhere
\end{proof}

\subsection{Balanced transitions}
The reference accounting uses double-entry journals.
\begin{table}[H]
\centering
\small
\caption{Balanced local-pool transitions. A plus sign increases the account; a minus sign decreases it.}
\label{tab:balanced-transitions}
\begin{tabularx}{\textwidth}{lXX}
\toprule
Transition & Asset-side journal & Claim-side journal\\
\midrule
Senior deposit $a$ & $A_p:+a$ & $V^S_p:+a$\\
Junior deposit $j$ & $A_p:+j$, junior-tagged & $V^J_p:+j$\\
Reserve contribution $r$ & $A_p:+r$, reserve-tagged & $R_p:+r$\\
Borrow $b$ & $A_p:-b$, $B_p:+b$ & none\\
Interest accrual $h$ & $B_p:+h$ & allocate $h$ across senior, junior, reserve, protocol, and MM claims\\
Repayment $y$ & $A_p:+y$, $B_p:-y$ & none\\
Recognized loss $\ell$ & $B_p:-\ell$ & reduce claims by waterfall allocation\\
Withdrawal payment $w$ & $A_p:-w$ & $V^S_p:-w$ or $V^J_p:-w$\\
Fee transfer out $f$ & $A_p:-f$ & $T_p:-f$\\
\bottomrule
\end{tabularx}
\end{table}

Junior deposits, reserve balances, and protocol-fee balances are held in separate vaults or tagged sub-accounts even though they are aggregated in $A_p$ for balance-sheet closure. They are excluded from $A^L_p$. Junior capital is not lendable as senior principal unless the mandate changes prospectively; otherwise the same capital would be counted as both junior provider protection and loanable cash.

\begin{theorem}[Accounting closure]
\label{thm:accounting-closure}
If \eqref{eq:pool-closure} holds at time $t$ and every protocol transition applies a balanced journal in \Cref{tab:balanced-transitions}, then \eqref{eq:pool-closure} holds after any finite sequence of transitions, excluding external asset loss not yet recognized.
\end{theorem}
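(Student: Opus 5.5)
The plan is an induction on the number of transitions, driven by a per-transition invariance check. Define the closure gap $\Gamma_{p,t}=\mathcal A_{p,t}-\mathcal C_{p,t}$, with $\mathcal A_{p,t}$ from \eqref{eq:pool-assets} and $\mathcal C_{p,t}$ from \eqref{eq:pool-claims}. The hypothesis says $\Gamma_{p,t}=0$. It therefore suffices to show that every row of \Cref{tab:balanced-transitions} leaves $\Gamma_p$ unchanged. Induction over a finite sequence then gives $\Gamma_p=0$ at every later checkpoint. Composition is trivial because each transition acts additively on the ledger quantities, so the changes in $\Gamma_p$ simply add up.

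For the single-step check I will go through the table row by row and compute $\Delta\mathcal A_p-\Delta\mathcal C_p$ for each entry.
\begin{itemize}
\item Senior deposit, junior deposit and reserve contribution raise $A_p$ by exactly the amount credited to $V^S_p$, $V^J_p$ or $R_p$. Here $R_p$ is read as $R^M_p+R^P_p$.
\item Borrow and repayment move value between $A_p$ and $B_p$. They leave $\mathcal A_p$ and every claim unchanged.
\item Withdrawal payment lowers $A_p$ and the relevant NAV by the same $w$. Fee transfer out lowers $A_p$ and $T_p$ by the same $f$.
\end{itemize}
The tagging of cash into junior, reserve and fee sub-accounts affects only the partition $A^L_p\subseteq A_p$. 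It does not affect $\mathcal A_p$, so encumbrance needs no separate treatment.

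The two rows that need an argument are interest accrual and recognized loss. In both, the claim-side entry is an \emph{allocation} rather than a single signed amount. For interest $h$, I will require the allocation rule to be exhaustive: the shares going to senior, junior, reserve and protocol claims sum to $h$. For losses I will use the exactness of the deterministic waterfall, the conservation result stated for the waterfall (invoked here as an assumption of the balanced-journal hypothesis). It says that the reductions across $V^S_p$, $V^J_p$, $R^M_p$, $R^P_p$ sum to $\ell$.

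I expect the main obstacle to be the boundary of the accounting perimeter.
\begin{itemize}
\item The market-maker share of interest, and any draw on the global reserve $G_t$, lie outside $\mathcal C_p$. For these I will treat the corresponding flow as having matching asset-side entries. For example, a $G$ draw enters $A_p$ as cash and offsets part of the loss, so in the local journal it acts like a reserve contribution. The MM interest claim I will treat as either a tagged local claim or paid out immediately, i.e.\ a fee-type transfer.
\item The excluded case, external asset loss not yet recognized, is exactly the situation where $A_p$ or $B_p$ overstates realizable value with no journal posted. The theorem makes no claim there, and closure resumes once the loss transition is posted.
\end{itemize}
Finally, for the integer implementation I will remark that \Cref{thm:integer-closure} supplies exact closure once the dust account is counted among the claims.
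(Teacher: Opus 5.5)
Your proposal is correct and follows essentially the same route as the paper: a per-transition check that every journal in \Cref{tab:balanced-transitions} moves $\mathcal A_p$ and $\mathcal C_p$ by equal amounts (borrow and repayment as pure reclassification, waterfall conservation from \Cref{thm:waterfall} for the loss row), then induction over the finite sequence, with unrecognized external loss excluded by hypothesis. Your perimeter handling is more careful than the paper's main proof. The global-reserve treatment matches the appendix remark that local closure is kept by recording a contribution from the global layer plus an equal shortfall allocation. For the market-maker interest share, choose your ``tagged local claim'' option: an immediate cash payout of merely accrued interest would conflict with the accrual--cash separation of \Cref{prop:accrual-cash-separation}.
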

\begin{proof}
Each transition changes $\mathcal A_p$ and $\mathcal C_p$ by the same amount. Borrow and repayment reclassify assets and have zero net effect on both sides. Claim reduction in a recognized loss equals asset reduction by the conservation property of the waterfall proved in \Cref{thm:waterfall}. Induction over a finite sequence gives the result. Unrecognized external loss lies outside the hypothesis because the book asset has not yet been written down. \qedhere
\end{proof}

\subsection{Pending settlement}
Pending fills are recorded in a memorandum or suspense account. They may affect operational exposure and retry logic but not pool NAV, debt reduction, or withdrawable cash. When settlement confirms, the suspense entry is replaced by settled cash and repayment is applied. When settlement fails, the suspense entry is reversed. This separation prevents optimistic accounting from creating artificial liquidity.

\section{Interest, Utilization, and Participant Returns}
\label{sec:interest}

\subsection{Utilization}
Let lendable senior resources be
\begin{equation}
S^{\mathrm{lend}}_{p,t}=A^L_{p,t}+B_{p,t},
\end{equation}
and define utilization
\begin{equation}
U_{p,t}=\frac{B_{p,t}}{A^L_{p,t}+B_{p,t}}\in[0,1],
\label{eq:utilization}
\end{equation}
when the denominator is positive. Aggregate cash $A_p$ can exceed $A^L_p$ because reserves, LBP capital, protocol claims, withdrawal encumbrances, and required buffers are excluded. If any such balance becomes lendable, it ceases to be simultaneously countable as protection or immediately serviceable cash.

The reference borrow-rate curve is
\begin{equation}
r^B(U)=
\begin{cases}
r_0+s_1\dfrac{U}{U^\star}, & U\le U^\star,\\[7pt]
r_0+s_1+s_2\dfrac{U-U^\star}{1-U^\star}, & U>U^\star,
\end{cases}
\label{eq:rate-curve}
\end{equation}
with $r_0,s_1,s_2\ge0$ and kink $U^\star\in(0,1)$. The curve is a transparent policy benchmark, not an empirical estimate of an optimal Axient rate.

\begin{figure}[H]
\centering
\includegraphics[width=.80\textwidth]{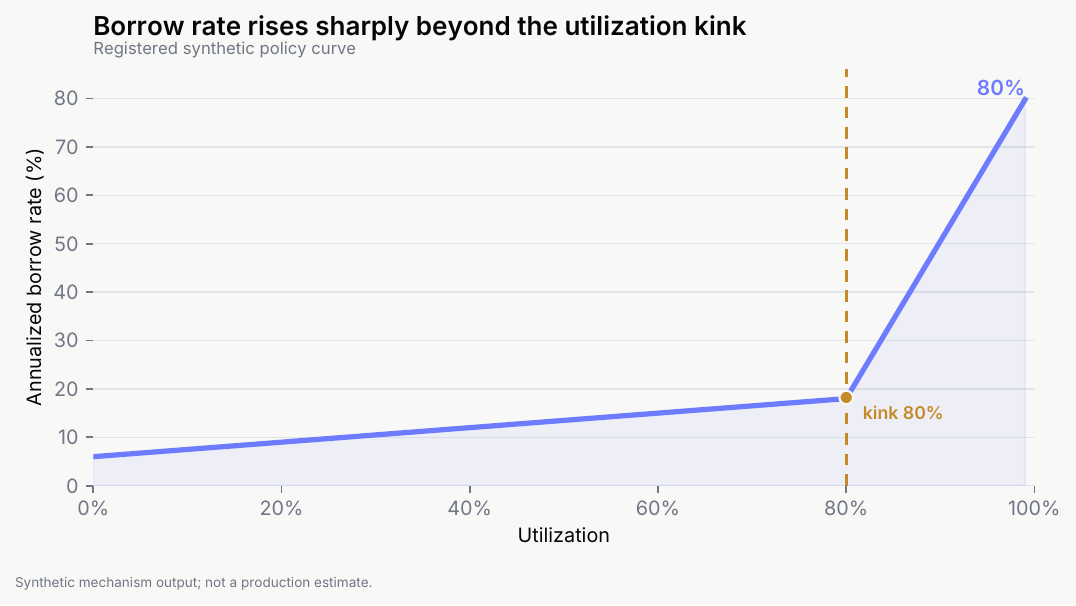}
\caption{Registered synthetic utilization-based borrow-rate curve. Parameters are policy inputs for mechanism comparison, not a live rate recommendation.}
\label{fig:rate-curve}
\end{figure}

\begin{proposition}[Monotonicity of the reference rate]
\label{prop:rate-monotone}
If $s_1,s_2\ge0$, $r^B(U)$ in \eqref{eq:rate-curve} is continuous and weakly increasing on $[0,1]$, and strictly increasing if $s_1,s_2>0$.
\end{proposition}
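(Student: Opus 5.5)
The argument treats each branch of \eqref{eq:rate-curve} as an affine function of $U$ and checks that the two branches meet at the kink. On the lower branch $[0,U^\star]$, the function $U\mapsto r_0+s_1U/U^\star$ is affine with slope $s_1/U^\star$. This slope is non-negative because $s_1\ge0$ and $U^\star>0$. On the upper branch $(U^\star,1]$, the function $U\mapsto r_0+s_1+s_2(U-U^\star)/(1-U^\star)$ is affine with slope $s_2/(1-U^\star)$. This slope is also non-negative, since $s_2\ge0$ and $U^\star<1$, so the denominator is positive. The assumption $U^\star\in(0,1)$ is exactly what makes both slopes well defined.

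Continuity is only in question at $U=U^\star$, since each branch is continuous on its own interval. The lower branch gives $r^B(U^\star)=r_0+s_1$. The right limit of the upper branch as $U\downarrow U^\star$ is also $r_0+s_1$, because $(U-U^\star)\to0$. The left value and the right limit agree, so $r^B$ is continuous on all of $[0,1]$.

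For weak monotonicity, take $0\le U_1<U_2\le1$.
\begin{itemize}
\item If both points lie in the same branch, $r^B(U_2)-r^B(U_1)$ equals that branch's slope times $U_2-U_1$, which is $\ge0$.
\item If $U_1\le U^\star<U_2$, insert the kink: $r^B(U_2)-r^B(U_1)=[r^B(U_2)-r^B(U^\star)]+[r^B(U^\star)-r^B(U_1)]$. Each bracket is a same-branch difference and so is non-negative. This step uses the continuity just shown, because the value $r^B(U^\star)=r_0+s_1$ is shared by both pieces.
\end{itemize}

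For strict monotonicity with $s_1,s_2>0$, both slopes are strictly positive. A same-branch difference is then strictly positive whenever $U_1<U_2$. In the straddling case, at least one of the two brackets comes from a nondegenerate interval: the upper bracket does, since $U_2>U^\star$. That bracket is strictly positive, so the sum is strictly positive.

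There is no real obstacle. The only point requiring care is the straddling case, where continuity at the kink is needed so the two increments can be chained.
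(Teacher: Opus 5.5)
Your proof is correct: both branches of \eqref{eq:rate-curve} are affine with slopes $s_1/U^\star\ge0$ and $s_2/(1-U^\star)\ge0$, they take the same value $r_0+s_1$ at $U^\star$, and chaining the two increments through the kink gives weak monotonicity, and strict monotonicity when $s_1,s_2>0$. The paper gives no written proof of this proposition. It treats the result as evident and then uses it in the proof of \Cref{thm:utilization-equilibrium}, so your proposal supplies exactly the routine verification the paper leaves out, and nothing in it needs changing.
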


\subsection{Conditional utilization equilibrium}
Let $b(r)$ denote aggregate desired borrowing at rate $r$. The protocol does not assume it knows this demand curve in production, but a conditional result clarifies the role of the utilization curve.

\begin{theorem}[Unique utilization equilibrium under monotone demand]
\label{thm:utilization-equilibrium}
Suppose lendable supply $S>0$ is fixed, $b:[r^B(0),r^B(1)]\to[0,S]$ is continuous and strictly decreasing, and $b(r^B(0))>0$ while $b(r^B(1))<S$. Then a unique $U^\dagger\in(0,1)$ satisfies
\begin{equation}
U^\dagger=\frac{b(r^B(U^\dagger))}{S}.
\label{eq:utilization-fixed-point}
\end{equation}
\end{theorem}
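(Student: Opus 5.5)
Define $\phi:[0,1]\to\R$ by
\[
\phi(U)=U-\frac{b(r^B(U))}{S}.
\]
A fixed point of \eqref{eq:utilization-fixed-point} is exactly a zero of $\phi$. Existence will come from the intermediate value theorem, and uniqueness from strict monotonicity of $\phi$.

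\emph{Well-definedness and continuity.} By \Cref{prop:rate-monotone}, $r^B$ is continuous and weakly increasing on $[0,1]$. So $r^B$ maps $[0,1]$ onto $[r^B(0),r^B(1)]$, which is the domain of $b$. The composition $U\mapsto b(r^B(U))$ is therefore well defined and continuous, and $\phi$ is continuous.

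\emph{Existence.} The boundary hypotheses give the signs at the endpoints:
\[
\phi(0)=-\frac{b(r^B(0))}{S}<0,
\qquad
\phi(1)=1-\frac{b(r^B(1))}{S}>0.
\]
The intermediate value theorem then yields a zero $U^\dagger\in(0,1)$. The strict inequalities place it in the open interval.

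\emph{Uniqueness.} This is the only step needing care. Since $r^B$ is only weakly increasing, $b\circ r^B$ need not be strictly decreasing. It may be constant on an interval where $r^B$ is flat, for example when $s_1=0$. I will avoid relying on strictness there. Take $U_1<U_2$. Then $r^B(U_1)\le r^B(U_2)$, and because $b$ is decreasing, $b(r^B(U_1))\ge b(r^B(U_2))$. Hence
\[
\phi(U_2)-\phi(U_1)\ge U_2-U_1>0.
\]
So $\phi$ is strictly increasing, whatever the slope of the rate curve, and it has at most one zero.

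Only weak monotonicity of $b$ is needed for uniqueness. Strictness of $b$ is not required, though the statement assumes it. If $r^B$ were merely monotone without continuity, existence would fail, so continuity from \Cref{prop:rate-monotone} is essential. I will point out that this is the point where that result is invoked.
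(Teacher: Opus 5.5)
Your proof is correct and follows the paper's argument essentially line for line. It uses the same auxiliary function $U-b(r^B(U))/S$, the intermediate value theorem applied to the endpoint signs for existence, and the bound $\phi(U_2)-\phi(U_1)\ge U_2-U_1>0$, which needs only weak monotonicity of $b\circ r^B$, for uniqueness. One minor correction to your closing remark: without continuity of $r^B$, existence \emph{can} fail rather than \emph{would} fail, and this does not affect the proof.
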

\begin{proof}
Define $f(U)=U-b(r^B(U))/S$. It is continuous. Endpoint assumptions give $f(0)<0$ and $f(1)>0$, so a root exists. Since $r^B$ is weakly increasing and $b$ strictly decreasing, $g(U)=b(r^B(U))/S$ is weakly decreasing. For $U_2>U_1$,
\[
f(U_2)-f(U_1)=(U_2-U_1)-(g(U_2)-g(U_1))\ge U_2-U_1>0.
\]
Thus $f$ is strictly increasing and the root is unique. \qedhere
\end{proof}

The theorem does not validate rate parameters. It states only that a monotone curve and monotone static demand do not generate multiple utilization equilibria under the assumptions. Dynamic stability, strategic borrowing, and rate manipulation remain empirical questions \citep{bastankhah2024agilerate,bastankhah2024fastslow}.

\subsection{Interest allocation}
Let gross accrued interest over an interval be $H_{p,t}=H^{\mathrm{acc}}_{p,t}\ge0$. The protocol allocates it through non-negative weights
\begin{equation}
\alpha_S+\alpha_J+\alpha_R+\alpha_P+\alpha_M=1,
\label{eq:interest-weights}
\end{equation}
for senior LPs, LBPs, reserves, protocol treasury, and contracted market-maker premiums. A configuration without a role sets its weight to zero.

The claim increments are
\begin{align}
\Delta V^S &= \alpha_SH, &
\Delta V^J &= \alpha_JH,\\
\Delta R &= \alpha_RH, &
\Delta T &= \alpha_PH, &
\Delta M &= \alpha_MH.
\end{align}
The MM term is payable only when a commitment contract or service agreement has earned the premium. Ordinary spread revenue earned directly on an external venue is not protocol interest.

\subsection{Accrued receivables are not withdrawable cash}
Let $H^{\mathrm{acc}}_{p,t}$ denote borrower interest accrued into recognized debt receivables and let $H^{\mathrm{cash}}_{p,t}$ denote interest actually received in settlement cash. Accrual may increase a claim or risk-adjusted NAV according to the valuation policy, but it does not increase $A^W_{p,t}$. Provider, reserve, MM, and protocol cash distributions are funded only from $H^{\mathrm{cash}}$ or other settled revenue.

\begin{proposition}[Accrual--cash separation]
\label{prop:accrual-cash-separation}
If withdrawal-eligible cash excludes unpaid receivables, then increasing $H^{\mathrm{acc}}$ without settlement cannot increase immediate provider withdrawal capacity. When settlement arrives, only the settled component can be routed as cash, with any unpaid remainder remaining a receivable.
\end{proposition}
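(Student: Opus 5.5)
The plan is to make the withdrawal-capacity notion explicit and then trace both parts of the statement through the transition table. First, define withdrawal-eligible cash $A^W_{p,t}$ as a control-state subset of aggregate settled cash, $A^W_{p,t}\le A_{p,t}$. It is obtained from $A_p$ by excluding reserve-, junior-, protocol- and encumbrance-tagged balances, in the same way $A^L_p$ is defined. In particular it is a function only of settled cash balances, never of the receivable $B_p$. The hypothesis that withdrawal-eligible cash excludes unpaid receivables is exactly this statement. Immediate provider withdrawal capacity is then bounded by $A^W_{p,t}$, because a withdrawal payment debits $A_p$ in \Cref{tab:balanced-transitions}.

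For the first claim, I would examine the interest-accrual row of \Cref{tab:balanced-transitions}. Accrual of $h=H^{\mathrm{acc}}$ posts $B_p:+h$ on the asset side and allocates $h$ across claims $V^S,V^J,R,T,M$ through the weights in \eqref{eq:interest-weights}. It posts nothing to $A_p$. Since $A^W_p$ depends only on cash balances and their tags, $A^W_p$ is unchanged. Provider NAV and share price $\pi^S$ may rise, which is consistent with \Cref{cor:share-monotonicity}, but the payable amount $\min\{\text{claim},A^W_p\}$ cannot rise. A short induction over any finite run of accrual-only transitions covers repeated accrual. The argument parallels \Cref{prop:settled-only}: only a transition that receives recognized settled cash changes the cash side.

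For the second claim, I would treat settlement arrival as a repayment transition with settled amount $y$, applied through the debt-first split \eqref{eq:residual-allocation}. The repayment row posts $A_p:+y$ and $B_p:-y$ for the confirmed amount only. By \Cref{prop:settled-only} and \Cref{prop:settlement-idempotency}, pending or replayed evidence contributes nothing. The cash interest component is therefore $H^{\mathrm{cash}}\le H^{\mathrm{acc}}$. Routing to providers, reserves, MMs and treasury is funded by allocating $H^{\mathrm{cash}}$ with the same weights, using integer rounding with dust as in \Cref{thm:integer-closure}. The remainder $H^{\mathrm{acc}}-H^{\mathrm{cash}}$ stays in $B_p$ because no cash was posted against it. \Cref{thm:accounting-closure} then shows the book still balances: the claim increments already recognized at accrual are backed partly by new cash and partly by the residual receivable.

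The main obstacle is the split between principal and interest within a partial repayment $y$. The statement presupposes that the settled cash can be attributed to the interest component rather than to principal. I would first fix a deterministic attribution rule as part of the transition specification, for example interest-first within $y_i^D$ up to the accrued amount. Under that rule $H^{\mathrm{cash}}$ is well defined, and the remaining steps are routine bookkeeping.
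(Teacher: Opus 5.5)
Your proof is correct and follows the paper's own proof, written out in more detail. That proof says $A^W$ is by definition settled unencumbered cash excluding receivables, the accrual row moves only $B_p$ and the claim ledgers, and only a settlement transition reclassifies receivable value into cash; the interest/principal attribution you flag is already fixed by the ``admitted priority'' step of \Cref{alg:repayment}. One small slip: the per-provider amount $\min\{\text{claim},A^W_p\}$ can rise under accrual when the claim rather than the cash is the binding term, so the invariant you actually need (and which suffices via \eqref{eq:withdrawal-cash-bound}) is that the cash bound $A^W_p$ itself does not rise.
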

\begin{proof}
By definition $A^W$ contains settled unencumbered cash and excludes debt receivables. Accrual changes the receivable and claim ledgers but not cash. Therefore the immediate cash bound is unchanged until a settlement transition reclassifies receivable value into cash. \qedhere
\end{proof}

\begin{proposition}[Fee-routing conservation]
\label{prop:fee-conservation}
If \eqref{eq:interest-weights} holds, total allocated claim accrual equals gross accrued interest; realized cash distributions remain bounded by settled interest cash.
\end{proposition}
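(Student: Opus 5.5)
The first claim follows by summing the claim increments, and the second from the accrual--cash separation already established. For the allocation identity, I would add the five increments $\Delta V^S,\Delta V^J,\Delta R,\Delta T,\Delta M$ to get $(\alpha_S+\alpha_J+\alpha_R+\alpha_P+\alpha_M)H$. By \eqref{eq:interest-weights} this equals $H=H^{\mathrm{acc}}_{p,t}$, so total claim accrual equals gross accrued interest. Non-negativity of the weights ensures that no claim class is debited to fund another. Absent roles have zero weight, so their omission leaves the sum unchanged.

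Next I would tie this to the balanced journal in \Cref{tab:balanced-transitions}. The interest-accrual row posts $B_p:+h$ on the asset side and allocates $h$ across claims. With $h=H$, the identity just shown makes that row balanced. Hence \Cref{thm:accounting-closure} continues to apply, and closure \eqref{eq:pool-closure} survives accrual.

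For the cash bound, I would invoke \Cref{prop:accrual-cash-separation}. Accrual alters only the receivable and claim ledgers, not settled cash. Cash distributions to providers, reserves, MMs, or treasury are therefore funded only from $H^{\mathrm{cash}}$ or other settled revenue. Applying the same weights to settled interest gives routed cash $\sum_k\alpha_kH^{\mathrm{cash}}=H^{\mathrm{cash}}$. Any unpaid part $H^{\mathrm{acc}}-H^{\mathrm{cash}}$ remains a receivable.

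In the integer implementation, each class receives a floored share and the remainder goes to the dust account (\Cref{thm:integer-closure}). Routed cash is then at most $H^{\mathrm{cash}}$, with equality once dust is included.

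The main obstacle is interpretive rather than computational: the statement does not specify how settled cash is matched to the previously accrued claim increments. I would handle this by stating the bound at the aggregate level. Cumulative cash routed is at most cumulative settled interest cash, which in turn is at most cumulative accrual. This avoids any per-class ordering assumption, and it is all the proposition asserts.
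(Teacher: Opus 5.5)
Your first paragraph is exactly the paper's proof: sum the five increments, factor out $H$, invoke \eqref{eq:interest-weights}, and note that non-negative weights give non-negative allocations. The paper's proof stops there and leaves the cash clause to the funding rule stated just before \Cref{prop:accrual-cash-separation}, so your aggregate cash bound and dust remark are a sound completion of what the paper leaves implicit. One caveat concerns your closure aside: it tacitly counts the MM premium claim inside $\mathcal C_{p,t}$, which \eqref{eq:pool-claims} omits, so it holds only under that convention or with $\alpha_M=0$.
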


\subsection{Senior and LBP returns}
Over horizon $[0,T]$, define realized senior net return
\begin{equation}
R^S_T=\frac{I^S_T+F^S_T-L^S_T}{V^S_0},
\label{eq:senior-return}
\end{equation}
where $I^S_T$ and $F^S_T$ are realized interest and fees allocated to senior NAV and $L^S_T$ is recognized senior principal loss. LBP return is
\begin{equation}
R^J_T=\frac{I^J_T+F^J_T-L^J_T}{V^J_0}.
\label{eq:lbp-return}
\end{equation}
LBP capital is junior, so $R^J_T$ should generally have a wider loss distribution and can be negative even when senior return remains positive.

\begin{definition}[LBP participation threshold]
Let $k_J$ be the LBP's required expected return over a chosen horizon. A risk-neutral participation condition is
\begin{equation}
\E[I^J_T+F^J_T-L^J_T]\ge k_JV^J_0.
\label{eq:lbp-participation}
\end{equation}
\end{definition}
The expression is not a protocol promise. Risk aversion, capital lock-up, model uncertainty, operational exposure, and legal cost raise the required premium.

\subsection{Protocol contribution margin}
For the protocol treasury, reserve accrual and provider income are not revenue. A reference contribution margin is
\begin{equation}
\mathrm{CM}_T=
F^{\mathrm{orig}}_T+F^{\mathrm{exec}}_T+\alpha_P I_T+R^{\mathrm{partner}}_T
-C^{\mathrm{ops}}_T-C^{\mathrm{security}}_T-C^{\mathrm{incentive}}_T-L^{\mathrm{protocol}}_T.
\label{eq:contribution-margin}
\end{equation}
This separates protocol economics from LP economics. Senior providers receive contractual pool returns, not an equity-like share of Axient corporate profit merely by supplying credit.

\subsection{No free protection}
Allocating more income to reserves, LBPs, and bonded MMs reduces income retained by senior LPs or the protocol. The protection stack has an explicit price. A comparison should report both senior loss and senior carry; minimizing loss incidence by transferring nearly all interest to junior capital would not constitute a meaningful economic improvement.

\section{Liquidation Backstops and Liquidators}
\label{sec:liquidation-roles}

\subsection{Role separation}
A liquidation event has two economically different inputs:
\begin{enumerate}[leftmargin=*]
\item an execution action that converts or reduces the position; and
\item loss-absorbing capital if realized proceeds do not cover debt.
\end{enumerate}
The first is supplied by a liquidator. The second may be supplied by reserves, an LBP, or senior capital. Combining both under the word ``liquidator'' obscures who is paid for latency and who is paid for bearing principal risk.

\begin{definition}[Admissible liquidation]
A liquidation action $a$ is admissible for position $i$ at time $t$ if it is authorized by the risk state, weakly reduces recognized debt or risk exposure, satisfies venue and price bounds, respects aggregate execution capacity, and cannot transfer trader residual value before debt priority is satisfied.
\end{definition}

At the Axient contract boundary, liquidation submission may be permissionless: any address can submit an admissible action and receive a bounty. Whether the action can be executed on an external venue is determined by adapter capabilities, signer authority, and venue access. The contract, not executor identity, determines validity of the Axient-side transition.

\begin{definition}[LBP vault]
An LBP vault is a junior-capital account with balance $J_{p,t}$, locked withdrawal rules, income share $\alpha_J$, and loss-allocation priority ahead of senior principal. LBP shares represent a residual claim after recognized loss.
\end{definition}

\subsection{Execution bounty}
Let $g(a)$ denote verified gas or execution cost and $\beta(a)$ protocol bounty. The liquidator receives
\begin{equation}
F^{\mathrm{liq}}(a)=g(a)+\beta(a),
\label{eq:liquidator-fee}
\end{equation}
subject to a cap and successful settlement. A bounty may depend on size, urgency, and execution quality, but it must not reward a transition that increases expected shortfall.

An execution-only liquidator has no claim on LBP income unless it also owns LBP shares. Conversely, an LBP receives junior-capital income even when another keeper performs the liquidation.

\begin{theorem}[Role separability]
\label{thm:role-separability}
Suppose (i) liquidation eligibility and transition validity are enforced by contract, (ii) liquidator compensation depends only on a valid settled action, and (iii) LBP loss and income are allocated by junior share ownership. Then liquidation execution can be permissionless independently of the ownership concentration of LBP capital.
\end{theorem}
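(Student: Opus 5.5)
The argument is a state-separation argument: I will show that the predicate deciding whether a liquidation transition is accepted, and the payoff it triggers, are functions of state variables that do not include the LBP share registry. Concretely, I will write the liquidation transition as a map $\tau(\sigma,a,\mathrm{addr})$ on protocol state $\sigma$. I partition $\sigma=(\sigma^{\mathrm{pos}},\sigma^{\mathrm{pool}},\sigma^{J})$, where $\sigma^{J}$ is the LBP share-ownership vector $(z^J_k)_k$ with $\sum_k z^J_k=Z^J_{p,t}$. Here \emph{ownership concentration} means any functional of the normalized vector $(z^J_k/Z^J_{p,t})_k$, such as a Herfindahl index or the maximum holder weight. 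The claim to establish is then the following: for fixed $\sigma^{\mathrm{pos}},\sigma^{\mathrm{pool}}$, aggregate $J_{p,t}$ and $Z^J_{p,t}$, and fixed action $a$, the accept/reject outcome and the liquidator's payoff are the same for every submitting address and for every redistribution of $\sigma^J$ among holders.

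The steps run in order, one per hypothesis. First, I use hypothesis (i) together with the definition of admissible liquidation. Admissibility is defined by the risk state, debt or exposure reduction, venue and price bounds, and aggregate capacity. Each of these reads only $\sigma^{\mathrm{pos}}$ and $\sigma^{\mathrm{pool}}$, and the contract enforces them without an identity check. So the validity predicate $\mathrm{Adm}(\sigma,a)$ depends neither on $\mathrm{addr}$ nor on $\sigma^J$. Second, I use (ii) with \eqref{eq:liquidator-fee}. The payment $F^{\mathrm{liq}}(a)=g(a)+\beta(a)$ is a function of the settled action alone, and \Cref{prop:settled-only} ensures that only settled actions trigger it. So the payoff is invariant in the same sense. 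Third, I use (iii). Subsequent loss or income allocation to LBPs is pro rata in $z^J_k$. A valid liquidation changes only aggregate quantities such as $J_{p,t}$ through the waterfall; it does not change $\sigma^J$ itself, and it does not feed $\sigma^J$ back into $\mathrm{Adm}$ or $F^{\mathrm{liq}}$. Combining the three steps, the set of valid liquidation transitions is identical under any LBP ownership distribution. Any address may therefore submit them; this is permissionlessness, holding independently of concentration.

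The main obstacle is ruling out \emph{indirect} dependence. The waterfall may draw on $J_{p,t}$, and the risk state could in principle reference LBP capacity, for example by refusing liquidation when junior capital is thin. I would handle this by observing that any such dependence runs through the aggregate $J_{p,t}$, which is invariant to how shares are distributed, and never through $\sigma^J$. I would state this explicitly as the content of (iii) combined with the share-price definition $\pi^J=V^J/Z^J$, which is itself distribution-free. I would also add a remark that external venue executability, which is adapter-constrained, lies outside the Axient-side transition and hence outside the claim, consistent with the definition of the liquidator.
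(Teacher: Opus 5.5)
Your proposal is correct and follows the same route as the paper's proof: (i) makes admissibility independent of the executor, (ii) ties payment to the valid settled action rather than to LBP ownership, and (iii) allocates junior gains and losses by share balance rather than executor identity. Your state partition and your explicit handling of indirect dependence through the aggregate $J_{p,t}$ make rigorous what the paper states in three lines; the one loose point is that citing \Cref{prop:settled-only} is unnecessary, since that result concerns debt-share burning and hypothesis (ii) already gives settlement-conditioned compensation directly.
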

\begin{proof}
Condition (i) makes admissibility independent of the executor's capital position. Condition (ii) pays for the action rather than LBP ownership. Condition (iii) assigns capital gains and losses by share balance rather than executor identity. Therefore any qualified address may execute while LBP ownership remains arbitrary. \qedhere
\end{proof}

\subsection{Junior capital and lock-up}
LBP capital is not immediately withdrawable while it supports active debt. Let $J^{\mathrm{enc}}_{p,t}$ be the amount encumbered by pool protection policy. Freely withdrawable capital satisfies
\begin{equation}
J^{\mathrm{free}}_{p,t}=\pospart{J_{p,t}-J^{\mathrm{enc}}_{p,t}-b^J_p},
\end{equation}
where $b^J_p$ is a minimum retained buffer. A request above $J^{\mathrm{free}}$ enters a junior queue or is rejected by mandate. Without this rule, junior capital could leave immediately before a known hard-flat window and would not be credible protection.

\subsection{LBP loss boundary}
LBP loss is limited to vault balance committed to the pool or risk bucket. The protocol does not create uncapped recourse against the provider. If a provider also posts a market-maker bond, the bond is a separate account and may be slashed under \Cref{sec:mm}.

\begin{remark}[Senior loss remains possible]
The existence of an LBP does not insure senior capital. If recognized shortfall exceeds position value, reserves, LBP capital, and other protective layers, the remainder impairs senior net asset value. The protocol must expose this boundary.
\end{remark}

\subsection{Liquidation capital versus liquidation liquidity}
Junior capital is useful even when an external venue provides execution. Execution liquidity is useful even when no junior capital exists. The protocol therefore tracks two distinct constraints:
\begin{align}
\text{executable capacity} &\ge \text{required sale quantity},\\
\text{loss-absorbing capacity} &\ge \text{recognized residual shortfall}.
\end{align}
A market maker can improve the first. An LBP improves the second. A bonded MM commitment can affect both only to the extent that its bond is transferable stablecoin capital.

\section{Market Makers and Collateralized Commitments}
\label{sec:mm}

\subsection{Open quotes are not guarantees}
A visible bid contributes to a position-level book curve, but it is revocable until matched and does not by itself reduce reserve requirements. Counting the same quote as both ordinary executable depth and guaranteed backstop capacity would double-count liquidity.

\begin{definition}[Collateralized executable commitment]
A market-maker commitment is
\begin{equation}
\kappa_m=(e,v,[t_0,t_1],Q_m,\underline P_m,\Gamma_m,b_m),
\label{eq:mm-commitment}
\end{equation}
where $Q_m$ is maximum quantity, $\underline P_m$ a minimum net price or proceeds curve, $\Gamma_m$ admissible execution conditions, and $b_m$ a stablecoin bond transferable to the protocol on defined underperformance.
\end{definition}

The contract may pay a commitment premium when no trade occurs and an execution fee when capacity is used. The premium compensates capital lock-up and adverse-selection exposure; the execution fee compensates the trade.

\subsection{Credited capacity and partial performance}
Let $c_m(x)$ be incremental settled recovery credited to commitment $m$ for sale quantity $x$, relative to the uncommitted robust book envelope. The credited amount must satisfy
\begin{equation}
0\le c_m(x)\le b_m
\label{eq:mm-bond-cap}
\end{equation}
for every admissible $x$, unless additional on-chain collateral or atomic delivery guarantees cover the excess. If realized incremental delivery is $y_m(x)\in[0,c_m(x)]$, failure transfer $s_m(x,y)$ must satisfy
\begin{equation}
s_m(x,y)\ge \pospart{c_m(x)-y_m(x)}.
\label{eq:mm-shortfall-slash}
\end{equation}

\begin{theorem}[Bonded commitment reserve substitution]
\label{thm:mm-substitution}
If realized incremental delivery is $y_m(x)\ge0$ and enforceable failure transfer satisfies \eqref{eq:mm-shortfall-slash}, then replacing reserve capacity $c_m(x)$ with the commitment does not reduce the pool's lower realized recovery bound for the committed quantity.
\end{theorem}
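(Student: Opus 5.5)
The plan is to compare, for a fixed admissible sale quantity $x$, the pool's realized recovery attributable to the protected layer under two configurations. In the \emph{reserve} configuration, the pool holds reserve capacity $c_m(x)$, which it can draw in full on the committed quantity. The lower realized recovery bound is therefore the uncommitted robust book envelope plus $c_m(x)$. In the \emph{commitment} configuration, the reserve is removed. The pool instead receives the realized incremental delivery $y_m(x)$ from the market maker, plus the enforceable failure transfer $s_m(x,y)$ from the bond $b_m$. Since $c_m(x)$ is measured relative to the uncommitted robust book envelope, that envelope term is common to both configurations. So it suffices to show that the incremental recovery $y_m(x)+s_m(x,y)$ is at least $c_m(x)$.

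The key step is a two-case split on delivery. If $y_m(x)\ge c_m(x)$, then $y_m(x)+s_m(x,y)\ge c_m(x)$ already, because $s_m\ge0$; the positive part in \eqref{eq:mm-shortfall-slash} makes the required slash zero. If instead $y_m(x)<c_m(x)$, then \eqref{eq:mm-shortfall-slash} gives $s_m(x,y)\ge c_m(x)-y_m(x)$, so again $y_m(x)+s_m(x,y)\ge c_m(x)$. Both cases can be written at once as
\[
y_m(x)+s_m(x,y)\;\ge\; y_m(x)+\pospart{c_m(x)-y_m(x)}\;=\;\max\{y_m(x),c_m(x)\}\;\ge\;c_m(x).
\]
Adding the common book envelope then yields the claimed inequality between the lower recovery bounds.

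The main obstacle is justifying that the failure transfer actually arrives as settled pool-asset cash at the moment it is counted, and is not merely an entitlement. I would handle this by reading ``enforceable'' together with \eqref{eq:mm-bond-cap}. The cap $c_m(x)\le b_m$ means the slash needed in the worst case, at most $c_m(x)$ when $y_m(x)=0$, is covered by the stablecoin bond. That bond is transferable on-chain, so it meets the settled-value precondition of \Cref{prop:settled-only}. I would also note two further requirements, both of which are part of the hypothesis rather than consequences of it. First, the bond must not simultaneously be counted as LBP or reserve capital, which would violate the double-counting rule of \Cref{sec:mm}. Second, the bond's value must be unaffected in the protected state. The delivered component $y_m(x)$ is counted only once settled, again by \Cref{prop:settled-only}. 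The theorem then reduces to the displayed inequality, which holds pointwise for each admissible $x$.
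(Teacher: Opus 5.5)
Your proposal is correct and matches the paper's proof: both reduce the claim to the pointwise inequality $y_m(x)+s_m(x,y)\ge y_m(x)+\pospart{c_m(x)-y_m(x)}\ge c_m(x)$, which covers full, partial, and zero delivery. Your explicit case split is a harmless expansion of that inequality. Your remarks on enforceability, the bond cap, and settled-only recognition are framing that the paper treats as standing hypotheses (``the theorem is narrow'') rather than as proof steps.
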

\begin{proof}
Total incremental recovery is $y_m(x)+s_m(x,y)$. By \eqref{eq:mm-shortfall-slash},
\[
y_m(x)+s_m(x,y)\ge y_m(x)+\pospart{c_m(x)-y_m(x)}\ge c_m(x).
\]
The result covers full performance, partial performance with slashing, and complete failure. The source of recovery changes, but its lower bound does not. \qedhere
\end{proof}

The theorem is narrow. It assumes enforceable value transfer and does not treat litigation, reputation, or an off-chain letter of intent as stablecoin collateral.

\subsection{Commitment allocation and double counting}
A commitment has finite capacity and cannot be promised independently to several positions. For positions $I_m$ using commitment $m$,
\begin{equation}
\sum_{i\in I_m}x_i\le Q_m
\end{equation}
must hold under the aggregate hard-flat schedule. Allocation can be pro rata, earliest-deadline first, or lowest-coverage first, but it must be deterministic and encoded in the registry.

Ordinary venue depth and bonded capacity also cannot overlap. If the committed order is already displayed in a book used to construct the uncommitted proceeds envelope, the credited increment is only the enforceable value beyond what is already counted.

\subsection{Correlated roles}
A market maker may also supply senior credit or LBP capital. The protocol records roles separately because common ownership creates correlated failure. If one institution is simultaneously the largest senior lender, sole exit provider, and principal LBP, its failure removes funding, executable capacity, and junior loss-absorbing provider capital at once. Pool mandates therefore cap concentration by linked entity or address cluster where observable.

\subsection{Market-maker economics}
A market maker may earn ordinary spread, venue rebates, commitment premium, execution fees, liquidation bounties, and independent senior or LBP returns. These cash flows should not be collapsed into an opaque ``liquidity incentive.'' Separate accounting permits the protocol to test whether a premium compensates enforceable capacity or merely subsidizes ordinary quoting.

\subsection{Venue-agnostic implementation}
A commitment can terminate in a direct on-chain fill, RFQ settlement, auction, or verified venue order. The core requirement is not a particular API but an objective success condition and enforceable failure transfer. Where neither exists, the commitment cannot enter the robust proceeds certificate and remains an off-chain commercial expectation.

\section{Deterministic Loss Waterfall}
\label{sec:waterfall}

\subsection{Recognized shortfall}
A pool loss is recognized only after the position account has completed the applicable execution, settlement, and reconciliation transitions. Recognition requires a unique evidence identifier, the active adapter version, and a finalized loss state. Replaying the same evidence or presenting conflicting evidence cannot allocate the loss twice; an unresolved conflict moves the position or pool to \state{EvidenceConflict} or \state{Recovery}. For position $i$, let
\begin{equation}
\ell_i=\pospart{D_i-K_i-P^{\mathrm{set}}_i},
\label{eq:recognized-shortfall}
\end{equation}
where $K_i$ is settled position cash not already included in $P_i^{\mathrm{set}}$. Trader collateral and realized position value are consumed inside the position account before \eqref{eq:recognized-shortfall} enters the pool protection stack. Pending fills and expected recoveries are excluded.

The reference waterfall contains the following ordered capacities:
\begin{align*}
H_{i,1}&=\text{position-specific protocol buffer},\\
H_{i,2}&=\text{market/event/venue reserve assigned to }i,\\
H_{i,3}&=\text{LBP capital assigned to the risk bucket},\\
H_{i,4}&=\text{pool reserve},\\
H_{i,5}&=\text{optional allocated global reserve},\\
H_{i,6}&=\text{senior principal available for impairment}.
\end{align*}
A pool mandate may omit a layer by setting its capacity to zero. It may not reorder the layers for an already admitted position.

\begin{definition}[Ordered waterfall operator]
\label{def:waterfall}
For a recognized shortfall $\ell\ge0$ and an ordered vector $H=(H_1,\ldots,H_K)$ with $H_k\ge0$, define the residuals and allocations recursively by
\begin{align}
\rho_0&=\ell,\nonumber\\
a_k&=\min\{H_k,\rho_{k-1}\},\label{eq:waterfall-allocation}\\
\rho_k&=\rho_{k-1}-a_k,\qquad k=1,\ldots,K.\label{eq:waterfall-residual}
\end{align}
The final uncovered loss is $\rho_K$.
\end{definition}

\begin{theorem}[Existence, uniqueness, and conservation]
\label{thm:waterfall}
For every finite $H\in\R_+^K$ and $\ell\ge0$, the waterfall in \Cref{def:waterfall} exists and is unique. It satisfies
\begin{equation}
0\le a_k\le H_k,\qquad
\rho_0\ge \rho_1\ge\ldots\ge \rho_K\ge0,
\label{eq:waterfall-bounds}
\end{equation}
and
\begin{equation}
\sum_{k=1}^{K}a_k+\rho_K=\ell.
\label{eq:waterfall-conservation}
\end{equation}
\end{theorem}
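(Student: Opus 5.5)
The plan is a direct induction on the layer index $k$ in \Cref{def:waterfall}, carrying the single invariant
\[
\rho_k\ge0\quad\text{and}\quad \sum_{j=1}^{k}a_j+\rho_k=\ell .
\]
Existence and uniqueness come first and are essentially immediate. The recursion \eqref{eq:waterfall-allocation}--\eqref{eq:waterfall-residual} defines each pair $(a_k,\rho_k)$ as a composition of $\min$ and subtraction applied to previously determined real numbers. These are total, single-valued functions on $\R$, and $K$ is finite. So by induction every $a_k$ and $\rho_k$ is a well-defined real number, and it is uniquely determined by $(\ell,H)$. No choice or tie-break enters, which is the reason the deterministic waterfall needs no allocation rule beyond the ordering.

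The bounds \eqref{eq:waterfall-bounds} come next, proved jointly by induction. The base case is $\rho_0=\ell\ge0$. For the inductive step, assume $\rho_{k-1}\ge0$.
\begin{itemize}
\item Since $H_k\ge0$ and $\rho_{k-1}\ge0$, we get $a_k=\min\{H_k,\rho_{k-1}\}\ge0$.
\item By definition of the minimum, $a_k\le H_k$ and $a_k\le\rho_{k-1}$.
\item Hence $\rho_k=\rho_{k-1}-a_k$ satisfies $0\le\rho_k\le\rho_{k-1}$.
\end{itemize}
This gives both the allocation bounds $0\le a_k\le H_k$ and the monotone, non-negative residual chain.

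Conservation \eqref{eq:waterfall-conservation} then follows by telescoping. From \eqref{eq:waterfall-residual}, $a_k=\rho_{k-1}-\rho_k$, so
\[
\sum_{k=1}^K a_k=\rho_0-\rho_K=\ell-\rho_K.
\]
Rearranging gives \eqref{eq:waterfall-conservation}.

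There is no real obstacle. The only point needing care is that non-negativity of $\rho_{k-1}$ must be established before it is used to get $a_k\ge0$; this is why the bounds are proved inside the same induction rather than separately. I would also note, as a remark rather than part of the proof, that $\rho_K=\pospart{\ell-\sum_k H_k}$. This identity is what feeds the accounting-closure argument in \Cref{thm:accounting-closure}.
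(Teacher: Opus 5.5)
Your proof is correct and follows the paper's argument: the recursion is well defined by induction, the bounds come from the minimum operator, and conservation follows by telescoping $\rho_{k-1}-\rho_k=a_k$. You improve on it by making explicit that $\rho_{k-1}\ge0$ must be carried inductively to get $a_k\ge0$. One small correction to your closing remark: the closed form $\rho_K=\pospart{\ell-\sum_k H_k}$ appears separately as \Cref{lem:waterfall-closed-form}, and \Cref{thm:accounting-closure} actually relies on the conservation identity \eqref{eq:waterfall-conservation}, not on that closed form.
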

\begin{proof}
The recursion defines $a_1$ and $\rho_1$ uniquely. If $\rho_{k-1}$ is defined, then $a_k=\min\{H_k,\rho_{k-1}\}$ and $\rho_k=\rho_{k-1}-a_k$ are unique. Induction gives existence and uniqueness for every finite $K$. Non-negativity and the capacity bound follow from the minimum operator. Since $a_k\ge0$, residuals are weakly decreasing. Summing $\rho_{k-1}-\rho_k=a_k$ telescopes to \eqref{eq:waterfall-conservation}. \qedhere
\end{proof}

\begin{corollary}[Junior-before-senior impairment]
\label{cor:junior-before-senior}
Suppose LBP capital is layer $j$ and senior principal is a later layer $s>j$ with $H_s>0$. If $a_s>0$, then every preceding layer, including LBP capital, is exhausted: $a_k=H_k$ for all $k<s$.
\end{corollary}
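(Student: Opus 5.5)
The plan is to read the corollary directly off the recursion in \Cref{def:waterfall}, using only the bounds from \Cref{thm:waterfall}. Layer $s$ receives $a_s=\min\{H_s,\rho_{s-1}\}$. So the hypothesis $a_s>0$ gives $\rho_{s-1}>0$, because $H_s>0$ is assumed and the minimum is positive only if both arguments are. The monotonicity in \eqref{eq:waterfall-bounds} then gives $\rho_{k}\ge\rho_{s-1}>0$ for every $k\le s-1$.

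Next, fix any $k<s$. We have $\rho_k=\rho_{k-1}-a_k>0$, so $a_k<\rho_{k-1}$. Since $a_k$ is the minimum of $H_k$ and $\rho_{k-1}$, and it is strictly smaller than $\rho_{k-1}$, the minimum must be attained at $H_k$. Hence $a_k=H_k$, and layer $k$ is exhausted.

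Applying this to every $k<s$ covers the LBP layer $j<s$ in particular. A contrapositive phrasing is also available: if some earlier layer were not exhausted, the residual would vanish at that layer and stay zero thereafter, which would force $a_s=0$.

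I expect no real obstacle. The only care point is the strictness: the argument needs $\rho_k>0$ for all $k\le s-1$, not only for $k=s-1$. That follows from the weakly decreasing residual chain, which runs from index $0$ up to $s-1$, so every earlier residual is at least $\rho_{s-1}>0$. One might also note that $H_s>0$ is only used to pass from $a_s>0$ to $\rho_{s-1}>0$, and $a_s>0$ already implies $H_s>0$ anyway.
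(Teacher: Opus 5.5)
Your proof is correct and is essentially the paper's argument. Both pass from $a_s>0$ to $\rho_{s-1}>0$ and use the recursion to show that an unexhausted earlier layer would drive the residual to zero; the only difference is that you argue directly through the weakly decreasing residual chain, whereas the paper argues by contradiction (the contrapositive form you also mention).
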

\begin{proof}
Positive senior allocation implies $\rho_{s-1}>0$. If any preceding layer $k<s$ were not exhausted, then $a_k=\rho_{k-1}<H_k$ and \eqref{eq:waterfall-residual} would give $\rho_k=0$, hence every later allocation would be zero, a contradiction. \qedhere
\end{proof}

\begin{remark}[Senior is protected, not guaranteed]
\Cref{cor:junior-before-senior} establishes priority, not solvency. Senior impairment occurs whenever recognized shortfall exceeds the aggregate capacity of all earlier layers. The protocol must expose that residual rather than describe the pool as insured.
\end{remark}

\subsection{Multiple positions and one shared book}
The position-level execution policy of \citet{nechepurenko2026axient} ranks emergency hard-flat orders by lowest robust coverage ratio
\begin{equation}
\Gamma_i(u)=
\frac{K_i+\underline B^{\mathrm{standalone}}_{i,u,\Delta}(q_i)}
{\overline H_{i,u,\Delta}+m_i},
\label{eq:robust-coverage-ratio}
\end{equation}
then by earlier hard-flat deadline and finally by deterministic position identifier. The ratio is an execution-priority statistic; it is not additive and does not replace the aggregate shared-book feasibility test. Once common executable depth has been consumed in that deterministic order, the resulting position shortfalls enter the capital waterfall.

If a shared reserve $R$ must be allocated across residual shortfalls $\ell_1,\ldots,\ell_n$, the reference implementation uses proportional allocation:
\begin{equation}
g_i=
\begin{cases}
\ell_i,& \sum_j\ell_j\le R,\\[4pt]
R\dfrac{\ell_i}{\sum_j\ell_j},& \sum_j\ell_j>R.
\end{cases}
\label{eq:pro-rata-reserve}
\end{equation}
This allocation is deterministic, neutral to transaction-order races, and limited to lender-principal shortfall after all position-specific recovery.

In integer base units, the reference implementation first floors each proportional share, then assigns the remaining base units by descending fractional remainder with a deterministic identifier tie-break. This largest-remainder rule preserves the same economic proportions while conserving the finite reserve exactly.

\begin{proposition}[Exact integer shared-reserve allocation]
\label{prop:integer-pro-rata}
For non-negative integer shortfalls $\ell_i$ and integer reserve capacity $R$, the deterministic largest-remainder implementation allocates integers $g_i$ satisfying
\begin{equation}
0\le g_i\le\ell_i,
\qquad
\sum_i g_i=\min\left\{R,\sum_i\ell_i\right\}.
\label{eq:integer-pro-rata}
\end{equation}
The allocation is unique once the tie-break order is fixed.
\end{proposition}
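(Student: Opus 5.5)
We split into the two cases of \eqref{eq:pro-rata-reserve}, after fixing the algorithm precisely. Write $S=\sum_j\ell_j$. If $S\le R$, the implementation sets $g_i=\ell_i$. Then $0\le g_i\le\ell_i$ holds trivially and $\sum_i g_i=S=\min\{R,S\}$, with no remainders to distribute. The substantive case is $S>R$ (so $S>0$). Define $q_i=R\ell_i/S$, base allocation $f_i=\lfloor R\ell_i/S\rfloor$, and fractional remainder $\phi_i=q_i-f_i\in[0,1)$. The algorithm assigns one extra base unit to each of the first $k$ indices in the order of descending $\phi_i$, with ties broken by the fixed identifier order, where $k=R-\sum_i f_i$.

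The key counting step is to show that $k$ is a valid number of recipients, namely $0\le k\le n$ and in fact $k\le\#\{i:\phi_i>0\}$. Since $\sum_i q_i=R$ exactly and $f_i\le q_i$, we get $k=\sum_i\phi_i\ge0$, and $k$ is an integer because $R$ and the $f_i$ are integers. Each $\phi_i<1$, and $\phi_i=0$ contributes nothing, so $k=\sum_{i:\phi_i>0}\phi_i<\#\{i:\phi_i>0\}$ unless that set is empty, in which case $k=0$. Hence $k$ does not exceed the number of indices with positive remainder. It follows that the top-$k$ indices in descending-remainder order all have $\phi_i>0$, and $\sum_i g_i=\sum_i f_i+k=R=\min\{R,S\}$.

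The main obstacle is the upper bound $g_i\le\ell_i$. I would argue as follows. Since $R<S$, we have $q_i=\ell_i R/S\le\ell_i$, with strict inequality whenever $\ell_i>0$, so $f_i\le\ell_i$. An index receiving a bonus unit has $\phi_i>0$, so $q_i$ is not an integer. Therefore $f_i+1=\lceil q_i\rceil\le\ell_i$, because $\ell_i$ is an integer with $\ell_i\ge q_i$. Indices with $\ell_i=0$ have $q_i=0$ and $\phi_i=0$, so they receive neither base units nor a bonus, and their $g_i=0$. Nonnegativity is immediate from $f_i\ge0$.

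Uniqueness then follows from determinism. The quantities $f_i$, $\phi_i$, and $k$ are determined by $(\ell,R)$. Once the tie-break order is fixed, the descending order on the pairs $(\phi_i,\text{identifier})$ is a strict total order, so the set of $k$ bonus recipients is uniquely determined, and therefore so is the vector $(g_i)$.
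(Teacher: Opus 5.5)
Your proof is correct and takes essentially the same route as the paper's: floor the proportional targets, show that there are fewer leftover base units than eligible claimants so each receives at most one extra unit without exceeding its cap, and get uniqueness from the fixed tie-break. Your version is more explicit than the paper's sketch. You count only claimants with positive remainder and use $f_i+1=\lceil q_i\rceil\le\ell_i$, which also shows why zero claimants never receive a remainder unit.
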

\begin{proof}
Floor allocations cannot exceed their proportional targets or claimant caps. The number of unallocated base units is smaller than the number of positive claimants. Assigning one unit in deterministic remainder order exhausts the target without exceeding any positive residual claim. Fixed ordering makes the result unique. \qedhere
\end{proof}

\begin{proposition}[Proportional reserve properties]
\label{prop:pro-rata-reserve}
The allocation in \eqref{eq:pro-rata-reserve} satisfies
\begin{equation}
0\le g_i\le\ell_i,
\qquad
\sum_i g_i=\min\left\{R,\sum_i\ell_i\right\}.
\label{eq:pro-rata-properties}
\end{equation}
If $\ell_i>0$ and the reserve is insufficient, every positive shortfall receives the same fractional coverage $g_i/\ell_i=R/\sum_j\ell_j$.
\end{proposition}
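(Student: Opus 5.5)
I would prove \Cref{prop:pro-rata-reserve} by splitting on the two branches of \eqref{eq:pro-rata-reserve}. Write $\Lambda=\sum_j\ell_j$. Adopt the convention that when $\Lambda=0$ the first branch applies, since $0\le R$. Then every $g_i=\ell_i=0$ and both claims in \eqref{eq:pro-rata-properties} are immediate. So the only case needing care is $\Lambda>0$, where the ratio in the second branch is well defined.

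\emph{Case $\Lambda\le R$.} Here $g_i=\ell_i$, so $0\le g_i\le\ell_i$ holds with equality on the right. Summing gives $\sum_i g_i=\Lambda=\min\{R,\Lambda\}$. The fractional-coverage clause is vacuous, because the reserve is sufficient.

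\emph{Case $\Lambda>R$.} Here $g_i=R\ell_i/\Lambda$.
\begin{itemize}
\item Non-negativity follows from $R\ge0$ and $\ell_i\ge0$.
\item The cap $g_i\le\ell_i$ follows because $R/\Lambda<1$ and $\ell_i\ge0$.
\item Summing and using linearity, $\sum_i g_i=R\,\Lambda/\Lambda=R=\min\{R,\Lambda\}$.
\item For any $i$ with $\ell_i>0$, dividing gives $g_i/\ell_i=R/\Lambda$. This common value does not depend on $i$, which is the equal-fractional-coverage statement.
\end{itemize}

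There is no real obstacle in this argument. The only point needing attention is the degenerate denominator $\Lambda=0$, which the convention above places in the first branch. I would also note that conservation of the reserve, $\sum_i g_i\le R$, is what lets this allocation enter the waterfall of \Cref{thm:waterfall} as a capacity-respecting layer. The integer refinement is handled separately by \Cref{prop:integer-pro-rata}.
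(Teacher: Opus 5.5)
Your proof is correct and follows the paper's own argument. You split on whether $\sum_j\ell_j\le R$, take $g_i=\ell_i$ in the first branch, and in the second use $R/\sum_j\ell_j<1$ for the cap, linearity for $\sum_i g_i=R$, and division for the common coverage ratio; your handling of $\sum_j\ell_j=0$ (which falls in the first branch automatically since $R\ge0$) and of $R=0$ is slightly more careful than the paper's, which writes $0<R/\sum_j\ell_j$.
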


\subsection{Reference waterfall}
\Cref{fig:waterfall} separates pre-waterfall position recovery from protocol capital. The senior layer is deliberately visible as the final absorber rather than omitted from the diagram.

\begin{figure}[H]
\centering
\begin{tikzpicture}[
  node distance=0.34cm,
  box/.style={draw,rounded corners,minimum height=0.75cm,text width=2.35cm,align=center,font=\small},
  arr/.style={-{Latex[length=2mm]},thick}
]
\node[box] (pos) {Trader equity and settled position proceeds};
\node[box,right=of pos] (buf) {Position-specific buffer};
\node[box,right=of buf] (mres) {Market / venue reserve};
\node[box,right=of mres] (lbp) {LBP provider capital};
\node[box,below=0.72cm of lbp] (pres) {Pool reserve};
\node[box,left=of pres] (gres) {Optional global reserve};
\node[box,left=of gres] (sen) {Senior principal impairment};
\draw[arr] (pos) -- (buf);
\draw[arr] (buf) -- (mres);
\draw[arr] (mres) -- (lbp);
\draw[arr] (lbp) -- (pres);
\draw[arr] (pres) -- (gres);
\draw[arr] (gres) -- (sen);
\end{tikzpicture}
\caption{Canonical Axient loss path. Capital reaches the senior layer only after all earlier recognized protection has been exhausted.}
\label{fig:waterfall}
\end{figure}

Every allocation emits the source layer, position or bucket identifier, amount, share-price effect, and remaining residual. A loss can be challenged at the data-reconciliation layer, but once finalized it cannot be moved to a different layer by an off-chain operator.

\section{Pool Isolation and Contagion}
\label{sec:isolation}

\subsection{Strict isolation}
\begin{definition}[Strict pool isolation]
\label{def:strict-isolation}
Pools $p$ and $q$ are strictly isolated if they share no lendable cash, reserve, LBP tranche, collateral account, market-maker bond, cross-guarantee, debt receivable, or withdrawal queue, and if neither pool's transition system has write access to the other's claims or assets.
\end{definition}

\begin{theorem}[No direct local loss contagion]
\label{thm:no-local-contagion}
Under \Cref{def:strict-isolation}, a recognized shortfall in pool $p$ cannot reduce the recognized net asset value of pool $q\ne p$.
\end{theorem}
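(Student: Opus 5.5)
The plan is to argue by enumerating the transitions that can be triggered by a recognized shortfall in pool $p$ and showing that none of them writes to any state variable entering $q$'s NAV. First I would fix what "recognized net asset value of pool $q$" means: the claim layers $V^S_{q}$, $V^J_{q}$, $R^M_{q}$, $R^P_{q}$, $T_q$ in \eqref{eq:pool-claims}, which by \eqref{eq:pool-closure} and \Cref{thm:accounting-closure} equal $A_q+B_q$. It therefore suffices to show that a loss recognition in $p$ leaves both the asset side ($A_q$, $B_q$) and the claim side of $q$ unchanged.

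Next I would trace the loss path. A recognized shortfall $\ell_i$ for a position $i$ with $p(i)=p$ is defined by \eqref{eq:recognized-shortfall} from $p$-local quantities ($D_i$, $K_i$, $P^{\mathrm{set}}_i$) and is allocated by the waterfall of \Cref{def:waterfall} over the layers $H_{i,1},\dots,H_{i,6}$. By \Cref{thm:waterfall} the allocations $a_k$ are determined, bounded by $H_{i,k}$, and sum with $\rho_K$ to $\ell_i$. I then check layer by layer that each $H_{i,k}$ is an account belonging to $p$: position buffer, market reserve and pool reserve are $p$'s reserves; LBP capital is $p$'s tranche; senior principal is $V^S_p$. \Cref{def:strict-isolation} excludes sharing of any reserve, LBP tranche, bond, or collateral account, so none of these coincides with a $q$ account. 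The one delicate layer is $H_{i,5}$, the optional global reserve: I would argue that strict isolation, in forbidding shared reserves and cross-guarantees, forces this layer's capacity to be zero for $p$ (or at least to be an account disjoint from $q$'s claims, since $G_t$ is accounted outside \eqref{eq:pool-claims}). In both readings a draw on it does not change $q$'s recognized NAV, which is precisely why the paper separates this from protection contagion.

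Finally I would invoke the journal for "Recognized loss" in \Cref{tab:balanced-transitions}: it debits $B_p$ and credits $p$'s claims by the waterfall allocation. Since $q$ shares no debt receivable or withdrawal queue with $p$, and $p$'s transition system has no write access to $q$'s assets or claims, every term of $\mathcal A_q$ and $\mathcal C_q$ is unchanged. Induction over the finite sequence of transitions triggered by the loss (recognition, waterfall allocations, share-price updates, possible queue repricing in $p$) then gives the claim. The main obstacle is the global-reserve layer and making precise that "no write access" covers all indirect routes (e.g.\ shared MM bonds or queues); I would handle it by treating \Cref{def:strict-isolation} as a disjointness condition on the state variables each transition may write, and note explicitly that the result concerns recognized NAV, not correlated economic exposure through common factors.
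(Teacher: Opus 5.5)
Your proposal is correct and takes essentially the paper's approach: the paper's proof also argues that every layer eligible to absorb the shortfall belongs to $p$, so strict isolation prevents any transition induced by the loss from debiting an asset or claim account of $q$, which leaves $q$'s book assets and claims unchanged. Your version adds a layer-by-layer check and an explicit treatment of the optional global-reserve layer, which the paper handles only implicitly; the appeal to accounting closure is unnecessary, because unchanged claim accounts of $q$ already suffice.
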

\begin{proof}
Every asset and protection layer eligible for the shortfall belongs to $p$. By isolation, no transition induced by the shortfall can debit an asset or claim account of $q$. Hence the book assets and claims used to compute $q$'s NAV are unchanged. Common external shocks may independently reduce both pools, but a write-down in $p$ does not cause a ledger write-down in $q$. \qedhere
\end{proof}

Strict isolation is therefore an accounting property, not a statistical independence claim. Two isolated pools can still fail simultaneously because they share a venue, stablecoin, oracle, chain, market maker, or event class.

\subsection{Shared protection and protection contagion}
Let a protocol-global reserve have balance $G_t$. If pool $p$ draws $g_p\le G_t$, the remaining balance is
\begin{equation}
G_{t^+}=G_t-g_p.
\label{eq:global-reserve-after-draw}
\end{equation}
The draw does not transfer $p$'s liability to another pool. It does reduce the maximum future protection available to every pool whose mandate admits the same reserve.

\begin{proposition}[Bound on protection contagion]
\label{prop:protection-contagion}
For any other pool $q$, a draw $g_p$ from a shared reserve can reduce $q$'s reserve-backed protection capacity by at most $g_p$, and by exactly $g_p$ whenever $q$ previously had access to the full remaining global reserve and no pool-specific allocation floor applies.
\end{proposition}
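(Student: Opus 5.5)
The argument formalizes $q$'s reserve-backed protection capacity as a function of the global balance, then bounds how much a single draw can change it. Let $\Phi_q(G)$ denote the global-reserve-backed protection available to pool $q$ when the remaining global balance is $G$. Under the mandate, $q$'s access is the global balance minus any portions reserved for other pools, capped by whatever allocation limit $q$ has. The natural form is therefore
\[
\Phi_q(G)=\min\{\bar c_q,\,\pospart{G-F_{-q}}\},
\]
where $\bar c_q\ge0$ is $q$'s allocation cap ($+\infty$ if uncapped) and $F_{-q}\ge0$ is the total of floors ring-fenced for other pools. Pool-specific layers of $q$, namely $R^M$, $R^P$ and LBP capital, are untouched by the draw. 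This follows from \Cref{def:strict-isolation} and the argument of \Cref{thm:no-local-contagion}, since only $G$ is shared. So $q$'s total reserve-backed capacity changes only through $\Phi_q$.

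\emph{Upper bound.} By \eqref{eq:global-reserve-after-draw}, the draw moves $G_t$ to $G_t-g_p$ with $0\le g_p\le G_t$. The maps $x\mapsto\pospart{x-F_{-q}}$ and $x\mapsto\min\{\bar c_q,x\}$ are both nondecreasing and $1$-Lipschitz, so their composition $\Phi_q$ is as well. Hence
\[
0\le \Phi_q(G_t)-\Phi_q(G_t-g_p)\le g_p,
\]
which gives the "at most $g_p$" claim. It also shows that the draw never increases $q$'s protection.

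\emph{Equality case.} The hypothesis that $q$ previously had access to the full remaining global reserve and that no floor applies means $F_{-q}=0$ and $\bar c_q\ge G_t$. Then $\Phi_q(G)=G$ on $[0,G_t]$. This gives $\Phi_q(G_t)=G_t$ and $\Phi_q(G_t-g_p)=G_t-g_p$, so the reduction is exactly $g_p$.

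The main obstacle is modeling rather than computation: I must justify that every admissible mandate access rule has the monotone $1$-Lipschitz form above. If a mandate used a proportional cap such as $\bar c_q=\theta G$, the Lipschitz constant would be $\theta\le1$ and the bound would still hold. I would state this as the standing interpretation of "reserve-backed protection capacity" and note that the pro-rata rule \eqref{eq:pro-rata-reserve} also allocates at most $R$ in total (\Cref{prop:pro-rata-reserve}), so it too is nondecreasing and $1$-Lipschitz in $R$.
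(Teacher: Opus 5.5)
Your proposal is correct and takes essentially the same route as the paper: its proof notes that before the draw $q$ can obtain at most $G_t$ (or a smaller mandate cap) from the shared layer, and after the draw at most $G_t-g_p$, so the reduction is at most $g_p$, with equality under full access and no binding cap. Your nondecreasing $1$-Lipschitz map $\min\{\bar c_q,\pospart{G-F_{-q}}\}$ formalizes that step more explicitly; the one slip is citing \Cref{def:strict-isolation}, which rules out shared reserves, when the fact you actually need is only that $G$ is the one account $p$'s draw can debit.
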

\begin{proof}
Before the draw, the most $q$ can obtain from the shared layer is $G_t$ or a smaller mandate cap. After the draw, the layer is $G_t-g_p$. The difference cannot exceed $g_p$. Under full shared access and no smaller cap, the difference is exactly $g_p$. \qedhere
\end{proof}

This channel is called \emph{protection contagion}. It is economically important even though \Cref{thm:no-local-contagion} still holds: pool $q$ has not absorbed $p$'s present loss, but it is less protected against its own future loss.

\subsection{Capability isolation}
A pool mandate includes a minimum venue-capability class and permitted backing modes. Let $\chi_v$ be the venue vector in \eqref{eq:capability-vector}. A position backed through a weaker adapter cannot be silently inserted into a stronger pool by assigning it an optimistic valuation.

\begin{definition}[Capability-homogeneous pool]
A pool is capability-homogeneous if every admitted position satisfies the same minimum custody, lien, liquidation, settlement, redemption, and withdrawal-lock requirements, or is subject to an explicit weaker-mode sublimit and separately reported capital charge.
\end{definition}

Capability isolation prevents a custodial or attested venue position from diluting a pool marketed and risk-priced for contract-controlled collateral. It does not require every venue to have identical APIs; it requires every adapter to make its trust boundary machine-readable.

\subsection{Concentration and service contagion}
Even with separate contracts, linked providers can create service contagion. If one institution is simultaneously a senior LP, LBP, primary market maker, signer operator, and venue counterparty, its failure removes several protective functions at once. Pool mandates therefore apply linked-entity concentration caps to:
\begin{itemize}[leftmargin=*]
\item senior supplied capital;
\item LBP capital;
\item bonded execution capacity;
\item venue and settlement exposure;
\item signer and keeper dependencies.
\end{itemize}
Where beneficial ownership is unobservable, caps can only operate at address or attestation scope; this limitation must be disclosed rather than assumed away.

\section{Reserve Sufficiency and Open-Interest Capacity}
\label{sec:reserve}

\subsection{Scenario-conditional sufficiency}
Let $\Omega_p^\star$ be a finite registered scenario family for pool $p$. For scenario $\omega$, let $L_p(\omega)$ be aggregate recognized shortfall after trader equity and settled position proceeds, and let $M_p(\omega)$ be enforceable incremental recovery from bonded market-maker commitments or other collateralized adapters. Define
\begin{equation}
K_p^\star=
\max_{\omega\in\Omega_p^\star}
\pospart{L_p(\omega)-M_p(\omega)}.
\label{eq:registered-required-capital}
\end{equation}
Let $R^M_p$, $J_p$, $R^P_p$, and $G_p^{\mathrm{alloc}}$ denote market reserves, LBP capital, pool reserve, and the portion of global reserve contractually allocable to $p$.

\begin{theorem}[Finite-scenario senior protection]
\label{thm:finite-scenario-protection}
If
\begin{equation}
R^M_p+J_p+R^P_p+G_p^{\mathrm{alloc}}
\ge K_p^\star,
\label{eq:finite-scenario-capital-condition}
\end{equation}
then senior principal is not impaired under any $\omega\in\Omega_p^\star$, provided the scenario's settlement and enforceability assumptions hold.
\end{theorem}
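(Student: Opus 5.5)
The plan is to fix an arbitrary scenario $\omega\in\Omega_p^\star$ and reduce the claim to the ordered waterfall of \Cref{def:waterfall}, then invoke \Cref{thm:waterfall} and \Cref{cor:junior-before-senior}.

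\textbf{Step 1 (effective shortfall).} Under the scenario's settlement and enforceability assumptions, \Cref{thm:mm-substitution} applies to each bonded commitment. Summing over commitments, enforceable delivery plus slashing contributes realized recovery at least $M_p(\omega)$. The pool-level shortfall that actually reaches the protection stack is then
\[
\ell_p(\omega)=\pospart{L_p(\omega)-M_p(\omega)}\le K_p^\star
\]
by the definition \eqref{eq:registered-required-capital}. Trader equity and position proceeds have already been consumed, by \eqref{eq:recognized-shortfall}, and pending proceeds are excluded.

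\textbf{Step 2 (waterfall).} Apply the operator with $\ell=\ell_p(\omega)$. The pre-senior layers are the position buffers, $R^M_p$, $J_p$, $R^P_p$ and $G_p^{\mathrm{alloc}}$, and their aggregate capacity is at least the left side of \eqref{eq:finite-scenario-capital-condition}. The position buffer is nonnegative, so including it only helps.

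\textbf{Step 3 (residual before senior).} Suppose, for contradiction, that the senior allocation $a_s>0$. By \Cref{cor:junior-before-senior}, every earlier layer is exhausted, so $\sum_{k<s}a_k=\sum_{k<s}H_k\ge K_p^\star\ge\ell$. Conservation \eqref{eq:waterfall-conservation} then forces $\rho_{s-1}=\ell-\sum_{k<s}a_k\le 0$, hence $a_s=0$, a contradiction. Equivalently, the telescoping bound gives $\rho_{s-1}=\pospart{\ell-\sum_{k<s}H_k}=0$ directly.

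\textbf{Main obstacle: aggregation.} The waterfall in \Cref{def:waterfall} is stated per position with per-position capacities $H_{i,k}$, whereas \eqref{eq:finite-scenario-capital-condition} is a pool aggregate. I would argue as follows.

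\begin{itemize}
\item The market reserve and LBP capital assigned to the bucket are pool resources, so the sum of per-position allocations from each layer cannot exceed that layer's pool balance.
\item A senior draw on any position requires exhausting each layer's pool-wide available balance. Shared layers are allocated by the pro-rata rule, and \Cref{prop:pro-rata-reserve} gives $\sum_i g_i=\min\{R,\sum_i\ell_i\}$, so no shared layer leaves capacity idle while a shortfall remains.
\item Consequently senior impairment in aggregate would require $\sum_i\ell_i$ to exceed the total protective capital, contradicting $\sum_i\ell_i\le K_p^\star$.
\end{itemize}

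This step needs one of two assumptions, and I will state whichever is used explicitly:
\begin{itemize}
\item the scenario's $L_p(\omega)$ already incorporates per-position assignment; or
\item the market reserve is fully allocable to the positions realizing losses in $\omega$.
\end{itemize}
Without one of these, a reserve assigned to a non-defaulting market could sit idle.

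Finally, the capacities must be measured at recognition time. Prior draws, for example global reserve depletion by other pools as in \Cref{prop:protection-contagion}, are covered by interpreting $G_p^{\mathrm{alloc}}$ as contractually allocable at that time.
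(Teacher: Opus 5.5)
Your proposal is correct and follows the paper's own argument: in each registered scenario the post-recovery residual is at most $K_p^\star$, the layers ahead of senior principal have at least that aggregate capacity, and \Cref{thm:waterfall} then allocates the whole shortfall before the senior layer. The paper's proof feeds the pool-level residual into a single waterfall with pooled capacities, so the fungibility of market reserves and LBP capital across the loss-making positions is assumed there without argument, and your aggregation step states that assumption explicitly.
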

\begin{proof}
For each registered scenario, residual shortfall after enforceable recovery is at most $K_p^\star$. By \eqref{eq:finite-scenario-capital-condition}, the aggregate capacity of layers preceding senior principal is at least that amount. \Cref{thm:waterfall} therefore allocates the entire shortfall before reaching senior principal. \qedhere
\end{proof}

The result is conditional on the registered scenario family. It is not a proof of solvency on paths outside $\Omega_p^\star$, and it does not transform an off-chain promise into enforceable recovery.

\subsection{Reserve-supported open interest}
Suppose a risk bucket has gross open interest $G$ and registered worst-case residual shortfall rate $\lambda^\star(G)$ after ordinary position recovery and bonded commitments. If $\lambda^\star$ is treated conservatively as non-decreasing in size, a necessary admission condition is
\begin{equation}
G\lambda^\star(G)
\le R^M+J+R^P+G^{\mathrm{alloc}}.
\label{eq:reserve-supported-oi}
\end{equation}
Under a constant registered rate $\bar\lambda>0$, the corresponding cap is
\begin{equation}
G_{\max}^{\mathrm{protect}}
=\frac{R^M+J+R^P+G^{\mathrm{alloc}}}{\bar\lambda}.
\label{eq:reserve-supported-oi-linear}
\end{equation}
The constant-rate expression is a planning approximation. Production admission uses size-dependent shared-book execution and scenario transforms rather than an unconditional percentage.

\subsection{Funding the protection layers}
Reserve capital may be formed from:
\begin{itemize}[leftmargin=*]
\item protocol seed capital;
\item the reserve share $\alpha_R$ of borrower interest;
\item origination or liquidation fees;
\item slashed MM bonds;
\item explicit partner contributions;
\item realized protocol surplus transferred by governance under a timelock.
\end{itemize}
LBP capital is economically different. It has an ownership claim and receives designated income in exchange for standing ahead of senior principal. Protocol reserves are not ordinary yield-bearing deposits, issue no ordinary redemption shares in the reference mechanism, and must not be simultaneously counted as lendable cash.

\subsection{Replenishment and admission response}
A reserve draw changes future admission immediately. If current protected-capital coverage falls below the mandate floor, the risk manager must reduce new-credit capacity, lower leverage tiers, require additional MM or LBP support, or place the bucket in \state{ReduceOnly}. The system may not continue admitting exposure on the assumption that future fees will replenish an already depleted reserve.

\begin{remark}[Why the word insurance is avoided]
The reserve is finite, scenario-limited, and subject to governance and asset risks. The protocol therefore uses \emph{reserve} and \emph{backstop capital}, not an unconditional insurance claim. Senior loss remains possible after exhaustion of the waterfall.
\end{remark}

\subsection{Admission before reserve}
A reserve is a last protective layer, not a substitute for executable liquidity. If ordinary hard-flat is structurally infeasible for a market or venue mode, admission fails. Otherwise the reserve would be used for routine financing losses rather than for departures from the registered operating region, making the product an underpriced unsecured credit exposure.

\section{Withdrawal Queues and Liquidity Transformation}
\label{sec:withdrawals}

Senior pool shares are marked against pool NAV, but financed positions can remain locked until repayment, liquidation, venue settlement, or redemption. A promise of instantaneous redemption at book value would therefore create an unbacked maturity transformation.

\subsection{Share-denominated requests}
An LP submits a request for $z$ senior shares. The reference mechanism escrows the shares without burning them. They remain part of total share supply and continue to participate in income, recognized loss, and NAV haircuts until actual payment. An implementation that burns shares at request time must mint an economically identical queue-claim unit that preserves the same participation; otherwise early requesters can escape later losses. Let total queued shares at time $0$ be
\begin{equation}
Z_Q=\sum_{j=1}^{n}z_j.
\end{equation}
Let $C_T$ be cumulative settled cash made available for queue service by horizon $T$, after mandatory operational and protection buffers. A service policy may be FIFO, epoch pro rata, or another deterministic rule. It must be stated before requests enter the queue.

\subsection{Conservative withdrawal NAV}
Book-value debt can overstate realizable pool value after a risk event has become publicly recognizable but before final write-down. Let
\begin{equation}
\widetilde{\mathcal A}_{p,t}
=
A_{p,t}+B_{p,t}
-H^{\mathrm{pend}}_{p,t}
-H^{\mathrm{liq}}_{p,t}
-H^{\mathrm{set}}_{p,t},
\label{eq:conservative-nav-assets}
\end{equation}
where the three non-negative holdbacks cover recognized pending-loss, liquidity, and settlement uncertainty under a versioned policy. Senior withdrawal pricing uses the senior claim on this conservative asset value, not face-value debt alone, once the corresponding risk state is active. A holdback is not a hidden reserve: it is a valuation deduction that remains auditable and is released or converted into final loss after reconciliation.

\begin{proposition}[No first-exit loss transfer after risk recognition]
\label{prop:no-first-exit}
Suppose (i) a publicly triggered holdback or finalized loss is applied before queue service, (ii) queued shares remain in total share supply until paid, and (iii) payment uses the current conservative share price. Then submitting a withdrawal request immediately before the trigger does not avoid the resulting per-share NAV reduction.
\end{proposition}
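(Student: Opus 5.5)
The argument compares a requester's payout with and without the early request and shows the two are equal per share. Fix a requester who escrows $z$ shares at some time $t_0$ before the trigger time $\tau$, and let $\tau'>\tau$ be the time at which those shares are actually paid. Let $\widetilde V^S_t$ denote the senior claim on the conservative asset value $\widetilde{\mathcal A}_{p,t}$ in \eqref{eq:conservative-nav-assets}. Let $\widetilde\pi^S_t=\widetilde V^S_t/Z^S_t$ be the conservative share price, where $Z^S_t$ includes escrowed shares.

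The steps, in order:

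\begin{enumerate}[leftmargin=*]
\item \textbf{Shares remain in supply.} By hypothesis (ii), escrowing the $z$ shares at $t_0$ changes neither $Z^S$ nor $V^S$. So the request itself is a pure relabelling: $\widetilde\pi^S$ is the same immediately before and after it. This mirrors the ratio argument in \Cref{prop:no-dilution}, with a zero increment.
\item \textbf{The trigger hits every share equally.} The holdback $H^{\mathrm{pend}}+H^{\mathrm{liq}}+H^{\mathrm{set}}$ or finalized loss applied at $\tau$ reduces $\widetilde V^S$ by some amount $\delta\ge0$. If it is a finalized loss, the amount is fixed by the waterfall of \Cref{thm:waterfall}. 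Because the denominator $Z^S_\tau$ still contains the escrowed shares, the per-share reduction is $\delta/Z^S_\tau$. This figure is identical for queued and non-queued shares.
\item \textbf{Payment happens at the reduced price.} By hypothesis (i), no queue service occurs between $t_0$ and $\tau$ that could pay the requester at the pre-trigger price. Payment is made at $\tau'$, after the trigger has been applied. By hypothesis (iii), the requester then receives $z\,\widetilde\pi^S_{\tau'}$. This already reflects the reduction from the previous step.
\item \textbf{Comparison with not requesting.} A non-requesting holder of $z$ shares has value $z\,\widetilde\pi^S_{\tau'}$ at $\tau'$. The requester's payout equals this, so the request avoided none of the per-share reduction.
\end{enumerate}

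Hypothesis (i) is exactly what rules out queue service before $\tau$. The standing premise that finalized losses cannot be reordered, from \Cref{sec:waterfall}, prevents the loss from being shifted between classes after the fact.

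The main obstacle is the timing in the third step. A FIFO policy might pay part of the queue strictly between $t_0$ and $\tau$, and that service would happen before the trigger exists. I would handle this by reading hypothesis (i) as the ordering constraint that, once the risk state is triggered, the holdback is posted in the same transition or before any subsequent service. Any payment made before $\tau$ is then a payment made before the risk was recognized, which the proposition does not cover. It should be stated explicitly that the claim concerns requests not yet paid at $\tau$.

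A second subtlety arises if intermediate payments at the pre-trigger price reduce $Z^S$ and $V^S$ proportionally. By \Cref{prop:no-dilution} applied in reverse, such a withdrawal preserves the share price. It therefore does not change the per-share incidence of the later reduction $\delta$ on the remaining shares, and the comparison in the last step still holds.
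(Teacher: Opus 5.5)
Your proof is correct and follows the same route as the paper's: escrow leaves total share supply and claim rank unchanged, the holdback or finalized loss lowers the common NAV numerator before the payment price is computed, so queued and unqueued shares bear the same per-share reduction. The paper adds ``up to explicit integer rounding'' because the payout is $\lfloor z\widetilde\pi^S\rfloor$. Your explicit restriction to requests still unpaid at the trigger matches the paper's intended scope, but tighten your last remark: earlier pre-trigger exits do raise the per-share incidence $\delta/Z^S$ on all remaining shares. They simply raise it equally for queued and unqueued shares, which is all your comparison needs.
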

\begin{proof}
Escrowing changes ownership location but not total shares or claim rank. The holdback or loss reduces the NAV supporting all outstanding shares before the payment price is computed. Therefore queued and unqueued shares experience the same per-share reduction up to explicit integer rounding. \qedhere
\end{proof}

The proposition is conditional on timely public recognition. No accounting rule can charge an LP for information the protocol has not yet observed or validly encoded. Delayed loss recognition remains a governance, oracle, and adapter risk.

\subsection{Clearance bounds}
Suppose the applicable conservative share price at every service time lies in
\begin{equation}
0<\underline\pi\le\pi_t^S\le\overline\pi<\infty.
\label{eq:queue-price-bounds}
\end{equation}

\begin{theorem}[Share-denominated queue clearance]
\label{thm:queue-clearance}
Under \eqref{eq:queue-price-bounds}:
\begin{enumerate}[leftmargin=*]
\item if $C_T\ge Z_Q\overline\pi$, the entire queue can be paid by $T$ under any service order;
\item if $C_T<Z_Q\underline\pi$, full clearance by $T$ is impossible; and
\item if $\pi_t^S\equiv\pi$ is constant during service, full clearance by $T$ is possible if and only if $C_T\ge Z_Q\pi$.
\end{enumerate}
\end{theorem}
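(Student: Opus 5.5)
The plan is to reduce all three claims to a single accounting identity for the cash needed to clear the queue, and then sandwich that identity between the price bounds in \eqref{eq:queue-price-bounds}. Let the queue be served in some deterministic order. Request $j$ is paid in possibly several tranches at service times $t$, and each tranche redeems $z_{j,t}$ shares at the conservative price $\pi^S_t$ in force at that time. Full clearance by $T$ means every escrowed share is paid, so $\sum_{j,t} z_{j,t}=Z_Q$. The total cash required is then
\[
\Pi=\sum_{j,t} z_{j,t}\,\pi^S_t .
\]
Queued shares remain in supply and are priced at payment, as in the escrow rule preceding \Cref{prop:no-first-exit}. So $\Pi$ is the only quantity that matters, and clearance is feasible exactly when the cash available at each service instant covers that tranche. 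Under the cumulative formulation of $C_T$, this reduces to $\Pi\le C_T$.

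First, summing the bounds $\underline\pi\le\pi^S_t\le\overline\pi$ against the weights $z_{j,t}\ge0$, which total $Z_Q$, gives
\[
Z_Q\underline\pi\le \Pi\le Z_Q\overline\pi .
\]
This holds for every service order and every timing of tranches.

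\textbf{Part 1.} If $C_T\ge Z_Q\overline\pi$, then $\Pi\le C_T$ whatever the order, so the whole queue can be paid. I would add a short remark that FIFO, epoch pro rata, and any other stated deterministic rule only permute or split the $z_{j,t}$, so the bound does not depend on the rule.

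\textbf{Part 2.} If $C_T<Z_Q\underline\pi$, then any schedule that clears the queue needs $\Pi\ge Z_Q\underline\pi>C_T$ units of settled cash. Cash made available for queue service cannot exceed $C_T$ by definition, so full clearance is impossible.

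\textbf{Part 3.} With a constant price $\pi$, both bounds collapse to $\Pi=Z_Q\pi$. Sufficiency is Part 1 with $\overline\pi=\pi$, and necessity is the contrapositive of Part 2 with $\underline\pi=\pi$. This gives the if-and-only-if.

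The main obstacle is timing rather than the inequalities. $C_T$ is cumulative cash by $T$, but cash may arrive after some requests would already have been served. Part 1 therefore needs the interpretation that service can wait until cash arrives and still finish by $T$. I would state this explicitly: the service policy pays whenever available cash allows, so only cumulative cash at $T$ constrains clearance. A second issue is that, with integer base units, paying $\lfloor z\pi\rfloor$ only lowers the cash requirement, so the sufficiency directions survive rounding. For the exact equivalence in Part 3 I would note that it holds in real units, and up to rounding dust in the sense of \Cref{thm:integer-closure}.
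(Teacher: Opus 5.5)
Your proposal is correct and follows the paper's own argument: bound the total redemption cost of any complete service path between $Z_Q\underline\pi$ and $Z_Q\overline\pi$, independently of service order, and compare that sandwich with $C_T$, with the constant-price case collapsing both bounds to $Z_Q\pi$. Your explicit handling of cash-arrival timing and integer rounding is extra care that the paper's short proof, and its appendix note on variable prices, leave implicit.
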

\begin{proof}
Every queued share costs at most $\overline\pi$, so $Z_Q\overline\pi$ is an upper bound on total cash required; this proves sufficiency. Every share costs at least $\underline\pi$, so $Z_Q\underline\pi$ is a lower bound; cash below it cannot clear the queue. Under constant price the exact requirement is $Z_Q\pi$, yielding necessity and sufficiency. \qedhere
\end{proof}

The theorem concerns settlement cash, not projected repayments. A dashboard may show expected service dates, but only settled free cash can move a request to \state{Paid}.

\subsection{Immediate liquidity and reserve buffers}
Let $A^W_{p,t}$ denote cash eligible for withdrawals after deducting open-position funding, protected capital, accrued protocol obligations, and minimum operating buffer. Immediate service satisfies
\begin{equation}
C_t^{\mathrm{now}}\le A^W_{p,t}.
\label{eq:withdrawal-cash-bound}
\end{equation}
Any excess claim remains queued. Borrower repayments, position closures, and senior deposits can add service cash; reserve or LBP balances do not become withdrawal cash unless their claim class is first released by the mandate.

\subsection{Run boundary}
A withdrawal queue makes delay explicit but does not eliminate run risk. If requested shares exceed expected liquid resources, share prices may fall as positions realize losses, which changes the cash required to redeem remaining shares. Emergency controls can pause new borrowing, shorten permitted position duration, raise rates, or allocate incoming cash to the queue. They cannot create settlement liquidity that the pool does not possess.

\begin{figure}[H]
\centering
\includegraphics[width=.80\textwidth]{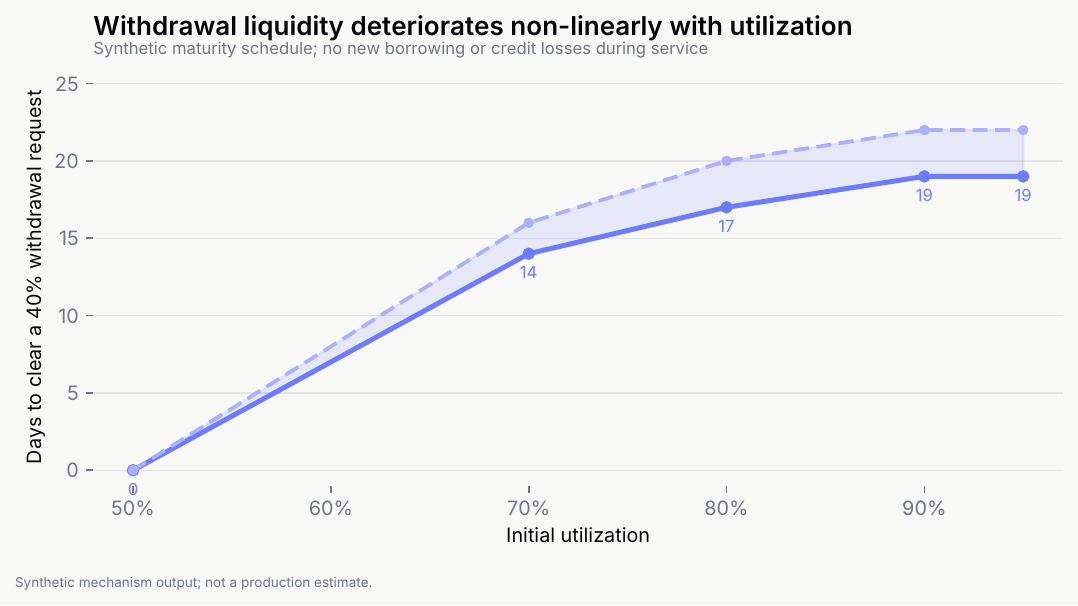}
\caption{Synthetic queue clearance after a withdrawal request equal to 40 percent of senior supply. The experiment assumes constant share price, no new borrowing, and no credit losses during service.}
\label{fig:withdrawal-queue}
\end{figure}

\subsection{Queue fairness}
FIFO minimizes rule ambiguity but can privilege early sophisticated withdrawers when loss recognition is delayed. Epoch pro-rata service distributes settlement liquidity across all requests in an epoch and reduces transaction-order races. The reference specification supports both but requires a pool to choose one at creation. Governance cannot retroactively change the priority of already queued claims.

\section{Leverage, Credit Capacity, and Admission}
\label{sec:leverage}

\subsection{Exact financing geometry}
For collateral $C$ and leverage $L>1$, gross notional and initial credit are
\begin{equation}
N=LC,
\qquad
D_0=(L-1)C.
\label{eq:leverage-credit}
\end{equation}
Hence the credit-funded fraction of gross exposure and gross exposure per unit of credit are
\begin{equation}
\delta(L)=\frac{L-1}{L}=1-\frac1L,
\qquad
\kappa(L)=\frac{L}{L-1}.
\label{eq:credit-geometry}
\end{equation}

\begin{proposition}[Credit geometry]
\label{prop:credit-geometry}
On $L>1$, $\delta(L)$ is strictly increasing and $\kappa(L)$ is strictly decreasing, with $\delta(L)\kappa(L)=1$. Thus higher leverage requires a larger financed share and supports less gross exposure per unit of scarce credit.
\end{proposition}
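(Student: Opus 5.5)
The argument is elementary calculus on the open half-line $L>1$, using only the definitions in \eqref{eq:credit-geometry}. First compute the product directly: $\delta(L)\kappa(L)=\frac{L-1}{L}\cdot\frac{L}{L-1}=1$. Both factors are finite and positive because $L>1$ makes $L-1>0$. Consequently, once monotonicity of one function is established, the reciprocal relation $\kappa=1/\delta$ gives the opposite monotonicity of the other.

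For monotonicity of $\delta$, write $\delta(L)=1-1/L$. Since $L\mapsto 1/L$ is strictly decreasing on $(0,\infty)$, $\delta$ is strictly increasing. If an explicit inequality is preferred, take $L_2>L_1>1$; then $\delta(L_2)-\delta(L_1)=\frac{1}{L_1}-\frac{1}{L_2}=\frac{L_2-L_1}{L_1L_2}>0$.

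For $\kappa$, combine $\kappa=1/\delta$ with the fact that $\delta>0$ on $L>1$, which yields strict decrease. Equivalently, $\kappa(L)=1+\frac{1}{L-1}$, which is visibly strictly decreasing. The derivative $\kappa'(L)=-1/(L-1)^2<0$ gives the same conclusion.

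Finally, interpret the statement through \eqref{eq:leverage-credit}. The financed share of notional is $D_0/N=\delta(L)$, and the gross exposure supported per unit of credit is $N/D_0=\kappa(L)$. The values $\delta(5)=0.8$ and $\kappa(5)=1.25$ confirm the numbers quoted in the introduction. The only point requiring care is restricting to $L>1$ so that $\kappa$ is defined and $\delta$ is positive. At $L=1$ the credit is zero and $\kappa$ diverges, which is why the statement excludes that case.
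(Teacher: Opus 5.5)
Your proof is correct and takes essentially the same approach as the paper, which checks $\delta(L)\kappa(L)=1$ by cancellation and gets monotonicity on $L>1$ from $\delta'(L)=1/L^2>0$ and $\kappa'(L)=-1/(L-1)^2<0$. Your explicit difference $\delta(L_2)-\delta(L_1)=(L_2-L_1)/(L_1L_2)$ and your reciprocal argument via $\kappa=1/\delta$ are only derivative-free versions of the same elementary computation.
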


\begin{table}[H]
\centering
\caption{Exact credit requirement by leverage tier.}
\label{tab:credit-geometry}
\begin{tabular}{lccc}
\toprule
Leverage & Trader-funded share & Credit-funded share & Gross OI per \$1 credit\\
\midrule
$2\times$ & 50.0\% & 50.0\% & 2.00\\
$3\times$ & 33.3\% & 66.7\% & 1.50\\
$5\times$ & 20.0\% & 80.0\% & 1.25\\
\bottomrule
\end{tabular}
\end{table}

\begin{figure}[H]
\centering
\includegraphics[width=.74\textwidth]{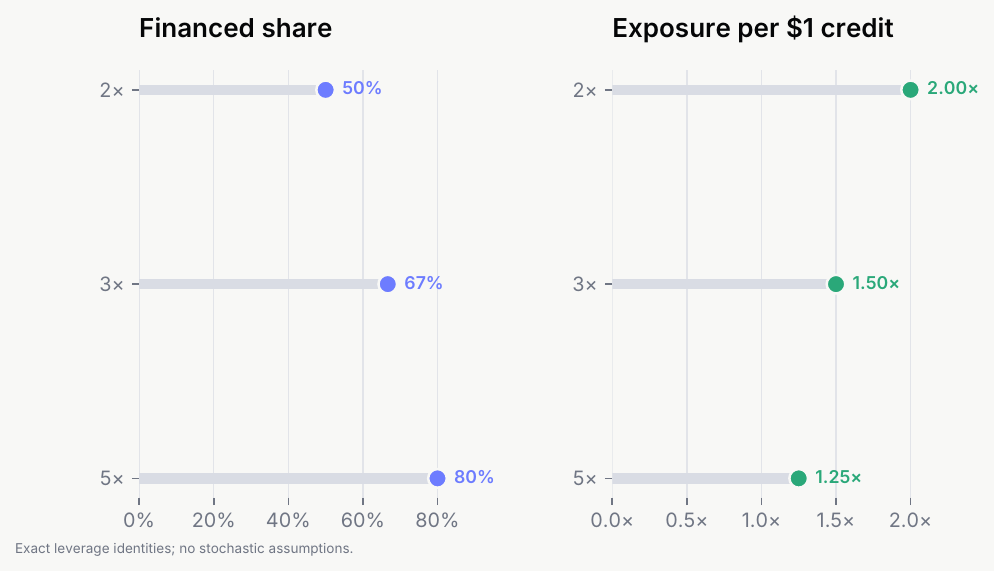}
\caption{Higher leverage increases the financed share of a position and decreases gross exposure supported by one unit of credit.}
\label{fig:capital}
\end{figure}

\subsection{Mixed-portfolio capacity}
For gross open interest $G_k$ admitted at leverage $L_k$, aggregate credit demand is
\begin{equation}
B=\sum_k G_k\frac{L_k-1}{L_k}.
\label{eq:mixed-credit-demand}
\end{equation}
Let $S^{\mathrm{lend}}$ be unencumbered senior resources after deducting withdrawal, operational, and mandate buffers.

\begin{theorem}[Credit-pool capacity]
\label{thm:credit-capacity}
A necessary accounting condition for fully funded admission of the mixed portfolio is
\begin{equation}
\sum_k G_k\frac{L_k-1}{L_k}\le S^{\mathrm{lend}}.
\label{eq:credit-capacity}
\end{equation}
For a single leverage tier $L$, the maximum gross open interest before other limits is
\begin{equation}
G_{\max}^{\mathrm{credit}}=S^{\mathrm{lend}}\frac{L}{L-1}.
\label{eq:gross-capacity-single-tier}
\end{equation}
\end{theorem}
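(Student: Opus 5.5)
The plan is to reduce the statement to the financing identity \eqref{eq:leverage-credit} applied position by position, followed by a feasibility argument over the pool's lendable cash.

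\textbf{Step 1: credit per position.} Decompose each tier's gross open interest $G_k$ into positions $i$ with collateral $C_i$ and $N_i=L_kC_i$. By \eqref{eq:leverage-credit}, each such position requires financed credit
\[
D_{i,0}=(L_k-1)C_i=N_i\,\delta(L_k).
\]
Summing over the positions in tier $k$ and then over tiers gives the aggregate requirement \eqref{eq:mixed-credit-demand}. This is exact, since $\delta(L_k)$ is constant within a tier.

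\textbf{Step 2: necessity.} Next I would argue that fully funded admission is impossible unless this aggregate is covered by $S^{\mathrm{lend}}$. Under the physical-backing requirement, every unit of $D_{i,0}$ must be an actual stablecoin transfer out of the pool. By \Cref{tab:balanced-transitions}, a borrow $b$ debits $A_p$ by $b$. Admission rules permit this debit only from $A^L_p$ net of the withdrawal, operational, and mandate buffers, that is, from $S^{\mathrm{lend}}$.

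Junior, reserve, and protocol-fee balances are excluded from lendable cash. Counting them would double-count capital already serving as protection, as noted after \Cref{tab:balanced-transitions} and after \eqref{eq:utilization}. Pending or attested value cannot fund a borrow either, by the same settled-cash logic as \Cref{prop:settled-only}.

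It follows that cumulative borrows across all admitted positions are bounded by $S^{\mathrm{lend}}$, which is \eqref{eq:credit-capacity}. If $S^{\mathrm{lend}}$ is measured at a single admission time, I would also note that repayments arriving later cannot be pre-counted.

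\textbf{Step 3: the single-tier cap.} With a single tier $L>1$, \eqref{eq:credit-capacity} becomes $G(L-1)/L\le S^{\mathrm{lend}}$. Multiplying by $\kappa(L)=L/(L-1)>0$, using \Cref{prop:credit-geometry}, gives $G\le S^{\mathrm{lend}}L/(L-1)$.

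The bound is attained when the constraint binds and no other limit is active, so this value is the maximum; this is the qualifier "before other limits." Strictly, maximality needs the bound to be achievable. I would handle this by allowing collateral $C$ to be chosen freely, or by granting divisibility up to integer rounding as in \Cref{thm:integer-closure}.

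\textbf{Main obstacle.} The only delicate point is the necessity argument in Step 2. It requires making precise that "fully funded" means the financed leg is drawn exclusively from unencumbered senior cash. That in turn rests on the encumbrance partition $A^L_p\le A_p$ rather than on any algebra.
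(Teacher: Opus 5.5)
Your proposal is correct and follows essentially the same route as the paper's proof. Each admitted position needs real credit equal to its financed fraction $(L_k-1)/L_k$ of gross notional; summing gives the aggregate demand; the pool cannot transfer more than its unencumbered lendable resources $S^{\mathrm{lend}}$; and solving the single-tier inequality gives $G^{\mathrm{credit}}_{\max}$. Your extra detail on excluded junior, reserve and fee balances, settled-cash funding, and attainability of the bound elaborates that argument rather than changing it.
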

\begin{proof}
Every admitted position requires real stablecoin credit equal to its financed fraction. Summing those amounts gives \eqref{eq:mixed-credit-demand}; the pool cannot transfer more than its unencumbered lendable resources. Solving the single-tier inequality for $G$ gives \eqref{eq:gross-capacity-single-tier}. \qedhere
\end{proof}

This condition is necessary but not sufficient. It ignores executable exit capacity, reserves, concentration, queue liquidity, and venue mode.

\subsection{The effective leverage gate}
The displayed leverage limit is the minimum of independent controls:
\begin{equation}
L_i^{\mathrm{eff}}=
\min\left\{
5,
L_i^{\mathrm{acct}},
L_e^{\mathrm{market}},
L_{i,e}^{\mathrm{exec}},
L_e^{\mathrm{time}},
L_{p,e}^{\mathrm{conc}},
L_p^{\mathrm{capital}},
L_i^{\mathrm{jur}}
\right\}.
\label{eq:effective-leverage}
\end{equation}
The absolute protocol ceiling is $5\times$. It is not an entitlement. An account may qualify for a higher tier while the selected market, book, time-to-close, or pool capital still forces a lower result.

The execution component $L_{i,e}^{\mathrm{exec}}$ is determined by the exact book-dependent certificate of \citet{nechepurenko2026axient}, including aggregate use of shared liquidity. The capital component applies both \eqref{eq:credit-capacity} and the protection-capacity condition \eqref{eq:reserve-supported-oi}. The time component compresses leverage as the hard-flat horizon approaches.

\subsection{Phased leverage policy}
A reference product rollout is:
\begin{table}[H]
\centering
\caption{Illustrative phased leverage policy. Tiers remain subject to every gate in \eqref{eq:effective-leverage}.}
\label{tab:leverage-rollout}
\begin{tabularx}{\textwidth}{lclX}
\toprule
Phase & Maximum & Account scope & Primary evidence gate\\
\midrule
Private beta & $2\times$ & Core accounts & settled execution, reconciliation, and queue performance\\
Whitelist alpha & up to $3\times$ & Advanced accounts & account history, deeper markets, measured aggregate exit capacity\\
Pro rollout & up to $5\times$ & Pro accounts & Tier A markets, stronger loss-bearing assessment, dedicated OI and reserve caps\\
\bottomrule
\end{tabularx}
\end{table}

The account labels are protocol risk tiers, not legal classifications. A jurisdiction adapter may later map them to regulated customer categories, but the protocol should not use the word ``qualified'' without an external classification process.

\subsection{Admission conjunction}
A position is admitted only if all of the following hold simultaneously:
\begin{enumerate}[leftmargin=*]
\item the trader supplies collateral and passes the account tier;
\item the venue and backing mode satisfy the pool mandate;
\item entry assets can be acquired within price and size limits;
\item the position-level robust debt-clearing certificate is feasible;
\item aggregate shared-book and bonded-MM capacities are not double counted;
\item pool cash, utilization, withdrawal buffer, and concentration limits remain valid;
\item market, LBP, reserve, and senior-capital encumbrance limits remain valid; and
\item the requested leverage does not exceed \eqref{eq:effective-leverage}.
\end{enumerate}
Failure of any conjunct produces a smaller quote or rejection. The protocol does not compensate for a failed execution certificate by silently drawing more reserve.

\section{Protocol Architecture and State Machines}
\label{sec:architecture}

\subsection{Contract modules}
The on-chain reference architecture is modular:
\begin{table}[H]
\centering
\small
\caption{Reference smart-contract modules and financial authority.}
\label{tab:contract-modules}
\begin{tabularx}{\textwidth}{lX}
\toprule
Module & Canonical responsibility\\
\midrule
\texttt{PoolFactory} & creates immutable or version-pinned pool mandates\\
\texttt{CreditPool} & senior deposits, virtual-offset shares, conservative NAV, lendable cash, debt receivables, loss-participating withdrawal queue\\
\texttt{LBPVault} & junior provider deposits, shares, encumbrance, income, and impairment ahead of senior principal\\
\texttt{ReserveVault} & segregated market, pool, and optional global reserve balances without an ordinary LP redemption path\\
\texttt{PositionManager} & position admission, collateral lock, debt-first settlement, residual release\\
\texttt{DebtManager} & borrow index, debt shares, accrual, partial settlement, idempotent repayment, and write-down\\
\texttt{RiskManager} & leverage, concentration, market, venue, time, and capital gates\\
\texttt{VenueRegistry} & capability evidence, adapter versions, backing modes, and exposure caps\\
\texttt{LiquidationManager} & admissibility, aggregate priority, bounties, and failure states\\
\texttt{MMCommitmentRegistry} & bonded capacity, allocation, performance, and slashing\\
\texttt{FeeRouter} & interest, fee, reserve, protocol, LBP, liquidator, and MM allocation\\
\texttt{GovernanceTimelock} & prospective parameter changes, emergency powers, and upgrade delay\\
\bottomrule
\end{tabularx}
\end{table}

Modules may be combined in an implementation, but their authorities remain logically separated. In particular, a router cannot burn debt shares, and a risk oracle cannot transfer LP assets.

\subsection{Position state machine}
The canonical position lifecycle is
\begin{equation}
\begin{aligned}
\state{Requested}
&\rightarrow\state{Admitted}
\rightarrow\state{FundingPending}
\rightarrow\state{Open}\\
&\rightarrow\state{ReduceOnly}
\rightarrow\state{Liquidating}
\rightarrow\state{SettlementPending}\\
&\rightarrow\state{PartiallySettled}
\rightarrow\{\state{Repaid},\state{Shortfall}\}.
\end{aligned}
\label{eq:position-state-machine}
\end{equation}
Exceptional transitions include
\begin{equation}
\begin{aligned}
\{\state{SettlementPending},\state{PartiallySettled}\}
\rightarrow{}&\{\state{SettlementFailed},\state{EvidenceConflict},\\
&\state{EventCloseException}\}.
\end{aligned}
\end{equation}
and
\begin{equation}
\state{Shortfall}\rightarrow\state{WaterfallFinalized}.
\end{equation}
No transition to \state{Repaid} occurs until debt shares are burned against recognized settled value. Duplicate settlement evidence produces no financial transition, and cumulative settlement cannot exceed matched capacity. If the position becomes fully funded spot after debt repayment, that claim can move to a separate finality/redeem state without remaining on the credit-pool balance sheet.

\subsection{Pool and queue states}
A pool operates in
\begin{equation}
\state{Active},\quad
\state{RateLimited},\quad
\state{ReduceOnly},\quad
\state{WithdrawalsOnly},\quad
\state{Paused},\quad
\state{Recovery}.
\end{equation}
Risk deterioration moves the pool monotonically toward fewer permitted actions. Emergency powers can prevent new exposure and release already-settled assets; they cannot rewrite the loss waterfall or seize residual trader equity outside the debt-first rule.

Withdrawal requests follow
\begin{equation}
\state{Requested}\rightarrow\state{Queued}\rightarrow
\state{PartiallyPaid}\rightarrow\state{Paid},
\end{equation}
with \state{Cancelled} permitted only before the request has consumed service cash and subject to the pool policy.

\subsection{Replaceable off-chain agents}
On-chain financial authority does not remove the need for computation outside the chain. Replaceable agents perform:
\begin{itemize}[leftmargin=*]
\item order-book reconstruction and robust quote generation;
\item venue and RFQ routing;
\item hard-flat and liquidation transaction submission;
\item settlement, redemption, and oracle indexing;
\item risk simulation, monitoring, and alerting.
\end{itemize}
An agent proposes a transition; contracts decide whether it is admissible. If the primary Axient backend disappears, another compatible keeper should be able to submit the same state transition from public data and receive the same result.

\subsection{Venue adapters}
Adapters translate protocol intents into venue-specific actions and evidence. Each adapter must define:
\begin{enumerate}[leftmargin=*]
\item custody location and beneficial claim;
\item order authorization and liquidation authority;
\item success evidence for match, settlement, and redemption;
\item cancellation and failure semantics;
\item maximum outstanding exposure and settlement horizon;
\item price, quantity, and asset-normalization rules;
\item whether the position is native, verifiable, attested, or synthetic backing mode.
\end{enumerate}
The core protocol does not assume support from a particular venue. An adapter that cannot provide enforceable lien or contract-controlled custody is assigned a weaker mode, lower caps, or a separate pool rather than treated as equivalent to native backing.

\section{Deterministic Verification and Synthetic Evaluation}
\label{sec:simulation-design}

\subsection{Evidence classes}
The release separates four evidence classes:
\begin{enumerate}[leftmargin=*]
\item exact-decimal fixtures for accounting identities;
\item deterministic implementation-level invariant checks;
\item a finite registered scenario grid; and
\item fixed-seed synthetic stochastic comparisons.
\end{enumerate}
Only the mathematical statements are proofs. Fixtures and code detect implementation inconsistencies; stochastic outputs compare mechanisms under author-specified distributions. None estimates production APY, loss frequency, venue liquidity, settlement reliability, or commercial viability.

\subsection{Exact fixtures and deterministic checks}
The r0.2.0 layer contains 16 exact-decimal fixtures and 3,641 checks covering proportional share minting, share minting after loss, debt-index accrual, waterfall conservation, fee routing, bonded-MM performance, queue clearance, and credit geometry at $2\times$, $3\times$, and $5\times$.

The r0.2.1 implementation-parity layer adds 12 integer-base-unit fixtures and 27,441 checks for:
\begin{itemize}[leftmargin=*]
\item partial settlement, cumulative matched-capacity bounds, and replay resistance;
\item debt-first payment of accrued interest and principal, followed by trader residual release;
\item separation of accrued receivables from settled cash;
\item reserve non-redemption and separation from LBP shares;
\item LBP withdrawal blocking while capital is encumbered;
\item loss participation by escrowed withdrawal shares until payment;
\item virtual-offset and minimum-deposit protection against zero-share inflation outcomes;
\item multiply-divide-up and multiply-divide-down bounds;
\item debt-share rounding that does not understate the obligation;
\item exact fee conservation with explicit dust; and
\item deterministic largest-remainder allocation of simultaneous global-reserve claims.
\end{itemize}
All 28 fixtures and all 31,082 deterministic checks pass. These are reference-implementation results, not evidence that untested or unaudited smart contracts are secure.

\begin{table}[H]
\centering
\small
\caption{Implementation-parity fixtures added in r0.2.1.}
\label{tab:implementation-parity}
\begin{tabularx}{\textwidth}{lX}
\toprule
Fixture class & Required property\\
\midrule
Partial settlement & unique evidence, cumulative cap, no duplicate repayment\\
Residual allocation & interest and principal first; excess to trader\\
Accrual/cash separation & unpaid interest cannot create withdrawal cash\\
Reserve/LBP separation & reserve has no LP redemption; LBP can be encumbered\\
Queued-share loss participation & payout uses current post-loss NAV\\
Initial-share guard & configured minimum deposit mints non-zero shares\\
Integer arithmetic & explicit up/down rounding and bounded dust\\
Shared reserve & exact deterministic allocation without overpayment\\
\bottomrule
\end{tabularx}
\end{table}

\subsection{Finite scenario grid}
The 84-row scenario grid crosses leverage $L\in\{2,3,5\}$, recovery ratios
\begin{equation}
\rho\in\{1.00,0.90,0.82,0.80,0.70,0.50,0.00\},
\end{equation}
and four protection configurations. Gross notional is fixed at one million settlement units. The grid makes two boundaries visible. First, $5\times$ begins with debt equal to 80 percent of gross exposure. Second, at zero recovery, no finite spread or ordinary execution fee protects senior principal unless reserves, LBP capital, or enforceable bonded capacity are present.

\subsection{Monte Carlo mechanism comparison}
The Monte Carlo experiment uses 200,000 paths and seed 20260714. Senior supply is ten million units. Utilization is drawn from a clipped $\mathrm{Beta}(5,2)$ distribution. The registered 30-day borrow curve has base rate 6 percent, kink 80 percent, first slope 12 percent, and second slope 65 percent. These are synthetic policy inputs.

For leverage tier $j$, recovery is $1-H_j$, where the synthetic haircut contains a common component, a heavy-tailed positive shock, tier-specific size effect, three registered incident channels, and an idiosyncratic beta shock. In compact form,
\begin{equation}
\begin{aligned}
H_j={}&0.012+0.055X_0+0.018[T_5]^+
+0.30I_1X_1+0.28I_2X_2\\
&+I_3(0.35+0.60X_3)+\varepsilon_j+s_j,
\end{aligned}
\label{eq:synthetic-haircut}
\end{equation}
where all distributions, incident probabilities, and tier shifts are listed in \Cref{app:synthetic-parameters}. The process is not calibrated to an observed venue.

Three leverage mixes are compared: $2\times$ only; phased $(55\%,30\%,15\%)$ across $2\times$, $3\times$, and $5\times$ gross OI; and a $5\times$-heavy mix $(25\%,25\%,50\%)$. Four capital configurations are:
\begin{enumerate}[leftmargin=*]
\item senior only;
\item senior plus a 1 percent reserve;
\item reserve plus LBP capital equal to 3 percent of senior supply; and
\item reserve plus LBP plus bonded-MM capacity credited on the higher leverage tiers.
\end{enumerate}
Interest allocations change by configuration; the protection layers are therefore not free.

\subsection{Withdrawal and contagion experiments}
The queue experiment uses 20,000 paths for each utilization--withdrawal cell, 100 equal-principal synthetic loans, an 8 percent cash buffer, and a 365-day horizon. It assumes constant share price, no new borrowing, and no credit losses during queue service. Its result is a liquidity-timing comparison, not a run model.

The contagion experiment uses 500,000 two-pool paths and seed 20260716. It compares strict pool isolation, where each pool receives its own 1 percent extra reserve, with a shared 2 percent global reserve consumed sequentially by pool one and then pool two. Common heavy-tailed shocks induce correlation. The experiment is designed to distinguish total protection from protection contagion.

\subsection{Statistical precision}
For each reported incidence $\widehat p$ based on $n$ paths, the release reports the binomial standard error
\begin{equation}
\widehat{\mathrm{se}}(\widehat p)=
\sqrt{\frac{\widehat p(1-\widehat p)}{n}}.
\end{equation}
In the Monte Carlo table below, the largest 95 percent normal-approximation half-width for senior-loss incidence is approximately 0.057 percentage points. This is simulation precision conditional on the specified generator, not model uncertainty.

\subsection{Results: capital protection stack}
\label{sec:simulation-results}
\Cref{tab:mc-phased} reports the phased leverage mix. The protection stack lowers senior-loss incidence from 1.7225 percent in the senior-only configuration to 0.3995 percent with reserve, LBP, and bonded-MM capacity. Mean senior 30-day return also falls because more interest is allocated to protection providers and MM commitments. The comparison is therefore not a free Pareto improvement.

\begin{table}[H]
\centering
\scriptsize
\caption{Fixed-seed synthetic results for the phased leverage mix; 200,000 paths. Returns and losses are percentages of the relevant initial provider capital over 30 days.}
\label{tab:mc-phased}
\resizebox{\textwidth}{!}{%
\begin{tabular}{lrrrrrr}
\toprule
Configuration & \shortstack{Senior loss\\incidence} & \shortstack{Mean senior\\loss} & \shortstack{Mean senior\\return} & \shortstack{5th pct. senior\\return} & \shortstack{Negative senior\\return} & \shortstack{Mean LBP\\return}\\
\midrule
Senior only & 1.7225\% & 0.1489\% & 1.3108\% & 0.3550\% & 0.7370\% & --\\
+ reserve & 0.7470\% & 0.1373\% & 1.1927\% & 0.3347\% & 0.4450\% & --\\
+ reserve + LBP & 0.4030\% & 0.1234\% & 1.1093\% & 0.3124\% & 0.4000\% & 2.7812\%\\
+ reserve + LBP + bonded MM & 0.3995\% & 0.1180\% & 1.0660\% & 0.3001\% & 0.3975\% & 2.8396\%\\
\bottomrule
\end{tabular}%
}
\end{table}

The LBP return is more dispersed than the senior return. Under the final configuration, 0.4155 percent of synthetic paths produce a negative 30-day LBP return, versus 0.3975 percent for senior LPs. The two incidences are close because the junior tranche is small and the registered incident process is sparse; conditional loss severity remains materially larger for the LBP.

\begin{figure}[H]
\centering
\includegraphics[width=.78\textwidth]{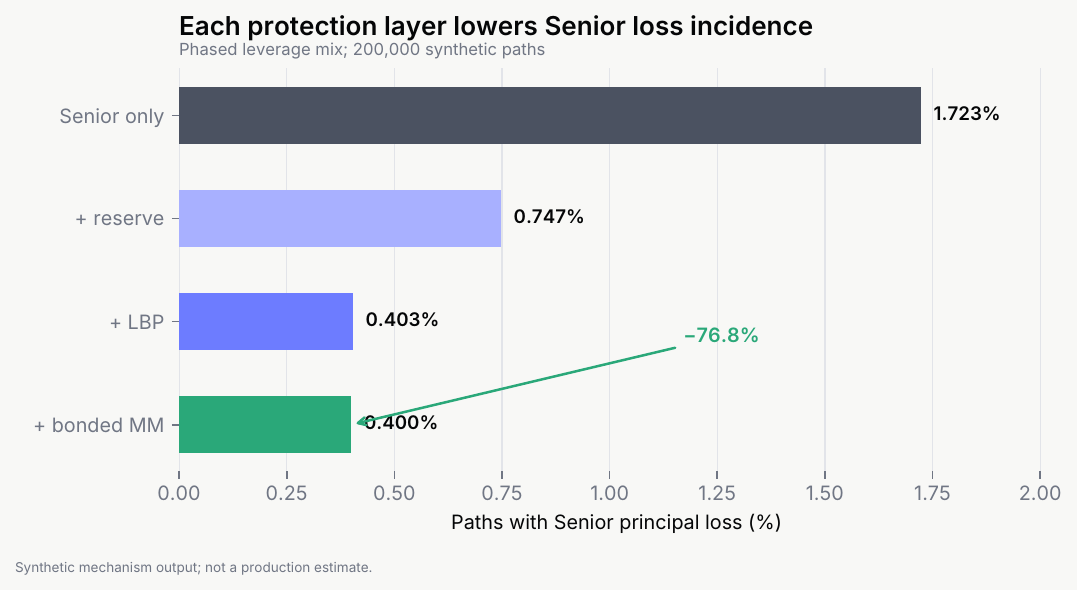}
\caption{Synthetic senior-principal loss incidence for the phased leverage mix. The values compare mechanisms under the registered generator and are not production-frequency estimates.}
\label{fig:senior-loss}
\end{figure}

\begin{figure}[H]
\centering
\includegraphics[width=.78\textwidth]{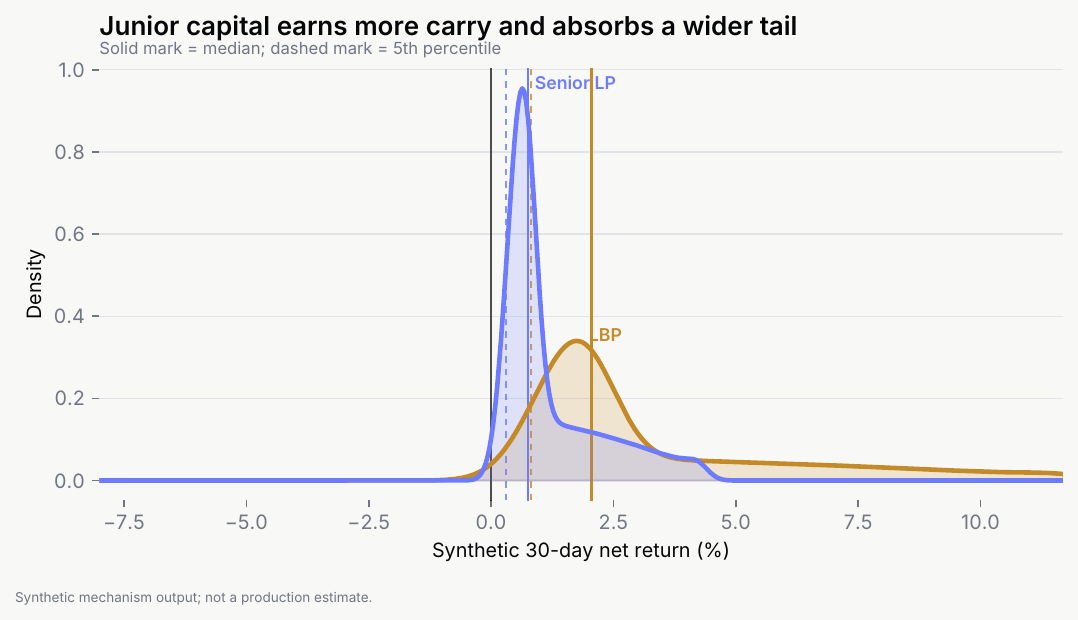}
\caption{Synthetic provider-return distributions under reserve, LBP, and bonded-MM protection. Junior capital receives more carry and absorbs a wider tail.}
\label{fig:provider-returns}
\end{figure}

\subsection{Results: leverage mix}
The protected configuration is applied to three leverage mixes in \Cref{tab:leverage-mix-results}. A $2\times$-only pool supports more gross OI per unit of senior supply and has the lowest mean senior loss. A $5\times$-heavy pool supports only 1.032 units of gross OI per unit of senior supply under the same utilization draw and produces the largest mean senior loss. The synthetic raw-shortfall incidence happens to coincide for the phased and $5\times$-heavy mixes under this generator; severity differs and should be the primary interpretation.

\begin{table}[H]
\centering
\small
\caption{Synthetic leverage-mix trade-off under reserve, LBP, and bonded-MM protection.}
\label{tab:leverage-mix-results}
\resizebox{\textwidth}{!}{%
\begin{tabular}{lrrrrr}
\toprule
Mix & \shortstack{Gross OI /\\senior supply} & \shortstack{Raw shortfall\\incidence} & \shortstack{Mean raw\\shortfall} & \shortstack{Senior loss\\incidence} & \shortstack{Mean senior\\loss}\\
\midrule
$2\times$ only & 1.427 & 0.3535\% & 0.1088\% & 0.3330\% & 0.0951\%\\
Phased & 1.199 & 1.7225\% & 0.1489\% & 0.3995\% & 0.1180\%\\
$5\times$ heavy & 1.032 & 1.7225\% & 0.1904\% & 0.4205\% & 0.1367\%\\
\bottomrule
\end{tabular}%
}
\end{table}

\begin{figure}[H]
\centering
\includegraphics[width=.80\textwidth]{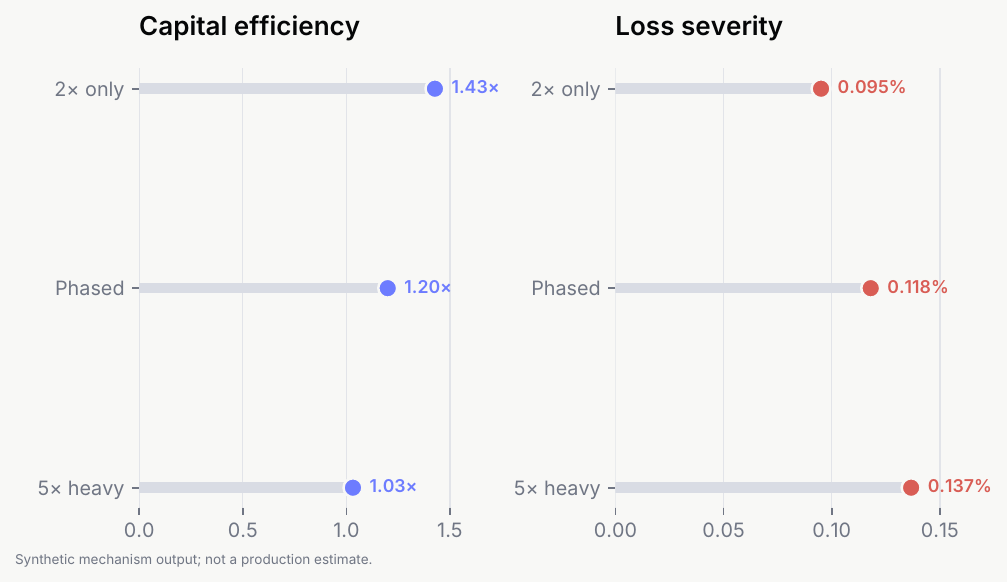}
\caption{Synthetic leverage-mix comparison. The bars use different units and should be read as a capital-efficiency and loss-severity trade-off, not a common scale.}
\label{fig:leverage-mix}
\end{figure}

\subsection{Results: withdrawal queue}
For a request equal to 40 percent of senior supply, the registered queue experiment gives:
\begin{table}[H]
\centering
\caption{Synthetic queue clearance for a 40 percent withdrawal request.}
\label{tab:queue-results}
\begin{tabular}{lrr}
\toprule
Initial utilization & Median days & 95th percentile days\\
\midrule
50\% & 0 & 0\\
70\% & 14 & 16\\
80\% & 17 & 20\\
90\% & 19 & 22\\
95\% & 19 & 22\\
\bottomrule
\end{tabular}
\end{table}
At 95 percent utilization and a 60 percent withdrawal request, the median and 95th percentile rise to 27 and 31 days. These numbers follow directly from the assumed loan-maturity generator and cannot be interpreted as expected Axient withdrawal times.

\subsection{Results: isolation versus shared reserve}
\Cref{tab:contagion-results} shows the synthetic two-pool comparison. A shared reserve lowers aggregate network senior-loss incidence because capital can move to the pool that needs it. It also makes pool two's protection depend on pool one's prior draw.

\begin{table}[H]
\centering
\small
\caption{Synthetic two-pool contagion experiment; 500,000 paths. Mean losses are percentages of one pool's senior supply.}
\label{tab:contagion-results}
\begin{tabular}{lrrr}
\toprule
Regime & \shortstack{Pool 2 senior-loss\\incidence} & \shortstack{Any network senior-loss\\incidence} & \shortstack{Mean network\\senior loss}\\
\midrule
Strict isolation & 0.9046\% & 1.8236\% & 0.1928\%\\
Shared global reserve & 0.5644\% & 1.1244\% & 0.1787\%\\
\bottomrule
\end{tabular}
\end{table}

In the shared regime, pool one draws more than its isolated fair slice on 0.93 percent of paths. Conditional on those paths, pool two's senior-loss incidence is 1.6774 percent, compared with 0.5540 percent when the first draw is at or below the fair slice. This is the protection-contagion channel of \Cref{prop:protection-contagion}.

\begin{figure}[H]
\centering
\includegraphics[width=.78\textwidth]{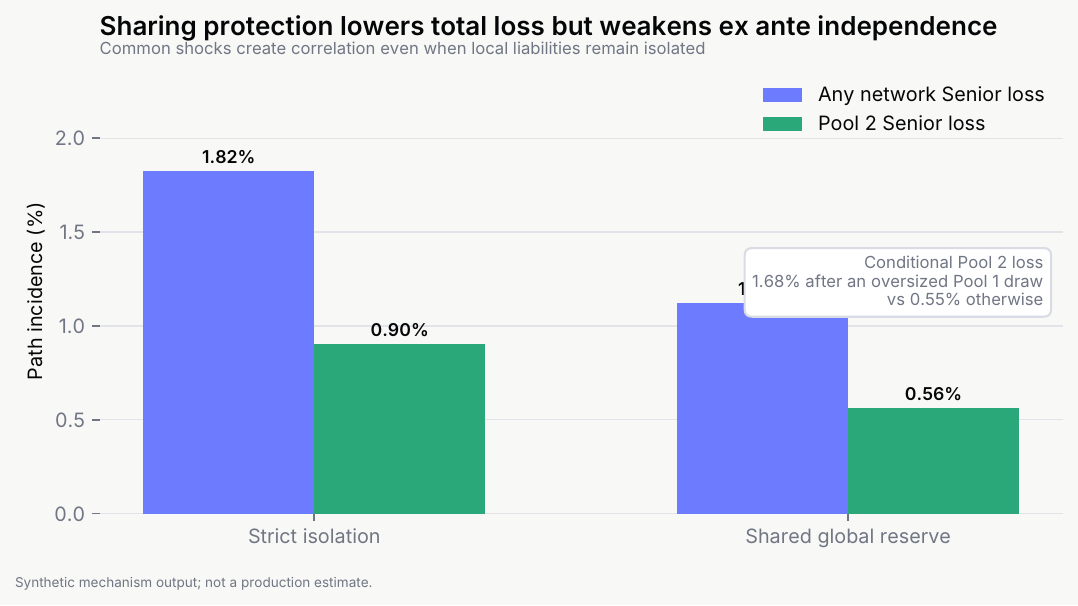}
\caption{Synthetic total-network senior-loss incidence under isolated and shared protection. Sharing lowers total loss in the experiment but weakens ex ante independence of protection.}
\label{fig:contagion}
\end{figure}

\subsection{Interpretation and negative controls}
The results support only directional mechanism statements:
\begin{enumerate}[leftmargin=*]
\item prior protection layers reduce senior impairment under the registered loss process;
\item junior protection has a cost in foregone senior or protocol carry;
\item bonded capacity matters only when collateralized and not double counted;
\item higher leverage consumes more credit and raises loss severity under the registered size effect;
\item withdrawal liquidity worsens with utilization; and
\item shared reserves exchange isolation for system-level capital efficiency.
\end{enumerate}

The senior-only configuration, zero-recovery scenario rows, uncollateralized MM promises, and strict isolation comparison are negative controls. Complete loss of executable liquidity, invalid settlement, stablecoin failure, oracle corruption, and smart-contract compromise are not assigned empirical probabilities. They remain external or failure-set assumptions.

\section{Endogenous Capital Supply and Rate Equilibrium}
\label{sec:endogenous-capital}

The preceding sections treat pool capital, LBP capacity, and market-making support as state variables. A live protocol cannot assume that those quantities remain fixed. Capital providers compare protocol compensation with outside opportunities and perceived loss; traders alter leverage demand in response to price, risk, and admission constraints; and the utilization rate is jointly determined by both sides. This section endogenizes those interactions.

\subsection{Senior provider participation}
Let $\Iset_S$ be a finite set of Senior Credit LPs. Provider $i$ has investable wealth $\bar s_i$, outside annualized return $o_i$, loss-aversion coefficient $\lambda_i\ge 0$, queue-delay aversion $\kappa_i\ge 0$, and a continuous non-decreasing participation function $G_i:\R\rightarrow[0,1]$. Let $y^S_t$ be expected Senior provider yield, $\mu^S_t$ expected impairment, $q_t$ expected queue delay, and $w_t$ the observed withdrawal rate.

\begin{definition}[Endogenous Senior supply]
\label{def:endogenous-senior-supply}
Senior supply is
\begin{equation}
S_t=\sum_{i\in\Iset_S}\bar s_i
G_i\!\left(y^S_t-o_i-\lambda_i\mu^S_t-\kappa_i q_t-\eta_i w_t\right),
\label{eq:senior-supply}
\end{equation}
where $\eta_i\ge 0$ captures withdrawal coordination or herding.
\end{definition}

Equation~\eqref{eq:senior-supply} is deliberately behavioral rather than representative-agent optimal. It allows heterogeneous reservation returns, risk tolerances, and queue sensitivity. A logistic $G_i$ is used in the synthetic agent model, but the formal results require only continuity and boundedness.

\subsection{LBP participation}
Let $\Iset_J$ denote Liquidation Backstop Providers. Provider $j$ has endowment $\bar j_j$, required return $o^J_j$, impairment aversion $\lambda^J_j$, and participation function $H_j$. If $y^J_t$ is expected LBP compensation and $\mu^J_t$ expected impairment, aggregate junior provider capital is
\begin{equation}
J_t=\sum_{j\in\Iset_J}\bar j_j
H_j\!\left(y^J_t-o^J_j-\lambda^J_j\mu^J_t-\kappa^J_j\ell_t\right),
\label{eq:lbp-supply}
\end{equation}
where $\ell_t$ is the expected lock or encumbrance cost. Unlike a reserve, $J_t$ is redeemable provider capital subject to the protocol's queue and encumbrance rules. Unlike Senior supply, it is impaired before Senior principal.

\subsection{Trader credit demand}
Let $\Iset_T$ be a finite set of traders. Trader $n$ has collateral budget $c_n$, perceived annualized edge $a_{n,t}$, risk-aversion coefficient $\gamma_n$, and rate sensitivity $\zeta_n$. For leverage tier $L\in\mathcal L$, with financed fraction $\delta(L)=(L-1)/L$, define utility
\begin{equation}
U_{n,t}(L)=L a_{n,t}
-\zeta_n r_t\delta(L)h_n
-\gamma_n \chi_t (L-1)^{\nu}
-f_t(L),
\label{eq:trader-tier-utility}
\end{equation}
where $h_n$ is expected holding time, $\chi_t$ is the trader's risk signal, $\nu>1$ gives convex leverage sensitivity, and $f_t(L)$ collects fees and impact. The selected tier is the highest-utility admissible tier with non-negative utility. Aggregate credit demand is
\begin{equation}
B_t(r)=\sum_{n\in\Iset_T}c_n\bigl(L_{n,t}(r)-1\bigr)
\one\{U_{n,t}(L_{n,t})\ge 0\}.
\label{eq:credit-demand}
\end{equation}

Demand is not automatically admitted. The admission controller constrains $B_t$ by pool cash, utilization, aggregate executable exit capacity, reserve policy, concentration limits, and venue capability.

\subsection{Rate and utilization equilibrium}
Let $r(U)$ be a continuous utilization-based rate curve and let $S(r,\mu,q,w)$ and $B(r,\chi)$ denote aggregate supply and admitted demand under fixed non-rate state variables. Define
\begin{equation}
\Phi(U)=\frac{B(r(U),\chi)}{S(r(U),\mu,q,w)}.
\label{eq:utilization-map}
\end{equation}

\begin{definition}[Capital-market fixed point]
\label{def:capital-fixed-point}
An interior capital-market equilibrium is $U^*\in[0,\bar U]$ satisfying
\begin{equation}
U^*=\Phi(U^*),
\label{eq:endogenous-utilization-fixed-point}
\end{equation}
where $\bar U<1$ is the protocol utilization ceiling.
\end{definition}

\begin{theorem}[Existence of a utilization fixed point]
\label{thm:utilization-existence}
Suppose $r$, $B$, and $S$ are continuous; $S$ is strictly positive on $[0,\bar U]$; and admission clips $\Phi$ to $[0,\bar U]$. Then at least one fixed point $U^*$ exists.
\end{theorem}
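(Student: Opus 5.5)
The plan is to apply Brouwer's fixed-point theorem in one dimension, which here is just the intermediate value theorem. I would define the clipped map
\[
\widetilde\Phi(U)=\min\Bigl\{\bar U,\max\bigl\{0,\Phi(U)\bigr\}\Bigr\},
\qquad U\in[0,\bar U].
\]
This is the map that admission actually implements. A fixed point of $\widetilde\Phi$ is the object \Cref{def:capital-fixed-point} refers to once clipping is in force.

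\textbf{Continuity of $\Phi$.} The composition $U\mapsto r(U)$ is continuous. Substituting it into $B(\cdot,\chi)$ and $S(\cdot,\mu,q,w)$, with the non-rate state variables held fixed, keeps both numerator and denominator continuous in $U$. Since $S$ is strictly positive on $[0,\bar U]$, the quotient $\Phi$ in \eqref{eq:utilization-map} is continuous on that interval. Clipping composes $\Phi$ with the continuous maps $\max\{0,\cdot\}$ and $\min\{\bar U,\cdot\}$, so $\widetilde\Phi$ is continuous and maps $[0,\bar U]$ into itself.

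\textbf{Sign change and root.} Set $f(U)=U-\widetilde\Phi(U)$, which is continuous. At the left endpoint, $f(0)=-\widetilde\Phi(0)\le 0$ because $\widetilde\Phi\ge 0$. At the right endpoint, $f(\bar U)=\bar U-\widetilde\Phi(\bar U)\ge 0$ because $\widetilde\Phi\le\bar U$. The intermediate value theorem then gives some $U^*\in[0,\bar U]$ with $f(U^*)=0$, that is, $U^*=\widetilde\Phi(U^*)$. Unlike \Cref{thm:utilization-equilibrium}, I would not try to establish uniqueness here, since no monotonicity of $B$ or $S$ is assumed.

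\textbf{Main obstacle.} The only delicate point is interpretive: a fixed point of the clipped map need not satisfy the unclipped equation \eqref{eq:endogenous-utilization-fixed-point}. It may sit at the ceiling $\bar U$ with excess demand rationed by admission, or at $0$. I would state explicitly that the theorem is read with admission clipping, as its hypothesis says. I would then remark that if $0\le\Phi\le\bar U$ already holds on the interval, the clipping is inactive and $U^*$ solves the unclipped equation. Continuity of $B$ also deserves a caveat. Demand built from discrete tier choices as in \eqref{eq:credit-demand} can jump, so the hypothesis refers to a smoothed aggregate, for example logistic tier selection as in the agent model. The proof itself simply assumes continuity, as stated.
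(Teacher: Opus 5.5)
Your proof is correct and follows essentially the same route as the paper: continuity of $\Phi$ from continuity of $r$, $B$, $S$ and strict positivity of $S$, clipping to obtain a continuous self-map of $[0,\bar U]$, and the one-dimensional Brouwer argument via the intermediate value theorem with $f(0)\le 0\le f(\bar U)$. Your explicit clipped map $\widetilde\Phi$, and your remark that a clipped fixed point may sit at a boundary without solving the unclipped equation, make precise a point the paper leaves implicit by treating $\Phi$ as already clipped.
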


\begin{proposition}[Uniqueness and local stability]
\label{prop:utilization-contraction}
If
\begin{equation}
\sup_{U\in[0,\bar U]}|\Phi'(U)|<1,
\label{eq:utilization-contraction}
\end{equation}
then the fixed point is unique. The damped update
\begin{equation}
U_{t+1}=(1-\omega)U_t+\omega\Phi(U_t),\qquad \omega\in(0,1],
\label{eq:damped-utilization}
\end{equation}
converges locally whenever $|(1-\omega)+\omega\Phi'(U^*)|<1$.
\end{proposition}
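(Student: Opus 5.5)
The plan is to treat $\Phi$ as a self-map of the compact interval $[0,\bar U]$ (after the admission clip in \Cref{thm:utilization-existence}) and use the mean value theorem for both parts. I will assume $\Phi$ is differentiable on $[0,\bar U]$, since the hypothesis \eqref{eq:utilization-contraction} presupposes $\Phi'$. If the clip introduces kinks, I will read \eqref{eq:utilization-contraction} as a Lipschitz bound $|\Phi(U_2)-\Phi(U_1)|\le k|U_2-U_1|$ with $k<1$. Clipping is the composition with a $1$-Lipschitz projection, so it preserves this bound, and the argument below uses only the Lipschitz form.

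For uniqueness, existence is already supplied by \Cref{thm:utilization-existence}. Suppose $U_1^*\neq U_2^*$ are both fixed points. By the mean value theorem,
\[
|U_1^*-U_2^*|=|\Phi(U_1^*)-\Phi(U_2^*)|\le k|U_1^*-U_2^*|<|U_1^*-U_2^*|,
\]
which is a contradiction. This argument needs a strict bound $k<1$, not just $|\Phi'|<1$ pointwise. I will get it from the supremum being strictly less than one, as stated. One could also note that pointwise $|\Phi'|<1$ already suffices for uniqueness, because the mean value point gives a strict inequality.

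For local stability, define $T(U)=(1-\omega)U+\omega\Phi(U)$. Its fixed points coincide with those of $\Phi$ because $\omega>0$. Moreover, $T$ maps $[0,\bar U]$ into itself, since it is a convex combination of points in that interval. Then $T'(U^*)=(1-\omega)+\omega\Phi'(U^*)$, and the hypothesis gives $|T'(U^*)|<1$.

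The key step, and the one I expect to need the most care, is turning this derivative bound into a contraction on a neighbourhood. For that I need $\Phi'$ to be continuous at $U^*$, which I will take as the natural regularity condition (continuously differentiable $r$, $B$, $S$). Choose $q$ with $|T'(U^*)|<q<1$. By continuity there is $\varepsilon>0$ such that $|T'(U)|\le q$ on $N_\varepsilon=[U^*-\varepsilon,U^*+\varepsilon]\cap[0,\bar U]$. The mean value theorem then gives $|T(U)-U^*|\le q|U-U^*|$ for $U\in N_\varepsilon$, so $N_\varepsilon$ is invariant under $T$. By induction, $|U_t-U^*|\le q^t|U_0-U^*|\to0$.

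Finally, I will remark that under \eqref{eq:utilization-contraction} with $\omega=1$ convergence is actually global, since $T=\Phi$ is then a contraction on the whole interval. For any $\omega\in(0,1]$, the condition $|(1-\omega)+\omega\Phi'(U^*)|<1$ holds automatically whenever $|\Phi'(U^*)|<1$.
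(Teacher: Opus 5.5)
Your proposal is correct and takes essentially the paper's route: the mean value theorem turns the supremum bound into a contraction estimate that forces uniqueness (the paper invokes Banach's theorem, while you carry out its uniqueness step directly), and local convergence of the damped map $T$ follows from $|T'(U^*)|<1$, which you spell out where the paper just calls it the standard one-dimensional stability condition. One refinement: you do not need $\Phi'$ to be continuous at $U^*$, because differentiability of $T$ at $U^*$ alone gives $|T(U)-U^*|\le q|U-U^*|$ on a small neighbourhood for any $q\in(|T'(U^*)|,1)$, and that is all your invariance-and-induction step uses.
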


The condition can fail when trader demand is highly rate-inelastic, provider supply responds discontinuously to loss signals, or queued withdrawals create a strong feedback from delay into supply. The protocol should therefore report convergence diagnostics rather than assume a unique equilibrium.

\subsection{Risk spread beyond utilization}
A utilization-only rate can move in the wrong direction under stress. Let $\mu$ denote expected credit loss and suppose stress reduces admitted demand and therefore utilization, $\partial U/\partial\mu<0$. If $r=r_U(U)$ with $r_U'>0$, then
\begin{equation}
\frac{\partial r}{\partial\mu}=r_U'(U)\frac{\partial U}{\partial\mu}<0.
\label{eq:wrong-way-rate}
\end{equation}
The borrow rate falls even though expected loss rises. A risk-aware rate
\begin{equation}
r(U,\mu)=r_U(U)+r_R(\mu)
\label{eq:risk-aware-rate}
\end{equation}
avoids this wrong-way response if
\begin{equation}
r_R'(\mu)>-r_U'(U)\frac{\partial U}{\partial\mu}.
\label{eq:risk-spread-condition}
\end{equation}
The agent-based results in \Cref{sec:agent-results} show this issue directly: the stress regime has lower mean utilization and a lower mean utilization-only borrow rate than the normal regime despite higher realized impairment.

\subsection{Endogenous credit capacity across leverage tiers}
Let $w_L$ be the share of gross admitted exposure at tier $L$ and let $S_t^{\mathrm{lend}}$ be lendable Senior capital. Under utilization cap $\bar U$, credit capacity is $\bar U S_t^{\mathrm{lend}}$. The corresponding maximum gross open interest is
\begin{equation}
G_t^{\max}=\frac{\bar U S_t^{\mathrm{lend}}}{\sum_{L\in\mathcal L}w_L\delta(L)}.
\label{eq:endogenous-gross-capacity}
\end{equation}

\begin{proposition}[Leverage-mix capital crowding]
\label{prop:leverage-mix-crowding}
Holding lendable Senior capital and the utilization cap fixed, shifting gross-exposure weight from a lower leverage tier to a higher tier weakly reduces $G_t^{\max}$ whenever $\delta(L)$ is increasing in $L$.
\end{proposition}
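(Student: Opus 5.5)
The argument is a direct monotonicity computation on the denominator of \eqref{eq:endogenous-gross-capacity}. Write the weighted financed fraction as $\bar\delta(w)=\sum_{L\in\mathcal L}w_L\delta(L)$, where the weights are non-negative and sum to one. Then $G_t^{\max}=\bar U S_t^{\mathrm{lend}}/\bar\delta(w)$. The numerator is held fixed by hypothesis and is non-negative, so it is enough to show that $\bar\delta$ weakly increases under the shift. Since $x\mapsto c/x$ is non-increasing on $x>0$ for $c\ge0$, this gives the result.

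A shift of mass $\epsilon\in(0,w_{L_1}]$ from a lower tier $L_1$ to a higher tier $L_2>L_1$ produces new weights $w'_{L_1}=w_{L_1}-\epsilon$ and $w'_{L_2}=w_{L_2}+\epsilon$. All other weights are unchanged, so $w'$ is still a probability vector. Then
\[
\bar\delta(w')-\bar\delta(w)=\epsilon\bigl(\delta(L_2)-\delta(L_1)\bigr)\ge0
\]
because $\delta$ is increasing. Under the paper's leverage tiers this follows directly from \Cref{prop:credit-geometry}, which gives strict increase of $\delta(L)=1-1/L$ on $L>1$. It holds weakly at $L=1$ as well, since $\delta(1)=0<\delta(L)$ for $L>1$.

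Positivity of the denominator needs one remark. If some admitted tier has $L>1$ with positive weight, then $\bar\delta(w)>0$, and after the shift $\bar\delta(w')\ge\bar\delta(w)>0$. So $G^{\max}_t$ is finite and the reciprocal comparison applies. The degenerate all-$1\times$ portfolio has $\bar\delta=0$ and $G^{\max}=\infty$; under the convention $c/0=+\infty$, any shift weakly reduces $G^{\max}$ there too.

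More general reallocations that move weight from lower to higher tiers can be handled by composing finitely many pairwise shifts, or equivalently by first-order stochastic dominance of $w'$ over $w$ combined with monotonicity of $\delta$. I expect no real obstacle. The only care points are stating exactly what ``shift'' means, and handling the boundary case where the denominator vanishes.
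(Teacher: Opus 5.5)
Your proposal is correct and follows essentially the same route as the paper: write the denominator of \eqref{eq:endogenous-gross-capacity} as the weighted average $\sum_L w_L\delta(L)$, note that moving mass $\epsilon$ from a lower to a higher tier changes it by $\epsilon\,[\delta(L_2)-\delta(L_1)]\ge 0$, and conclude that the reciprocal with a fixed numerator weakly falls. Your additional remarks are sound refinements that the paper leaves implicit: positivity of the denominator, the degenerate all-$1\times$ case, and the extension to general upward reallocations via first-order stochastic dominance.
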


Thus a $5\times$ tier is not merely a higher-risk product feature. It uses more credit per unit of gross exposure and can crowd out lower-leverage demand even before risk limits bind.

\section{Strategic Market Making and Liquidation Competition}
\label{sec:strategic-execution}

Executable market liquidity and liquidation execution are supplied by self-interested agents. Treating displayed depth, bonded capacity, or liquidator participation as fixed can overstate the protection available to the credit pool. This section formalizes the relevant incentives.

\subsection{Revocable and bonded market-making capacity}
For market maker $k$, let $m^o_{k,t}$ be ordinary displayed capacity and $m^b_{k,t}$ be bonded commitment capacity. Let $\pi^{\mathrm{spr}}_{k,t}$ be expected spread and rebate income, $\pi^{\mathrm{com}}_{k,t}$ commitment premium, $c^{\mathrm{inv}}_{k,t}$ inventory cost, $c^{\mathrm{as}}_{k,t}$ adverse-selection cost, and $c^{\mathrm{lock}}_{k,t}$ capital lock cost. Open quote participation satisfies
\begin{equation}
V^o_{k,t}=\pi^{\mathrm{spr}}_{k,t}-c^{\mathrm{inv}}_{k,t}-c^{\mathrm{as}}_{k,t}.
\label{eq:mm-open-utility}
\end{equation}
A market maker can withdraw ordinary depth whenever its private continuation value becomes negative. This is why visible book depth is an observation, not a guarantee.

A bonded commitment adds a bond $b_{k,t}$ and a delivery obligation. Let $g_{k,t}$ be the private gain from refusing execution when called and let $s_{k,t}\le b_{k,t}$ be enforceable slashing.

\begin{proposition}[Commitment incentive compatibility]
\label{prop:mm-incentive-compatibility}
Delivery weakly dominates strategic non-delivery whenever
\begin{equation}
\pi^{\mathrm{com}}_{k,t}+\pi^{\mathrm{exec}}_{k,t}+s_{k,t}
\ge g_{k,t}+c^{\mathrm{exec}}_{k,t},
\label{eq:mm-incentive-condition}
\end{equation}
where $\pi^{\mathrm{exec}}$ is execution compensation and $c^{\mathrm{exec}}$ is delivery cost. If $g_{k,t}$ is unbounded or cannot be verified, no finite bond establishes universal delivery.
\end{proposition}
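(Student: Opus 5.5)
The plan is to compare the market maker's payoff under the two actions available when the commitment is called, then show that the difference is non-negative exactly under \eqref{eq:mm-incentive-condition}. I will hold fixed everything the action does not affect: ordinary spread income from \eqref{eq:mm-open-utility}, inventory, and adverse-selection costs already sunk before the call. Both branches then share a common baseline $V_0$.

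Under \emph{delivery}, the market maker keeps the commitment premium $\pi^{\mathrm{com}}_{k,t}$, earns $\pi^{\mathrm{exec}}_{k,t}$, and pays $c^{\mathrm{exec}}_{k,t}$. Its bond is returned intact, so its payoff is $V_0+\pi^{\mathrm{com}}+\pi^{\mathrm{exec}}-c^{\mathrm{exec}}$.

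Under \emph{strategic non-delivery}, it captures the private gain $g_{k,t}$ but suffers enforceable slashing $s_{k,t}\le b_{k,t}$. It also forfeits the premium, because the commitment contract pays $\pi^{\mathrm{com}}$ only for an earned commitment, as in the premium rule after \eqref{eq:interest-weights}. Its payoff is therefore $V_0+g_{k,t}-s_{k,t}$.

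Subtracting the two payoffs, delivery weakly dominates iff
\[
\pi^{\mathrm{com}}+\pi^{\mathrm{exec}}-c^{\mathrm{exec}}\ge g-s,
\]
which rearranges to \eqref{eq:mm-incentive-condition}. Since the comparison holds pointwise for each realization of $(g,s)$ and every other component is common to both branches, this is weak dominance and not merely dominance in expectation.

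The step I expect to need the most care is the accounting convention for the premium. If the premium were paid up front and were non-forfeitable, it would cancel from both branches. The condition would then be sufficient but not tight. It would still hold, because dropping a non-negative term from the left-hand side only weakens the requirement, but I will state that interpretation explicitly.

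For the final claim I argue by contradiction. Any enforceable transfer satisfies $s\le b$, so for a fixed finite bond the left-hand side of \eqref{eq:mm-incentive-condition} is bounded above by $\pi^{\mathrm{com}}+\pi^{\mathrm{exec}}+b$. If $g_{k,t}$ is unbounded, some admissible state has
\[
g_{k,t}>\pi^{\mathrm{com}}+\pi^{\mathrm{exec}}+b-c^{\mathrm{exec}},
\]
and there non-delivery strictly dominates. So no finite bond secures delivery in every state.

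If $g$ is unverifiable, the bond and slashing rule cannot be conditioned on $g$. The same supremum argument then applies to the set of gains consistent with observable data. If that set is unbounded, universal delivery again fails. This is consistent with the narrow reading of \Cref{thm:mm-substitution}, which protects recovery only through enforceable slashing and not through delivery itself.
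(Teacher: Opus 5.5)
Your main argument is the paper's proof. You compare the delivery payoff $\pi^{\mathrm{com}}+\pi^{\mathrm{exec}}-c^{\mathrm{exec}}$ with the non-delivery payoff $g-s$ and rearrange to \eqref{eq:mm-incentive-condition}. For the second claim, you use $s\le b$ to bound the left-hand side, so an unbounded $g$ violates the inequality in some state. The common baseline $V_0$, the pointwise rather than expected comparison, and your reduction of the unverifiability clause to unboundedness of the set of gains consistent with observable data are harmless refinements. The paper's proof in fact only treats the unbounded case.

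The one thing to fix is your remark about a non-forfeitable premium, which has the direction backwards. If $\pi^{\mathrm{com}}$ were paid irrespective of the action, it would cancel from both branches. The correct dominance condition would then be $\pi^{\mathrm{exec}}+s\ge g+c^{\mathrm{exec}}$. That is \emph{stronger} than \eqref{eq:mm-incentive-condition}: removing a non-negative term from the left-hand side of a $\ge$ makes the requirement harder to meet, not easier. Under that convention the stated condition would not be sufficient. For example, $\pi^{\mathrm{com}}=1$, $\pi^{\mathrm{exec}}=s=c^{\mathrm{exec}}=0$, $g=1/2$ satisfies \eqref{eq:mm-incentive-condition}, yet non-delivery pays $1/2$ more once the premium is kept in both branches. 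The forfeiture convention is therefore a necessary modeling assumption for the proposition, not a question of tightness. It is exactly what the paper's proof assumes when it credits $\pi^{\mathrm{com}}+\pi^{\mathrm{exec}}$ to the delivery branch alone. State it as such, and drop the claim that the result survives otherwise.
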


The result does not make bonded capacity equivalent to cash. Capacity substitutes for reserve only to the extent that delivery or slashing is enforceable, settlement is final, and the bond is denominated in an asset that retains value during the same shock.

\subsection{Liquidator entry and capacity}
Liquidator $\ell$ has executable capacity $q_{\ell,t}$, private execution cost $c_{\ell,t}(q)$, bounty rate $b_t$, and estimated success probability $p_{\ell,t}^{\mathrm{succ}}$. The liquidator enters when
\begin{equation}
b_t q_{\ell,t}p_{\ell,t}^{\mathrm{succ}}
\ge c_{\ell,t}(q_{\ell,t}).
\label{eq:liquidator-entry}
\end{equation}
Let $\mathcal A_t$ be the active liquidator set and $Q_t$ required liquidation notional.

\begin{theorem}[Sufficient liquidation coverage]
\label{thm:liquidator-coverage}
If the venue adapter admits the required risk-reducing execution and
\begin{equation}
\sum_{\ell\in\mathcal A_t}q_{\ell,t}\ge Q_t,
\label{eq:liquidator-capacity-condition}
\end{equation}
then aggregate submitted capacity is sufficient to cover the liquidation need. The theorem does not guarantee venue fill, settlement, or price: those remain adapter and market-state conditions.
\end{theorem}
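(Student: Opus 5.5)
The plan is to treat \Cref{thm:liquidator-coverage} as a capacity-aggregation statement in the same style as \Cref{thm:credit-capacity} and \Cref{thm:queue-clearance}, part 1. We do not argue about fills. We construct an explicit allocation of the required notional $Q_t$ across active liquidators that respects each liquidator's capacity. Only after that do we invoke the adapter hypothesis to say that each allocated piece is an admissible action.

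\emph{Step 1: active set.} Fix the active set $\mathcal A_t$, determined by the entry condition \eqref{eq:liquidator-entry}. The theorem takes $\mathcal A_t$ as given, so no equilibrium argument about entry is needed.

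\emph{Step 2: allocation.} Order $\mathcal A_t$ deterministically, for instance by identifier, in the same spirit as the shared-book priority rule. Define a greedy allocation
\[
x_{\ell}=\min\Bigl\{q_{\ell,t},\ \pospart{Q_t-\textstyle\sum_{\ell'<\ell}x_{\ell'}}\Bigr\}.
\]
This is exactly the ordered waterfall operator of \Cref{def:waterfall}, with capacities $H_k=q_{\ell_k,t}$ and initial residual $\ell=Q_t$.

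\emph{Step 3: coverage.} By \Cref{thm:waterfall}, we have $0\le x_\ell\le q_{\ell,t}$ and $\sum_\ell x_\ell+\rho_K=Q_t$. We must show that the final residual $\rho_K$ is zero. Suppose instead $\rho_K>0$. Then every layer is exhausted, by the same argument as in \Cref{cor:junior-before-senior}, so $\sum_\ell q_{\ell,t}<Q_t$. This contradicts \eqref{eq:liquidator-capacity-condition}. Hence $\sum_\ell x_\ell=Q_t$, and no liquidator is assigned more than it can execute. This is the meaning of ``aggregate submitted capacity is sufficient.''

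\emph{Step 4: admissibility.} Invoke the hypothesis that the adapter admits the required risk-reducing execution, together with the definition of admissible liquidation. Each partial action $x_\ell$ is a weakly risk-reducing transition. Validity is enforced by the contract independently of executor identity, by \Cref{thm:role-separability}. The allocation is therefore a valid set of submissions covering $Q_t$.

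The main obstacle is not the arithmetic but the statement's scope. The theorem must not assert fill, price, or settlement, and the divisibility of $Q_t$ into liquidator-sized admissible pieces must come from the adapter hypothesis rather than be assumed silently. We would state that dependence explicitly. We would also note that a liquidator with $q_{\ell,t}=0$ simply receives a zero allocation. Finally, the converse direction is not claimed: capacity below $Q_t$ leaves a residual $\rho_K>0$, which then enters the loss waterfall.
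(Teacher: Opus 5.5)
Your proof is correct and follows the paper's argument. The paper likewise observes that, because active capacity is at least $Q_t$, the required notional can be partitioned into non-negative pieces, each bounded by the corresponding $q_{\ell,t}$, with the adapter hypothesis supplying admissibility and no claim about fill, price, or settlement; your greedy construction via the waterfall operator of \Cref{def:waterfall} (equivalently, $\rho_K=\pospart{Q_t-\sum_{\ell}q_{\ell,t}}=0$ by \Cref{lem:waterfall-closed-form}) simply makes that partition explicit, and the appeal to \Cref{thm:role-separability} in Step 4 is not needed.
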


Permissionless submission at the Axient contract boundary therefore means that any address can trigger a valid protocol transition. It does not imply that an external operator-gated venue must accept an order from any address.

\subsection{Capacity-induced demand and the backstop paradox}
Enforceable execution capacity can protect existing positions, but it can also raise the amount of credit admitted. Let $M$ denote backstop capacity, $D(M)$ admitted financed exposure, and $h(M)$ expected shortfall rate per admitted unit, with $D'(M)\ge 0$ and $h'(M)\le 0$. Aggregate expected shortfall is
\begin{equation}
\mathcal L(M)=D(M)h(M).
\label{eq:capacity-shortfall}
\end{equation}

\begin{proposition}[Capacity-induced shortfall]
\label{prop:capacity-induced-shortfall}
Even when additional bonded capacity reduces unit shortfall, aggregate shortfall increases whenever
\begin{equation}
\frac{D'(M)}{D(M)}> -\frac{h'(M)}{h(M)}.
\label{eq:capacity-paradox-condition}
\end{equation}
\end{proposition}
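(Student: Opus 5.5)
The plan is to differentiate $\mathcal L(M)=D(M)h(M)$ directly and compare the two terms through logarithmic derivatives. I will assume $D$ and $h$ are differentiable at the capacity level $M$ under consideration. I will also assume $D(M)>0$ and $h(M)>0$, since otherwise the ratios in \eqref{eq:capacity-paradox-condition} are undefined. Under these assumptions $\mathcal L(M)>0$.

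By the product rule,
\[
\mathcal L'(M)=D'(M)h(M)+D(M)h'(M).
\]
Dividing by $\mathcal L(M)=D(M)h(M)>0$ gives
\[
\frac{\mathcal L'(M)}{\mathcal L(M)}=\frac{D'(M)}{D(M)}+\frac{h'(M)}{h(M)}.
\]
Because $\mathcal L(M)>0$, the sign of $\mathcal L'(M)$ is the sign of the right-hand side. Condition \eqref{eq:capacity-paradox-condition} says exactly that this right-hand side is strictly positive. Hence $\mathcal L'(M)>0$, so aggregate expected shortfall is locally strictly increasing in $M$.

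The qualifier ``even when additional bonded capacity reduces unit shortfall'' corresponds to allowing $h'(M)<0$. In that case $-h'(M)/h(M)>0$, and condition \eqref{eq:capacity-paradox-condition} demands that the proportional growth of admitted exposure $D'/D$ exceed the proportional fall in unit shortfall $-h'/h$. The elasticity form of the argument makes this trade-off explicit: a positive total derivative can coexist with a negative unit-risk derivative.

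The main obstacle is interpretive rather than computational, namely what ``increases'' means. The derivative argument only yields a local strict increase at $M$. If the condition holds on an interval $[M_0,M_1]$, the mean value theorem (equivalently, integrating $(\log\mathcal L)'>0$ on that interval) gives $\mathcal L(M_1)>\mathcal L(M_0)$. I would state this interval version as the discrete-increment reading and note that it uses no property of the waterfall or the admission controller beyond the product structure in \eqref{eq:capacity-shortfall}.
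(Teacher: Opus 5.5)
Your proof is correct and follows the paper's argument: apply the product rule to $\mathcal L(M)=D(M)h(M)$, then use $D(M),h(M)>0$ to see that $\mathcal L'(M)>0$ is equivalent to $D'(M)/D(M)>-h'(M)/h(M)$. Your interval version, which gives a finite increase $\mathcal L(M_1)>\mathcal L(M_0)$ when the condition holds on $[M_0,M_1]$, is a harmless strengthening that the paper leaves implicit.
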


This condition is the execution-capacity analogue of risk compensation. A protocol that admits more debt because a market maker posts a modest bond can end with no improvement in realized shortfall. In the registered agent experiment, the bonded-MM hypothesis fails: normalized shortfall during strategic-withdrawal episodes is 4.2 percent worse in the adaptive configuration than in the segmented configuration. The failure is reported rather than tuned away.

\subsection{Admission implication}
Bonded capacity should be credited through a haircut and a concentration limit:
\begin{equation}
M_t^{\mathrm{credit}}=\sum_k \omega_{k,t}
\min\{m^b_{k,t},\,b_{k,t}/\sigma_{k,t}\},
\qquad 0\le\omega_{k,t}\le 1,
\label{eq:mm-admission-credit}
\end{equation}
where $\sigma_{k,t}$ converts bond value into settlement-value coverage and $\omega_{k,t}$ reflects delivery history, asset quality, and correlation with the protected risk. The admission controller must then limit incremental credit so that backstop-induced demand does not exceed the protected execution increment.

\section{Withdrawal Runs, Dynamic Reserves, and Common-Factor Contagion}
\label{sec:runs-reserves}

Credit pools transform liquid provider claims into capital deployed against positions that may not be immediately unwindable. Even when queued shares remain loss participating, expectations about delay can generate coordinated withdrawal demand. Reserves and pool isolation mitigate some channels but cannot eliminate common-factor risk.

\subsection{Withdrawal response map}
Let $x_{i,t}\in[0,1]$ be provider $i$'s requested-withdrawal fraction. A reduced-form response is
\begin{equation}
x_{i,t+1}=G_i\!\left(
 o_i-y^S_t+\lambda_i\mu^S_t+\kappa_i q_t+\eta_i\bar x_t
\right),
\label{eq:withdrawal-response}
\end{equation}
where $\bar x_t$ is the aggregate observed withdrawal rate. Aggregating gives a map
\begin{equation}
\bar x_{t+1}=F(\bar x_t).
\label{eq:aggregate-run-map}
\end{equation}

\begin{proposition}[Stable withdrawal state]
\label{prop:run-contraction}
If $F$ maps $[0,1]$ into itself and $\sup_x|F'(x)|<1$, the withdrawal state is unique and globally stable under iteration.
\end{proposition}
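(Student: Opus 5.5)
The plan is a direct application of the Banach fixed-point theorem on the complete metric space $[0,1]$. First, fix regularity. The statement uses $F'$, so I will take $F$ to be differentiable on $[0,1]$, one-sided at the endpoints. This holds, for instance, when each response function $G_i$ in \eqref{eq:withdrawal-response} is differentiable, as with the logistic specification used in the agent model, and the aggregation defining $F$ is a fixed weighted average of the individual responses. Set $k=\sup_{x\in[0,1]}|F'(x)|$, so $k<1$ by hypothesis.

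The first step is to show that $F$ is a contraction. For $x,y\in[0,1]$ the mean value theorem gives some $\xi$ between them with $|F(x)-F(y)|=|F'(\xi)|\,|x-y|\le k|x-y|$. Since $[0,1]$ is convex, $\xi$ lies in the domain. Since $F([0,1])\subseteq[0,1]$ by hypothesis, $F$ is a self-map of a closed subset of $\R$, and that subset is complete.

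The second step delivers both conclusions. Banach's theorem gives existence of a fixed point $\bar x^*=F(\bar x^*)$. It also gives uniqueness directly: if there were two fixed points $x^*\neq y^*$, then $|x^*-y^*|=|F(x^*)-F(y^*)|\le k|x^*-y^*|<|x^*-y^*|$, which is a contradiction. For global stability, take any initial $\bar x_0\in[0,1]$ and iterate \eqref{eq:aggregate-run-map}. Then $|\bar x_t-\bar x^*|=|F(\bar x_{t-1})-F(\bar x^*)|\le k^t|\bar x_0-\bar x^*|\le k^t\to0$. So every path converges to the same state geometrically, at rate $k$.

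The mathematics here is routine; the main obstacle is making the hypotheses meaningful. One point needs care: a bound that holds only pointwise would still give the contraction, but I am using that the supremum $k$ itself is strictly below one. If $\sup|F'|=1$ were approached without being attained, uniqueness would survive (a pointwise bound gives $|F(x)-F(y)|<|x-y|$ on compact $[0,1]$), yet the geometric rate would be lost. Because of this I will state the result with $k<1$ as the operative constant.

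I will also add a remark that the hypothesis is substantive. With logistic $G_i$ of slope parameter $\theta_i$, one has $F'(x)\approx\sum_i w_i\theta_iG_i(1-G_i)\eta_i$, so large herding coefficients $\eta_i$ can violate the bound. In that case multiple run equilibria are possible, which is the failure boundary the paper flags.
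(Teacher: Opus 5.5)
Your proof is correct and follows the paper's own argument: the derivative bound, via the mean value theorem, makes $F$ a contraction on the complete space $[0,1]$, and Banach's theorem then gives a unique fixed point and convergence from every initial withdrawal rate. The explicit geometric rate $k^t$ and the regularity caveat are harmless additions. The one loosely worded point is your aside that a merely pointwise bound ``would still give the contraction'': it gives only a contractive (not $k$-Lipschitz) map, for which uniqueness and, on compact $[0,1]$, even global convergence still hold, but without a guaranteed rate.
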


If peer sensitivity is sufficiently large that $F'$ exceeds one over part of the state space, multiple fixed points or run-like transitions can arise. The queue removes request-time priority over known losses, but it does not remove coordination around future delay or future impairment.

\subsection{Loss-participating queues}
The r0.2.1 mechanism keeps queued shares outstanding until payment. If two equal providers have identical shares immediately before a loss, one cannot escape that loss merely by requesting withdrawal one instant earlier.

\begin{proposition}[Request-time neutrality]
\label{prop:request-time-neutrality}
Under a loss-participating queue, a withdrawal request that does not finalize payment before a common NAV loss leaves the requester's post-loss claim equal to the claim of an otherwise identical non-requesting provider.
\end{proposition}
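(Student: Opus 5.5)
The argument is a direct accounting comparison, and it largely repeats the proof of \Cref{prop:no-first-exit}, specialized to two providers. First fix the setup. Consider two providers, $A$ (the requester) and $B$ (non-requesting), each holding $z$ senior shares in pool $p$ immediately before the request. Total supply is $Z^S$ and NAV is $V^S$. Provider $A$ submits a request for $z' \le z$ shares. Under the loss-participating queue of \Cref{sec:withdrawals}, those shares are escrowed but not burned, so $Z^S$ and each provider's economic share count are unchanged. The hypothesis that payment is not finalized before the loss means no service transition (the withdrawal-payment row of \Cref{tab:balanced-transitions}) has debited $A_p$ or $V^S_p$ against $A$'s escrowed shares before the loss transition.

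Next, apply the common loss. A recognized loss $\ell$ is allocated by the waterfall of \Cref{def:waterfall}, and by \Cref{thm:waterfall} the senior layer absorbs a unique amount $a_s \ge 0$. The claim-side journal reduces $V^S$ to $V^S - a_s$ and leaves $Z^S$ unchanged; a holdback under \eqref{eq:conservative-nav-assets} acts the same way. The post-loss share price is then $\pi' = (V^S - a_s)/Z^S$. This price is computed from aggregate quantities only. Escrow status enters neither the numerator nor the denominator, since escrow only relocates shares and does not change claim rank, as in the proof of \Cref{prop:no-first-exit}.

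Then compare the two claims. $A$'s post-loss claim is $z\pi'$: its escrowed shares are later paid at the current conservative price, which is at most $\pi'$ adjusted by any later common changes, and its unescrowed shares are marked at $\pi'$. $B$'s claim is also $z\pi'$. Any subsequent common income or loss affects both identically, again because it acts per share. The two claims are therefore equal, up to the explicit integer rounding bounded by \Cref{thm:integer-closure}.

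The main obstacle is pinning down what ``claim'' means for escrowed shares whose payment is deferred. The plan is to define the requester's claim as the payment owed at service, priced at the conservative share price prevailing at payment time. Read this way, the proposition reduces to condition (iii) of \Cref{prop:no-first-exit}. The plan should also note explicitly that an implementation which burns shares at request time must mint equivalent queue-claim units so that the same per-share argument applies. Otherwise the neutrality fails, which is exactly the failure mode that \Cref{sec:withdrawals} excludes.
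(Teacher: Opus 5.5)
Your proposal is correct and follows the paper's own argument. Escrow neither burns the queued shares nor removes them from the NAV denominator, so the common loss lowers the single per-share price $\pi'=(V^S-a_s)/Z^S$, which applies equally to both equal share lots; because payment has not finalized, the two post-loss claims coincide (the waterfall detail and the integer-rounding caveat are harmless elaborations of the paper's shorter proof).
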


The deterministic agent fixture obtains an absolute request-time advantage of zero at numerical precision. The result does not promise immediate liquidity; it removes one specific first-exit transfer.

\subsection{Risk-sensitive reserve target}
Let $R_t^m$ and $R_t^p$ be market and pool reserves, $B_t$ outstanding credit, and $\widehat{\mathrm{ES}}_t$ a registered expected-shortfall proxy. Define
\begin{equation}
R_t^*=\alpha B_t+\beta\widehat{\mathrm{ES}}_t.
\label{eq:dynamic-reserve-target}
\end{equation}
Reserve evolution is
\begin{equation}
R_{t+1}=\pospart{R_t-d_t+\rho_t I_t+s_t},
\label{eq:reserve-evolution}
\end{equation}
where $d_t$ is reserve draw, $I_t$ realized interest, $\rho_t$ the reserve share, and $s_t$ slashing receipts. The reference controller increases $\rho_t$ by a bounded amount when $R_t<R_t^*$ and funds the increment from Senior and protocol revenue shares.

\begin{theorem}[Reserve replenishment without new loss]
\label{thm:reserve-replenishment}
Suppose $R_t<R^*$, $d_t=0$ after time $t_0$, and the controller guarantees $\rho_t I_t+s_t\ge\epsilon>0$ until the target is reached. Then $R_t$ reaches $R^*$ in at most
\begin{equation}
\left\lceil\frac{R^*-R_{t_0}}{\epsilon}\right\rceil
\label{eq:reserve-replenishment-bound}
\end{equation}
periods.
\end{theorem}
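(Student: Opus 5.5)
The argument is a direct monotone-increment induction on the reserve recursion \eqref{eq:reserve-evolution}. Fix $t_0$ with $R_{t_0}<R^*$ and set $n^*=\lceil (R^*-R_{t_0})/\epsilon\rceil$. For every period $t\ge t_0$ before the target is reached, the hypothesis $d_t=0$ reduces the recursion to $R_{t+1}=\pospart{R_t+\rho_tI_t+s_t}$. The controller guarantee $\rho_tI_t+s_t\ge\epsilon>0$ makes the argument of the positive part at least $R_t+\epsilon$. Since $R_t\ge0$ (every reserve balance is non-negative by the positive-part construction), the positive part is inactive, so $R_{t+1}=R_t+\rho_tI_t+s_t\ge R_t+\epsilon$.

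The key step is an induction on $k$ showing that
\begin{equation*}
R_{t_0+k}\ge R_{t_0}+k\epsilon
\end{equation*}
holds for every $k$ such that $R_{t_0+j}<R^*$ for all $j<k$. The base case $k=0$ is trivial, and the inductive step is the one-period increment bound above. Now suppose, for contradiction, that the target has not been reached by period $t_0+n^*$. Then the guarantee applies in all of the $n^*$ periods, so
\begin{equation*}
R_{t_0+n^*}\ge R_{t_0}+n^*\epsilon\ge R_{t_0}+(R^*-R_{t_0})=R^*,
\end{equation*}
using $n^*\epsilon\ge R^*-R_{t_0}$ from the definition of the ceiling. This contradicts the assumption. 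Hence $R_t\ge R^*$ occurs within at most $n^*$ periods, which is the bound \eqref{eq:reserve-replenishment-bound}.

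The main obstacle is scoping the hypotheses rather than the arithmetic. The guarantee $\rho_tI_t+s_t\ge\epsilon$ is asserted only ``until the target is reached,'' so the induction must be stopped at the first hitting time $\tau=\min\{t\ge t_0:R_t\ge R^*\}$. The claim is $\tau-t_0\le n^*$, which does not require the reserve to remain above $R^*$ afterwards. The target $R^*$ is treated as fixed, as the statement writes $R^*$ rather than $R_t^*$. I would remark that if $R^*_t$ moved with $B_t$ and $\widehat{\mathrm{ES}}_t$ as in \eqref{eq:dynamic-reserve-target}, the same argument would need $\epsilon$ to exceed the per-period growth of the target. I would also note that a bounded controller increment on $\rho_t$ does not by itself deliver $\epsilon$ when realized interest $I_t$ vanishes; this is why the statement takes the lower bound $\epsilon$ as a hypothesis.
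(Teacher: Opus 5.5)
Your proof is correct and follows the paper's own argument. You use $d_t=0$, the controller floor, and the inactive positive part to get $R_{t+1}\ge R_t+\epsilon$, then induct to $R_{t_0+n}\ge R_{t_0}+n\epsilon$ and choose $n=\lceil (R^*-R_{t_0})/\epsilon\rceil$; your hitting-time/contradiction framing just makes explicit that the guarantee applies only while $R_t<R^*$, which the paper's proof leaves implicit.
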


A reserve target is not a solvency theorem. Repeated correlated losses can outrun replenishment, and high reserve shares can make provider participation uneconomic.

\subsection{Direct and common-factor contagion}
Strict pool isolation prevents a local liability from being transferred to another pool. It does not remove common exposures. Let loss in pool $p$ be
\begin{equation}
L_{p,t}=\beta_p Z_t+\varepsilon_{p,t},
\label{eq:common-factor-loss}
\end{equation}
where $Z_t$ is a shared stablecoin, venue, oracle, or infrastructure shock and $\varepsilon_{p,t}$ is idiosyncratic.

\begin{theorem}[Common-factor covariance under isolation]
\label{thm:common-factor-contagion}
If $\varepsilon_p$ and $\varepsilon_q$ are mutually independent and independent of $Z$, then for $p\neq q$,
\begin{equation}
\operatorname{Cov}(L_p,L_q)=\beta_p\beta_q\operatorname{Var}(Z).
\label{eq:common-factor-covariance}
\end{equation}
Ledger isolation eliminates direct liability transfer, but loss correlation remains positive whenever both pools load on the same non-degenerate factor with the same sign.
\end{theorem}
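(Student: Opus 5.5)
The plan is a direct bilinearity computation for the covariance, followed by a short sign argument for the correlation claim. Fix $p\neq q$ and substitute the factor representation \eqref{eq:common-factor-loss}, dropping the time index. I will assume all variables have finite second moments, which is implicit in the statement since $\operatorname{Var}(Z)$ appears. Expanding by bilinearity of covariance gives
\[
\operatorname{Cov}(L_p,L_q)=\beta_p\beta_q\operatorname{Var}(Z)+\beta_p\operatorname{Cov}(Z,\varepsilon_q)+\beta_q\operatorname{Cov}(\varepsilon_p,Z)+\operatorname{Cov}(\varepsilon_p,\varepsilon_q).
\]

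The three cross terms vanish as follows.
\begin{itemize}[leftmargin=*]
\item The middle two terms are zero because each idiosyncratic shock is independent of $Z$, and independence with finite second moments implies zero covariance.
\item The last term is zero because $\varepsilon_p$ and $\varepsilon_q$ are mutually independent.
\end{itemize}
This yields \eqref{eq:common-factor-covariance}. The only point to state carefully is that the theorem needs pairwise uncorrelatedness, which follows from the stated independence; the full mutual independence is stronger than necessary.

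For the qualitative claim, a non-degenerate factor means $\operatorname{Var}(Z)>0$, and loadings of the same sign, both nonzero, give $\beta_p\beta_q>0$. The covariance is therefore strictly positive. Positive covariance gives positive correlation whenever both $\operatorname{Var}(L_p)$ and $\operatorname{Var}(L_q)$ are positive. Each is at least $\beta^2\operatorname{Var}(Z)>0$, by the same expansion applied to the variance, so this holds automatically.

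The remaining task is to connect this result to ledger isolation. The argument should state explicitly that \Cref{thm:no-local-contagion} concerns the transition system: no write from pool $p$ reaches pool $q$'s accounts. The present result concerns the joint distribution of exogenous inputs, and nothing in \Cref{def:strict-isolation} constrains $Z$. The two statements therefore coexist without contradiction, and isolation removes the direct-transfer channel but not the common-factor term. There is no real technical obstacle here. The main care point is the moment and non-degeneracy hypotheses, and making sure the word ``correlation'' is justified by the nonzero variances.
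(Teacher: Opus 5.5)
Your proposal is correct and matches the paper's proof, which uses the same four-term bilinearity expansion and kills the three cross terms by independence. Your added lower bound $\operatorname{Var}(L_p)\ge\beta_p^2\operatorname{Var}(Z)>0$, used to pass from positive covariance to positive correlation, fills in a step the paper leaves implicit.
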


A shared global reserve adds a second channel: one pool's draw reduces protection available to another. This is protection contagion, not liability contagion. The agent simulation uses stablecoin shocks as a negative control and finds cross-pool raw-shortfall correlation of 0.842 despite isolated local ledgers.

\section{Agent-Based Economic Validation Design}
\label{sec:agent-design}

The fixed-distribution experiments in \Cref{sec:simulation-design} hold capital and participation rules largely exogenous. This revision adds a fixed-seed agent-based model in which traders, Senior LPs, LBPs, market makers, and liquidators respond to protocol prices, losses, queues, and one another. Agent-based methods are appropriate where heterogeneous decision rules and feedback can generate aggregate states not represented by a single equilibrium agent \citep{tesfatsion2006,farmer2009,brock1998,lux1999}. The model remains synthetic and is not calibrated to Axient or venue observations.

\newcommand{\AgentRuns}{96}
\newcommand{\AgentWeeks}{104}
\newcommand{\AgentInvariantChecks}{70,207,488}
\newcommand{\AgentHypothesesPassed}{6}
\newcommand{\AgentHypothesesTotal}{7}

\subsection{Population and horizon}
Each path contains three isolated pools (macro, election, and sports). Each pool contains 64 Senior LPs, 24 LBPs, 12 market makers, 16 liquidators, and 96 traders: 636 heterogeneous agents per path. The release executes \AgentRuns paths over \AgentWeeks weekly periods for four protocol configurations and three leverage policies. Agent endowments, outside rates, risk aversion, quote reservation values, execution costs, and trader signals are fixed-seed heterogeneous draws.

\begin{table}[H]
\centering
\caption{Agent population per pool and decision variable.}
\label{tab:agent-population}
\begin{tabularx}{\textwidth}{l r X}
\toprule
Role & Count & Endogenous decision\\
\midrule
Senior Credit LP & 64 & deposit, remain, or queue withdrawal from expected net yield, loss, delay, and peer withdrawals\\
LBP & 24 & supply or release junior capital from expected premium, impairment, and encumbrance\\
Market maker & 12 & ordinary quote fraction, bonded commitment, delivery, or strategic withdrawal\\
Liquidator & 16 & enter and supply executable capacity when bounty exceeds private cost\\
Trader & 96 & participate and select $2\times$, $3\times$, or $5\times$ subject to policy and risk gates\\
\bottomrule
\end{tabularx}
\end{table}

\subsection{State cycle}
Each weekly transition applies the following order:
\begin{enumerate}[leftmargin=*]
\item update a common three-state regime (normal, stress, crisis) and draw common and idiosyncratic shocks;
\item accrue debt interest and identify maturing positions;
\item obtain ordinary and bonded market-making capacity and liquidator participation;
\item realize settlement recovery and apply the exact loss waterfall;
\item distribute settled interest and update risk-sensitive reserve targets;
\item apply Senior and LBP participation responses and service the loss-participating queue;
\item solve the damped utilization-rate fixed point, calculate tier demand, and admit new credit subject to cash, utilization, and execution capacity;
\item register state invariants and output metrics.
\end{enumerate}
The complete reference algorithm is in \Cref{app:agent-model}.

\subsection{Configurations}
\begin{table}[H]
\centering
\caption{Agent-based protocol configurations.}
\label{tab:agent-configurations}
\begin{tabularx}{\textwidth}{l X}
\toprule
Configuration & Included mechanisms\\
\midrule
Senior only & static rate, Senior capital only, no LBP, no reserve, no bonded MM, non-loss-participating queue\\
Segmented & local reserves, LBP capital, loss-participating queue, static rate\\
Adaptive & segmented layers plus utilization rate, dynamic reserve target, bonded MM capacity\\
Full Axient & adaptive design plus risk-sensitive leverage gates and class-specific leverage caps\\
\bottomrule
\end{tabularx}
\end{table}

The leverage policies are:
\begin{itemize}[leftmargin=*]
\item $2\times$ only;
\item phased weights $(0.55,0.30,0.15)$ over $(2\times,3\times,5\times)$; and
\item a $5\times$-heavy mix $(0.25,0.25,0.50)$.
\end{itemize}
Sports is capped at $3\times$ in every configuration. The full configuration additionally compresses admissible leverage in stress, crisis, and high-close-pressure states.

\subsection{Shock structure}
A Markov regime controls the intensity of common venue incidents, stablecoin depegs, oracle delays, strategic MM withdrawal, and idiosyncratic venue failures. The same fixed shock paths and agent traits are reused across protocol configurations, creating paired comparisons. Common stablecoin shocks load on every pool and provide a negative control for the proposition that isolation alone eliminates economic contagion.

\subsection{Release-registered hypotheses}
The seven hypotheses and thresholds were written to the versioned parameter file before the final execution. They are release-registered, not externally preregistered.

\begin{table}[H]
\centering
\caption{Agent-based hypotheses.}
\label{tab:agent-hypothesis-definitions}
\begin{tabularx}{\textwidth}{l X}
\toprule
ID & Hypothesis\\
\midrule
H1 & Full Axient reduces phased-mix Senior-loss incidence by at least 35 percent relative to Senior only.\\
H2 & Full Axient preserves at least 70 percent of Senior-only accepted trader credit demand in the phased mix.\\
H3 & Median normal-regime LBP participation is at least 60 percent of endowed LBP capital.\\
H4 & Bonded MM support reduces normalized shortfall during strategic-withdrawal episodes by at least 15 percent relative to segmented.\\
H5 & Loss-participating queued shares eliminate request-time first-exit advantage within $10^{-10}$.\\
H6 & Common stablecoin shocks preserve cross-pool raw-shortfall correlation above 0.40 despite ledger isolation.\\
H7 & Under full Axient, $5\times$-heavy Senior-loss incidence exceeds $2\times$-only incidence by at least 25 percent.\\
\bottomrule
\end{tabularx}
\end{table}

\subsection{Verification scope}
The execution records \AgentInvariantChecks scalar invariant evaluations with zero failures. The checks cover non-negative balances, admission capacity, waterfall conservation, queue bounds, debt and gross-position state, and reserve allocation. They establish internal state consistency of the specified simulator. They do not validate behavioral rules, distributions, or external-market realism.

\section{Agent-Based Results}
\label{sec:agent-results}

Six of seven release-registered hypotheses pass. The results support the value of layered capital and risk-sensitive leverage admission, but they do not validate a commercially safe or capital-light protocol. The strongest negative findings are economically important: LBP capital is impaired frequently, Senior loss remains material over two-year paths, modest bonded-MM capacity does not improve strategic-withdrawal shortfall once endogenous demand is admitted, and common stablecoin exposure creates high cross-pool correlation despite local isolation.

\subsection{Phased leverage comparison}
\begin{table}[H]
\centering
\scriptsize
\caption{Agent-based results under the phased leverage policy. ``Senior loss incidence'' is the fraction of 104-week paths with any Senior impairment; it is not a weekly or annual default probability. Returns and losses are synthetic.}
\label{tab:agent-phased-results}
\begin{tabular}{lrrrr}
\toprule
Configuration & Senior return & Path loss incidence & Accepted credit & LBP impairment\\
\midrule
Senior only & 4.3\% & 99.0\% & 23.3\% & 0.0\% \\
Segmented & 10.4\% & 64.6\% & 24.6\% & 96.9\% \\
Adaptive & 17.1\% & 69.8\% & 35.6\% & 97.9\% \\
Full Axient & 20.6\% & 57.3\% & 38.0\% & 94.8\% \\
\bottomrule

\end{tabular}
\end{table}

The full configuration reduces path-level Senior-loss incidence from 99.0 percent to 57.3 percent, a 42.1 percent relative reduction and a pass for H1. Mean cumulative Senior impairment falls from 40.8 percent to 17.4 percent of mean deployed Senior capital. At the same time, only 38.0 percent of requested credit is admitted: capital and execution constraints reject most synthetic demand. H2 passes because endogenous provider participation and dynamic pricing produce more accepted credit than the Senior-only baseline, but that comparison should not be read as unconstrained growth.

The mean synthetic annualized Senior return rises to 20.6 percent in the full configuration. This is not an APY forecast. It is generated by author-specified trader edges, rates, utilization, and losses. The same configuration still has Senior impairment in 57.3 percent of two-year paths. High modeled yield is compensation for high modeled risk, not evidence of safety.

LBP impairment occurs in 94.8 percent of full phased paths. The mean cumulative LBP draw equals 3.75 times mean deployed LBP capital because impaired capital can be replenished and impaired again over 104 weeks. This is a turnover measure, not a one-time loss fraction. It implies that the junior layer is expensive and cannot be treated as passive insurance capital.

\begin{figure}[H]
\centering
\includegraphics[width=0.78\textwidth]{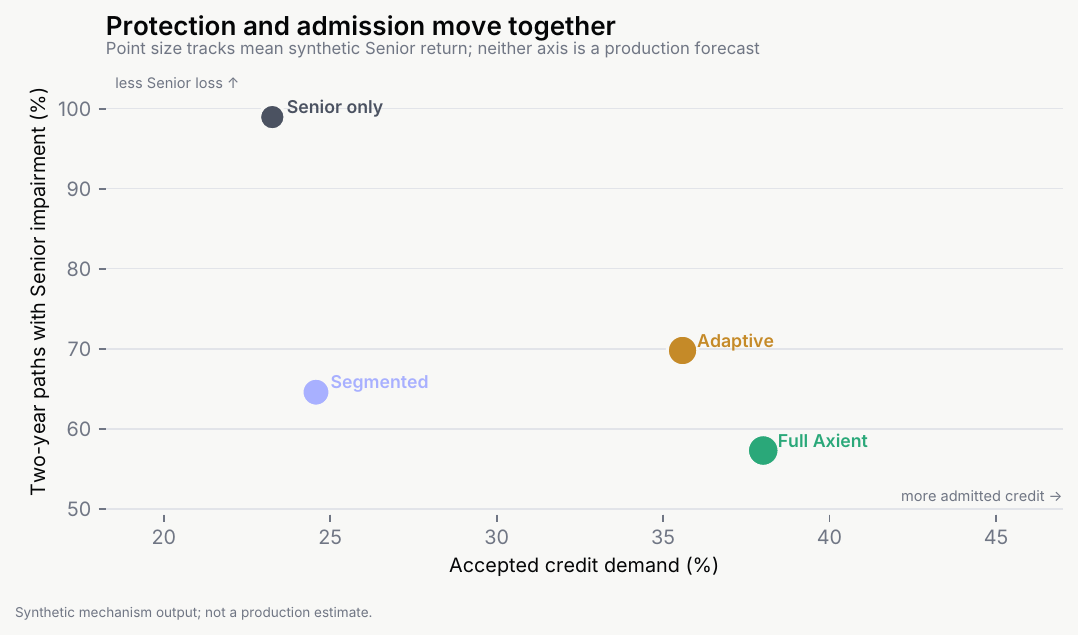}
\caption{Protection-demand frontier under the phased leverage policy. Both axes are synthetic outputs.}
\label{fig:agent-risk-demand}
\end{figure}

\subsection{Leverage sensitivity}
\begin{table}[H]
\centering
\scriptsize
\caption{Full-Axient results by leverage policy.}
\label{tab:agent-leverage-results}
\begin{tabular}{lrrrr}
\toprule
Policy & Senior return & Path loss incidence & Accepted credit & LBP impairment\\
\midrule
$2\times$ only & 17.5\% & 22.9\% & 59.3\% & 45.8\% \\
Phased & 20.6\% & 57.3\% & 38.0\% & 94.8\% \\
$5\times$ heavy & 20.9\% & 59.4\% & 34.5\% & 95.8\% \\
\bottomrule

\end{tabular}
\end{table}

Moving from $2\times$ only to the $5\times$-heavy mix raises Senior-loss incidence from 22.9 percent to 59.4 percent, a 159.1 percent relative increase and a pass for H7. Accepted-demand ratio falls from 59.3 percent to 34.5 percent. The result reflects both higher debt fraction and lower gross exposure per unit of credit. The simulation therefore supports a phased product policy: $2\times$ is not merely a conservative marketing launch; it is a materially different capital regime.

\begin{figure}[H]
\centering
\includegraphics[width=0.76\textwidth]{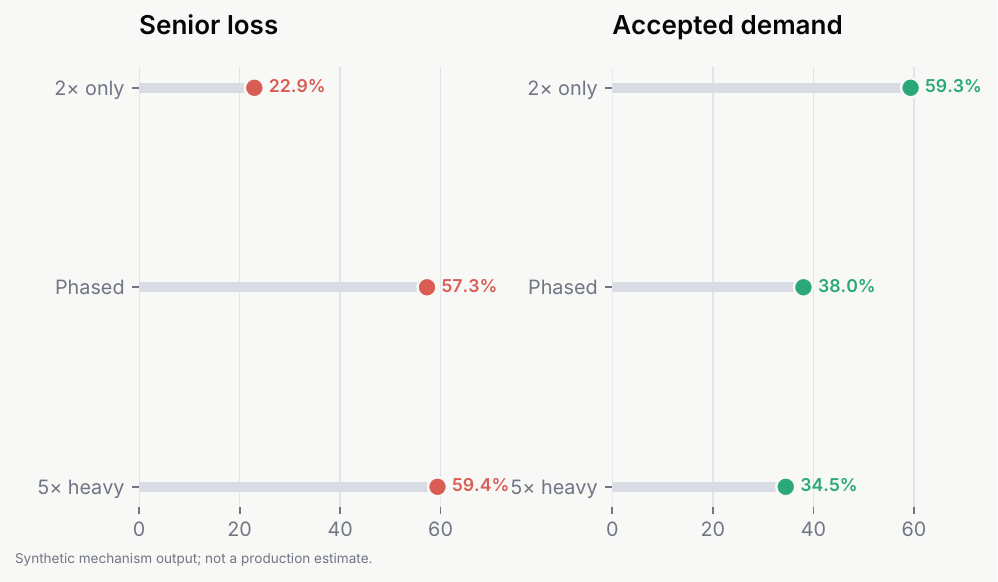}
\caption{Leverage-policy trade-off under the full configuration.}
\label{fig:agent-leverage}
\end{figure}

\subsection{Strategic market-maker withdrawal}
H4 fails. The adaptive configuration's normalized shortfall in strategic-withdrawal weeks is 4.3 percent worse than segmented, against a required 15 percent improvement. Bonded capacity improves delivery per unit, but it also expands the admission envelope. In the registered calibration, demand expansion dominates the direct protection effect, consistent with \Cref{prop:capacity-induced-shortfall}.

This negative result changes the design interpretation. A market-maker bond should not be credited one-for-one as reserve substitution or used automatically to expand open interest. Incremental admission must be limited by the stressed delivered capacity, concentration, bond asset quality, and the elasticity condition in \eqref{eq:capacity-paradox-condition}.

\begin{figure}[H]
\centering
\includegraphics[width=0.74\textwidth]{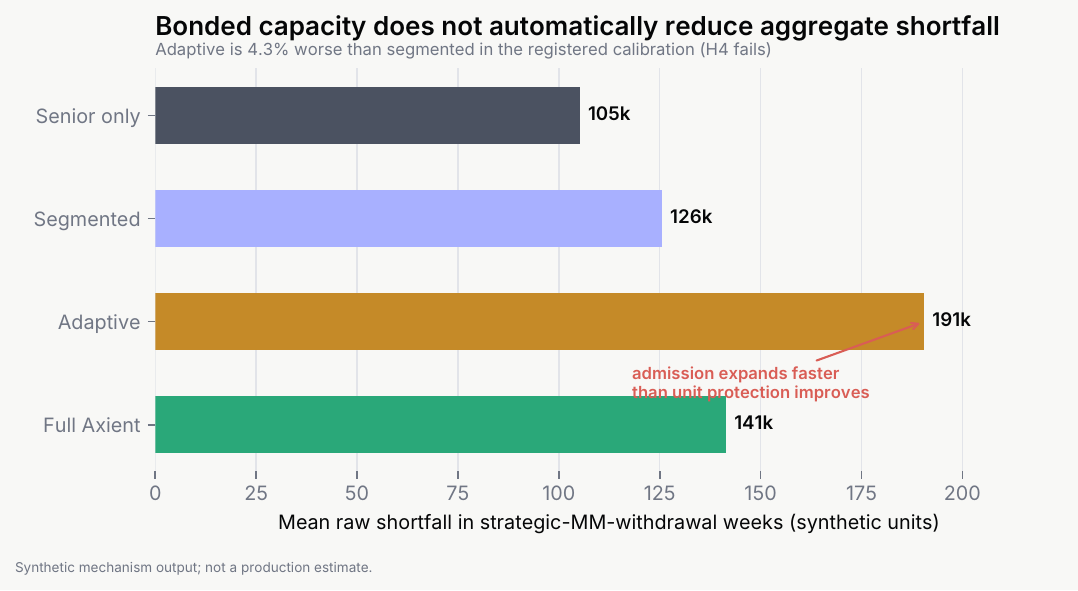}
\caption{Mean raw shortfall during registered strategic-MM-withdrawal weeks.}
\label{fig:agent-mm-withdrawal}
\end{figure}

\subsection{Runs and LBP participation}
The loss-participating queue deterministic fixture yields zero request-time advantage and H5 passes. Aggregate run-week incidence falls from 4.1 percent in Senior only to 2.6 percent in full phased. This does not show that runs are eliminated: queue delay and perceived loss remain coordination inputs, and crisis-state queue delay is much higher than normal-state delay.

Normal-regime median LBP participation is 68.7 percent of endowed LBP capital, passing H3. Participation falls materially in crisis states. The LBP result is therefore two-sided: enhanced compensation attracts capital in ordinary states, but repeated impairment and encumbrance reduce capital precisely when it is most valuable.

\begin{figure}[H]
\centering
\includegraphics[width=0.76\textwidth]{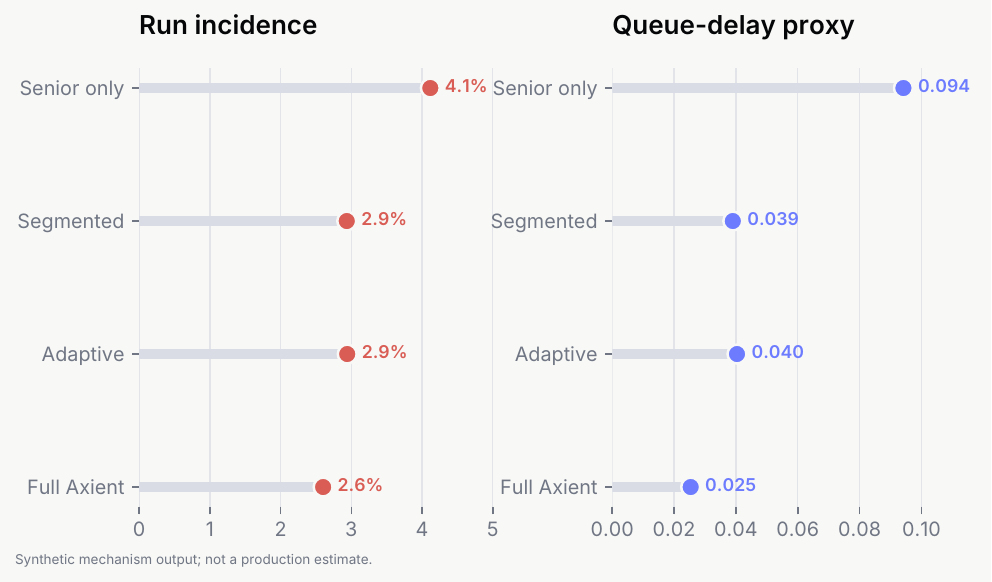}
\caption{Run-week incidence and mean queue-delay proxy across configurations under phased leverage. The panels use different units and should not be read on a common scale.}
\label{fig:agent-runs}
\end{figure}

\subsection{Regime response and wrong-way pricing}
\begin{table}[H]
\centering
\scriptsize
\caption{Full-Axient phased results by synthetic regime. Senior impairment is mean settlement units per pool-week.}
\label{tab:agent-regime-results}
\begin{tabular}{lrrrrr}
\toprule
Regime & Utilization & Borrow rate & Accepted credit & Senior impairment & Liquidation coverage\\
\midrule
Normal & 75.1\% & 20.5\% & 36.6\% & 1.4k & 99.2\% \\
Stress & 66.5\% & 19.1\% & 46.2\% & 6.4k & 92.1\% \\
Crisis & 61.3\% & 24.3\% & 41.5\% & 67.2k & 74.7\% \\
\bottomrule

\end{tabular}
\end{table}

The stress regime has lower utilization and a lower mean borrow rate than the normal regime while impairment rises. The utilization-only rate responds to reduced admitted demand rather than directly to higher expected loss. Crisis rates rise because capital supply and utilization react differently in the most severe state, but the non-monotone pattern is enough to reject utilization as a complete risk price. A separate risk spread satisfying \eqref{eq:risk-spread-condition} is required.

Liquidator participation is nearly saturated in the registered model: mean active liquidators are 15.9 of 16 and mean coverage is 97.8 percent in the full phased configuration. This should not be interpreted as empirical abundance. It indicates that the registered bounty is generous relative to synthetic costs and may overpay execution agents.

\subsection{Pool differences and leverage caps}
\begin{table}[H]
\centering
\scriptsize
\caption{Full-Axient phased results by isolated pool. Senior-loss incidence is per pool-week.}
\label{tab:agent-pool-results}
\begin{tabular}{lrrrrr}
\toprule
Pool & Utilization & Accepted demand & Senior loss incidence & Raw shortfall rate & LBP participation\\
\midrule
Macro & 75.7\% & 32.2\% & 0.76\% & 0.29\% & 55.5\% \\
Election & 75.3\% & 32.9\% & 0.93\% & 0.33\% & 51.8\% \\
Sports & 70.0\% & 48.9\% & 0.36\% & 0.18\% & 55.0\% \\
\bottomrule

\end{tabular}
\end{table}

Sports has the highest registered base event-risk multiplier but the lowest realized shortfall and Senior-loss incidence because it is capped at $3\times$. Macro and election pools admit $5\times$ in eligible states and experience higher loss. This is not empirical evidence that sports is safer; it shows that class-specific leverage policy can dominate a class's raw shock parameter in the simulated loss distribution.

\subsection{Common-factor contagion}
Conditional on stablecoin shocks, mean cross-pool raw-shortfall correlation is 0.842, passing H6. Strict local ledgers prevent one pool's debt from becoming another pool's liability, yet all pools lose together when their settlement asset is impaired. A global reserve can redistribute protection across the same event, but cannot remove the common factor.

\subsection{Hypothesis outcomes}
\begin{table}[H]
\centering
\caption{Release-registered hypothesis outcomes.}
\label{tab:agent-hypothesis-results}
\begin{tabular}{lrrl}
\toprule
ID & Realized & Threshold & Verdict\\
\midrule
H1 & 42.1\% & 35.0\% & Pass \\
H2 & 155.6\% & 70.0\% & Pass \\
H3 & 68.7\% & 60.0\% & Pass \\
H4 & -4.3\% & 15.0\% & Fail \\
H5 & 0.00e+00 & $\le 1e-10$ & Pass \\
H6 & 84.2\% & 40.0\% & Pass \\
H7 & 159.1\% & 25.0\% & Pass \\
\bottomrule

\end{tabular}
\end{table}

\begin{figure}[H]
\centering
\includegraphics[width=0.76\textwidth]{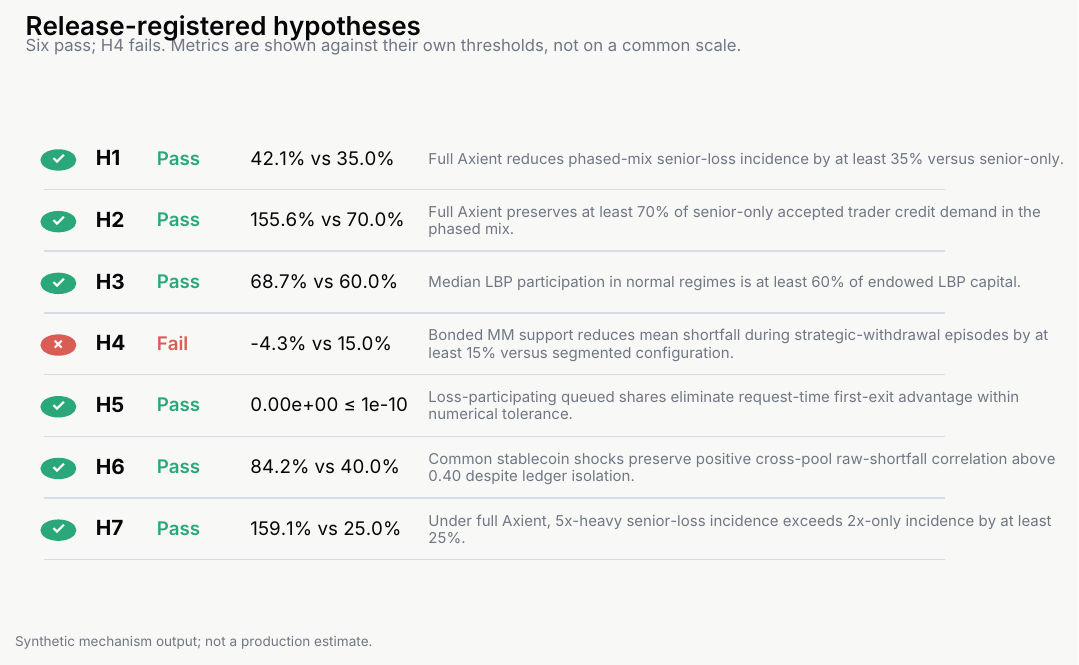}
\caption{Six hypotheses pass and H4 fails. Values use different metric scales; the figure is a compact threshold panel, not a cross-metric magnitude comparison.}
\label{fig:agent-hypotheses}
\end{figure}

\subsection{Independent daily robustness implementation}
The independently coded daily model in \Cref{app:agent-diagnostics} does not contribute to the H1--H7 verdicts. It provides a second mechanism implementation with different horizon, pool structure, and scenario rules. Three qualitative results survive that change: layered protection reduces material Senior credit write-downs inside the registered path set; a combined venue-and-withdrawal shock can produce negative conservative provider-wealth returns and non-zero run incidence even when finalized Senior principal is protected; and increasing the $5\times$ demand weight lowers hard-flat success and raises rejection. The daily queue sweep also displays a sharp run phase transition as peer sensitivity rises and immediately serviceable cash falls. Absolute values are not pooled across models.

\subsection{Independent daily robustness model}
The independently coded daily model in \Cref{app:agent-diagnostics} is not pooled with the weekly evidence. Its ordinary panel finds that layered policies eliminate the registered material-Senior-loss threshold within that particular grid while admitting more demand than the untranched benchmark. Under the combined venue-and-withdrawal scenario, however, conservative provider wealth returns are negative and run incidence is non-zero even where finalized Senior principal is not impaired. Its withdrawal sensitivity also exhibits a sharp buffer-dependent phase transition, and increasing the $5\times$ share lowers hard-flat success. The qualitative overlap supports the mechanism directions; the materially different levels demonstrate that neither implementation is a calibrated forecast.

\subsection{Interpretive boundary}
The agent model does not establish production probabilities. It establishes that the specified behavior rules produce several mechanisms worth carrying into design review: endogenous capital reduces but does not eliminate Senior loss; LBP capital is repeatedly impaired; $5\times$ materially raises capital risk; strategic MM support can induce offsetting demand; utilization-only pricing can be wrong-way; and pool isolation does not remove shared-asset contagion. Those are mechanism-design findings conditional on the registered synthetic environment.

\section{Smart-Contract Guarantees and External Assumptions}
\label{sec:contract-guarantees}

The protocol's DeFi claim should be evaluated by which financial facts are enforced by contracts, not by whether every computation occurs on-chain.

\begin{table}[H]
\centering
\small
\caption{Guarantee boundary of the reference architecture.}
\label{tab:guarantee-boundary}
\begin{tabularx}{\textwidth}{>{\raggedright\arraybackslash}X>{\raggedright\arraybackslash}X}
\toprule
Can be enforced or made publicly auditable on-chain & Remains an external assumption or failure channel\\
\midrule
LP deposits, virtual-offset shares, loss-participating queue claims, and withdrawal payments & future order-book and RFQ liquidity\\
Debt shares, interest index, partial-settlement idempotency, and repayment priority & correctness of a real-world event outcome\\
Collateral lock, debt-first allocation, and trader residual release & liveness and solvency of an external venue\\
Pool, LBP, non-redeemable reserve, and treasury segregation & bridge, stablecoin, or chain impairment\\
Exposure, leverage, and concentration caps & off-chain data availability and latency\\
Bonded-MM collateral and deterministic slashing & non-collateralized MM willingness to quote\\
Loss waterfall and share-price write-down & absence of smart-contract, governance, or key compromise\\
Versioned adapter and mandate registry & legal enforceability outside the contract boundary\\
\bottomrule
\end{tabularx}
\end{table}

\subsection{Trust ladder}
The reference architecture distinguishes four deployment levels:
\begin{enumerate}[leftmargin=*]
\item \textbf{T0: research simulator.} No external capital and no claim of financial enforcement.
\item \textbf{T1: on-chain pool with controlled venue account.} LP accounting, debt, reserves, and waterfall are on-chain, but position custody or signing depends on an operator-controlled account.
\item \textbf{T2: verifiable adapter.} Trade settlement and position ownership are publicly verifiable; a controlled signer or operator may still be required for execution.
\item \textbf{T3: contract-controlled margin account.} The protocol has enforceable lien, withdrawal lock, and delegated liquidation over recognized position assets, and off-chain agents are replaceable.
\end{enumerate}
T1 and T2 can be physically backed and settlement-verifiable without being fully non-custodial. Only T3 supports the strongest trust-minimization claim. A venue-independent architecture allows pools in different modes, but it may not describe them as economically equivalent.

\subsection{On-chain source of truth}
For each pool, the authoritative values of cash, debt shares, LP shares, LBP shares, reserve balances, withdrawal claims, bonds, and finalized losses are contract state. Off-chain databases are indexes and caches. A disagreement is resolved by the chain and the versioned adapter evidence, not by a privileged database write.

\subsection{Replaceability criterion}
An off-chain function is replaceable if an independent agent can reconstruct the required inputs from public or authenticated data, submit the same admissible transaction, and obtain the same contract result. Quote generation need not be deterministic across agents; accounting and validity of the executed transition must be.

\subsection{Protocol-native synthetic mode}
Venue independence may ultimately include protocol-native synthetic exposure hedged across several venues. That mode changes the risk boundary: the user claim is no longer fully represented by one set of outcome tokens, and Axient becomes the direct counterparty to gross or net exposure. It must therefore use a separate pool, reserve, valuation rule, and disclosure. This paper permits the mode in the taxonomy but does not treat it as equivalent to native physical backing.

\subsection{Protocol-boundary permissionlessness}
Permissionless liquidation is a statement about who may submit a transition to Axient contracts, not a claim that every external venue accepts arbitrary callers. The adapter may require a contract account, delegated signer, controlled account, or venue operator. A pool can describe liquidation as permissionless only at the boundary where the condition is actually enforced, and must disclose any external authorization dependency.

\subsection{Tagged reference implementation versus production guarantee}
The repository-linked r0.2.2 release identifies a private tagged implementation that imports the decimal and integer fixture suites, exposes the normative claim and invariant registries, and exercises reference Solidity modules through unit, bounded-fuzz, stateful-invariant, and Python-to-Solidity differential tests. This satisfies the paper's \emph{reference-code verified} boundary for the identified snapshot. It is not a formal proof of bytecode, an independent audit, evidence that any external adapter reports truthful settlement, or a production deployment. Research status, reference-code status, external-audit status, and production status are therefore recorded as separate dimensions.

\subsection{No regulatory inference}
On-chain custody, transparent priority, or permissionless liquidation can reduce operator discretion and improve auditability. They do not by themselves determine whether a product is regulated as lending, derivatives, collective investment, gaming, or another activity. Legal characterization is outside the formal guarantee set.

\section{Security Invariants and Threat Model}
\label{sec:security}

A reference implementation should expose each invariant as contract assertions, stateful fuzz properties, or formal-verification targets where feasible.

\subsection{Accounting invariants}
\begin{enumerate}[leftmargin=*]
\item \textbf{Book closure.} Recognized local assets equal recognized claims after every finalized transition, subject to explicit suspense accounts.
\item \textbf{Settled-only repayment.} Pending or reverted execution cannot burn debt shares.
\item \textbf{Share conservation.} Deposits and withdrawals mint or burn shares at the applicable class price; no privileged path mints unbacked claims.
\item \textbf{Fee conservation.} Allocated interest and fees equal collected amounts.
\item \textbf{Loss visibility.} Every debt write-down produces an equal waterfall allocation and a public share-price effect.
\end{enumerate}

\subsection{Control invariants}
\begin{enumerate}[leftmargin=*,resume]
\item \textbf{No collateral escape.} While debt is positive, recognized position assets cannot leave except through an admissible transition that preserves debt priority.
\item \textbf{Prospective governance.} Existing position priority and queue order cannot be changed retroactively.
\item \textbf{Pause monotonicity.} Emergency transitions can reduce permissions but cannot create new borrowing or release encumbered collateral.
\item \textbf{Role separation.} A liquidator does not acquire LBP ownership merely by executing; an LBP does not receive liquidation bounty without a valid action.
\end{enumerate}

\subsection{Capacity invariants}
\begin{enumerate}[leftmargin=*,resume]
\item \textbf{No liquidity double counting.} Shared venue depth, RFQ capacity, or MM bond cannot certify more simultaneous exits than its aggregate allocation.
\item \textbf{Bond cap.} Credited MM protection cannot exceed enforceable collateral or atomic delivery value.
\item \textbf{Reserve and LBP separation.} Market, pool, and global reserves are non-redeemable protocol equity accounts in the reference mechanism; LBP capital is a distinct share class. All are excluded from lendable senior cash while counted as protection.
\item \textbf{Admission conjunction.} No position opens unless every account, venue, execution, time, concentration, capital, and jurisdiction gate passes.
\item \textbf{Isolation.} A local pool adapter cannot debit another pool or use another pool's collateral unless an explicit shared layer is invoked.
\end{enumerate}

\subsection{Liquidity and finality invariants}
\begin{enumerate}[leftmargin=*,resume]
\item \textbf{Withdrawal honesty.} Immediately claimable cash never exceeds recognized unencumbered withdrawal cash.
\item \textbf{Queue share lock and loss participation.} Queued shares cannot be transferred or simultaneously redeemed elsewhere and continue to participate in NAV changes until payment.
\item \textbf{Finality separation.} Match, settlement, oracle finality, and redemption are different states; only the state relevant to a claim can finalize that claim.
\item \textbf{Explicit failure.} Premature venue close, persistent settlement failure, signer loss, or invalid adapter evidence creates an exception state rather than a fabricated successful close.
\end{enumerate}

\subsection{Implementation-parity invariant identifiers}
The companion protocol registry extends the original bootstrap identifiers with the following normative mappings:
\begin{table}[H]
\centering
\scriptsize
\caption{r0.2.1 implementation-parity invariant IDs.}
\label{tab:r021-invariants}
\begin{tabularx}{\textwidth}{lX}
\toprule
ID & Invariant\\
\midrule
AX-INV-011 & trader residual after debt and approved fees cannot be redirected\\
AX-INV-012 & settlement evidence is unique, cumulative, and idempotent\\
AX-INV-013 & accrued receivable and settled cash are distinct\\
AX-INV-014 & protocol reserve is finite and non-redeemable by LPs\\
AX-INV-015 & queued shares participate in NAV changes until paid\\
AX-INV-016 & integer rounding residuals are explicit and bounded\\
AX-INV-017 & virtual offsets and minimum deposit prevent zero-share admission\\
AX-INV-018 & simultaneous shared-reserve allocation is exact and deterministic\\
\bottomrule
\end{tabularx}
\end{table}

\subsection{Four-dimensional evidence status}
The repository-linked release uses separate status dimensions so that ``code verified'' cannot be mistaken for audited or deployed software.
\begin{table}[H]
\centering
\small
\caption{Evidence status of the tagged r0.2.2 research release.}
\label{tab:evidence-status}
\begin{tabularx}{\textwidth}{l l X}
\toprule
Dimension & Status & Meaning\\
\midrule
Research & verified & formal statements, proofs, registered parameters, and synthetic outputs are versioned in the manuscript release\\
Reference code & verified & the identified private snapshot imports the 28 fixtures, executes 31,082 deterministic checks, and includes unit, fuzz, invariant, and differential test boundaries\\
Independent audit & not performed & no third-party security opinion or formal-verification certificate is claimed\\
Production & not deployed & no live pool, external capital, venue adapter, or production bytecode is evaluated\\
\bottomrule
\end{tabularx}
\end{table}

\subsection{Threat classes}
The principal threats are:
\begin{itemize}[leftmargin=*]
\item contract reentrancy, arithmetic, authorization, and upgrade bugs;
\item oracle or adapter data forgery;
\item keeper censorship or sequencing manipulation;
\item MM bond evasion or correlated provider failure;
\item stablecoin depeg, bridge compromise, or chain reorganization;
\item governance capture and malicious parameter changes;
\item venue insolvency, premature closure, or settlement failure;
\item event manipulation and informed trading amplified by leverage.
\end{itemize}
The last class motivates market- and event-specific leverage caps developed in the broader research programme \citep{nechepurenko2026manipulation}. Smart contracts can enforce caps and transparent losses but cannot make the underlying event incorruptible.

\subsection{Verification programme}
Before real external LP capital, the minimum programme is:
\begin{enumerate}[leftmargin=*]
\item unit and exact-decimal parity tests against the paper fixtures;
\item stateful fuzzing of shares, debt, queues, and waterfall;
\item invariant testing across all cross-module transitions;
\item differential testing against an independent accounting implementation;
\item adapter-specific adversarial settlement tests;
\item third-party audit and public remediation record;
\item timelocked deployment with caps and emergency withdrawal procedures.
\end{enumerate}
The tagged reference harness completes items 1--4 at the research/reference boundary. Items 5--7 remain future deployment prerequisites. A passed audit would be evidence about a particular code version, not a guarantee of economic safety.

\section{Discussion and Design Implications}
\label{sec:discussion}

\subsection{The capital protocol is a market, not a static balance sheet}
The r0.2.2 accounting layer establishes who owns what and how realized losses are allocated. The endogenous extension shows that these identities do not determine how much capital remains available. Senior LPs respond to expected yield, impairment, queue delay, and peers; LBPs respond to a more severe impairment distribution and lock cost; traders respond to rates and leverage gates; market makers and liquidators respond to compensation and adverse conditions. The protocol therefore needs two kinds of correctness: transition correctness on-chain and participation robustness at the economic layer.

\subsection{High synthetic yield is not evidence of safety}
The full phased configuration produces a mean synthetic annualized Senior return of 20.6 percent while Senior principal is impaired in 57.3 percent of 104-week paths. Both values arise from the same author-specified rates, utilization, trader edge, and loss distributions. Yield and risk move together. A deck or interface that presents the return without the path-level impairment result would invert the meaning of the experiment. No public APY should be inferred before empirical calibration and legal, accounting, and security review.

\subsection{LBP capital is active risk capital}
LBP impairment occurs in 94.8 percent of full phased paths, and cumulative LBP draw exceeds initial average capital because providers can replenish and be impaired repeatedly. The junior layer is therefore not a passive insurance wrapper. It is scarce, expensive, encumbered provider capital whose compensation must be sufficient to sustain participation across states. If the protocol underprices that role, the layer disappears precisely when it is most valuable. If it overprices the role, Senior economics or protocol contribution margin may become unattractive.

\subsection{Bonded execution can expand risk}
H4 fails because modest bonded capacity expands the admission envelope. The result does not show that bonds are useless. It shows that execution capacity should be credited on a stressed delivered-value basis and that incremental debt admission must be smaller than the protected capacity increment after concentration and correlation haircuts. A market-maker partnership is therefore both an execution relationship and a credit-policy input; it is not automatically a reserve substitute.

\subsection{Utilization pricing can become wrong-way}
Stress can reduce admitted demand and available supply simultaneously. In the registered experiment, mean utilization and the utilization-only borrow rate are lower in the stress regime than in the normal regime despite higher impairment. This is the wrong-way mechanism in \eqref{eq:wrong-way-rate}. A production rate policy needs an explicit risk spread, reserve deficit term, or exposure-price component in addition to utilization. Otherwise the protocol can make credit cheaper when expected loss rises.

\subsection{Leverage policy is capital policy}
The $2\times$ and $5\times$ states are not adjacent settings of the same product. At $2\times$, half of gross exposure is trader equity; at $5\times$, only one fifth is. In the full synthetic configuration, shifting from $2\times$ only to a $5\times$-heavy mix raises Senior-loss incidence by 159.1 percent and lowers accepted demand. This supports a phased policy: $2\times$ is a distinct capital regime for early validation, $3\times$ is a gated extension, and up to $5\times$ should require both account and market eligibility plus demonstrated exit capacity.

\subsection{Class-specific caps can dominate raw event risk}
The sports pool has a higher raw event-risk calibration but a $3\times$ leverage cap. Its realized pool-week Senior-loss incidence is lower than macro and election in the full phased configuration. This is not evidence that sports is intrinsically safer. It shows that a binding class-specific leverage cap can dominate a raw shock parameter. Event integrity, insider exposure, and manipulation concerns remain separate eligibility inputs \citep{nechepurenko2026manipulation}.

\subsection{Isolation has three meanings}
Strict pool accounting prevents one pool's local liability from becoming another pool's debt. It does not eliminate common stablecoin, venue, oracle, or infrastructure shocks, and it does not prevent one pool from consuming a shared reserve before another. The 0.842 common-shock correlation is a deliberately designed negative control. Public claims should therefore distinguish ledger isolation, economic correlation, and protection contagion.

\subsection{Venue independence remains capability-based}
The capital core should not depend on one venue granting a bespoke margin vault. But a protocol-native account, a verifiable on-chain venue, an operator-gated venue, and a custodial claim cannot share one undifferentiated mandate. Each pool must disclose which adapter capabilities it accepts, which execution agents it depends on, and what evidence is sufficient for settlement-confirmed repayment. Venue independence is achieved by explicit capability classes and isolation, not by pretending every adapter provides the same guarantees.

\subsection{Research and product boundary}
Paper I establishes the position-level Debt-Free Finality mechanism. Paper II r0.2.2 establishes protocol accounting and implementation parity. This revision adds an endogenous synthetic capital market and exposes design tensions that static fixtures cannot reveal. None establishes live commercial viability. The next empirical programme should calibrate provider supply, LBP participation, MM delivery, execution deterioration, queue behavior, and common-factor losses on actual adapters before public deposits or APY claims are considered.

\section{Limitations and Empirical Agenda}
\label{sec:limitations}

\subsection{Synthetic rather than empirical behavior}
All new agent outputs come from author-specified decision rules and shock processes. Fixed seeds and paired paths improve reproducibility and comparative precision, but do not address model misspecification. The model is not fitted to Axient users, DeFi LPs, prediction-market market makers, or any particular venue. Its incidence, return, and participation values are not forecasts.

\subsection{Release registration is not external preregistration}
The seven hypotheses were written into a versioned parameter file before the final execution, and the failed H4 threshold was not modified after observation. That discipline reduces some researcher degrees of freedom but does not provide the institutional separation of a public preregistration. Future empirical work should preregister hypotheses and analysis plans externally before data collection or holdout evaluation.

\subsection{Behavioral-rule and learning limitations}
The agents use reduced-form response rules, fixed heterogeneous traits, and limited state memory. They do not learn strategic models of one another, form Bayesian beliefs about hidden venue state, optimize over long horizons, collude, fork governance, or migrate across protocols. Logistic participation and threshold entry are transparent, but other plausible rules can produce different equilibria. Global sensitivity and model-comparison exercises remain necessary.

\subsection{No full strategic equilibrium}
The paper provides sufficient conditions and an agent simulation, not a solved equilibrium among traders, Senior LPs, LBPs, market makers, liquidators, and governance. Bond size, commitment premium, ordinary depth, liquidator bounty, rates, reserve shares, and admission can all adjust jointly. The capacity-induced-shortfall finding is a comparative mechanism result, not an equilibrium characterization.

\subsection{External execution and settlement}
The protocol can reserve capital and enforce priorities but cannot force an external venue to quote, match, or settle. The agent model represents delivery and settlement through synthetic capacities and shocks. Real adapter semantics require venue-specific tests, chain evidence, operational service levels, and failure-recovery procedures.

\subsection{Withdrawal model}
The run map includes heterogeneous outside options, delay aversion, and herding, but omits secondary trading of pool shares, off-chain side agreements, strategic deposit timing, governance intervention, fee changes, and capital migration to competing protocols. Request-time neutrality holds only when valid loss recognition occurs before payment. Delayed or manipulated recognition can still transfer latent loss.

\subsection{Reserve and LBP replenishment}
The reserve theorem assumes a sustained positive inflow and no new draw until the target is restored. The agent model allows repeated losses but does not solve for an optimal reserve target, commitment fee, or LBP lock term. Repeated junior impairment may make the layer unavailable at any price or may require long-duration capital with materially different legal and economic treatment.

\subsection{Common-factor model}
Stablecoin, venue, oracle, and infrastructure shocks are represented by stylized common factors. The model does not include network topology among multiple stablecoins, rehypothecation, bridge exposure, cross-chain finality, correlated smart-contract bugs, or endogenous asset-price fire sales. The measured 0.842 correlation is a property of the registered generator, not an estimate for production pools.

\subsection{Market and manipulation feedback}
Trader demand and execution capacity respond to a risk signal, but the model does not generate a full endogenous event-price path or allow traders to manipulate the real-world outcome. Leverage can amplify informed-flow and outcome-manipulation incentives \citep{nechepurenko2026manipulation}; those channels require separate surveillance, eligibility, and empirical evaluation.

\subsection{Contract implementation}
The tagged private reference implementation covers registered accounting semantics, typed settlement evidence, admitted Senior fees, integrated loss allocation, fuzz tests, stateful invariants, and deterministic differential vectors. No production smart contracts, external venue adapter, independent security audit, or deployed bytecode are evaluated. Gas bounds, upgrade architecture, cross-chain design, governance hardening, formal verification coverage, and operational key risk remain engineering work.

\subsection{Legal and governance scope}
The paper is not legal advice. Permissionless submission does not eliminate regulation, contractual obligations, sanctions exposure, market-conduct rules, consumer-protection requirements, fiduciary questions, or licensing of lending and derivatives activity. Public, permissioned, and bilateral pools may have different requirements.

\subsection{Empirical calibration programme}
A future study should externally preregister and estimate:
\begin{itemize}[leftmargin=*]
\item Senior and LBP deposit, withdrawal, lock, and replenishment responses to realized yield and loss;
\item quote-to-match and match-to-settlement deterioration by venue, market class, and time to close;
\item MM ordinary-depth withdrawal, bonded commitment delivery, bond recovery, and concentration;
\item liquidator cost, entry, competition, latency, and realized execution capacity;
\item actual hard-flat completion, partial-fill, settlement-failure, and redemption-delay distributions;
\item run behavior, queue delay, and secondary liquidity under public stress disclosures;
\item common stablecoin, venue, oracle, bridge, and infrastructure shocks;
\item protocol contribution margin and provider returns under observed cash flows;
\item results separately for $2\times$, $3\times$, and $5\times$ tiers and by event class.
\end{itemize}
Calibration and out-of-sample validation must use versioned holdouts. Parameters should not be enlarged retrospectively merely to classify observed failures as inside the model.

\section{Conclusion}
\label{sec:conclusion}

A physically backed leveraged event position requires real stablecoin credit. Once external participants supply that credit, Axient becomes an on-chain capital protocol rather than only a trading interface. Its core problem is to connect event traders, Senior Credit LPs, market makers, liquidators, and Liquidation Backstop Providers without allowing an operator to rewrite debt, ownership, priority, or realized loss allocation in a private database.

The earlier releases formalize that protocol layer. They define Senior and junior shares, debt shares, utilization pricing, idempotent partial settlement, debt-first trader residual release, capability-based adapters, collateralized MM commitments, loss-participating queues, isolated pools, non-redeemable reserves, integer rounding, and a deterministic waterfall, and tie those semantics to a parity-tested private reference implementation.

This revision asks what happens when capital and execution participants react. The formal extension establishes existence and stability conditions for utilization equilibria, identifies wrong-way utilization pricing, specifies market-maker and liquidator participation constraints, proves a capacity-induced-shortfall condition, gives withdrawal-run and reserve-replenishment results, and separates direct liability contagion from common-factor and protection contagion. The fixed-seed agent experiment then compares four protocol configurations across three pools and three leverage policies.

The results are deliberately mixed. Six of seven registered hypotheses pass: layered protection reduces Senior-loss incidence, endogenous supply supports more accepted credit than the Senior-only benchmark, LBP participation remains material in ordinary states, queued shares do not gain request-time priority over a common loss, common factors preserve cross-pool correlation, and a $5\times$-heavy mix is materially more damaging than $2\times$ only. H4 fails: modest bonded market-making capacity expands admitted debt faster than it reduces unit shortfall. The full phased configuration still impairs Senior principal in 57.3 percent of two-year synthetic paths and LBP capital in 94.8 percent. Its high modeled return compensates for high modeled risk; it is not an APY forecast.

Three design implications follow. First, admission control must credit execution capacity conservatively and separately from cash reserves. Second, interest pricing requires an explicit risk component because utilization can fall under stress. Third, leverage tiers are capital regimes: $2\times$, $3\times$, and $5\times$ differ not only in trader experience but in credit intensity, exit fragility, junior-capital demand, and system-wide capacity.

The architectural conclusion remains: the financial source of truth should be on-chain, while data, quoting, routing, and keeper execution can remain off-chain if agents are replaceable and contracts enforce admissibility. Venue independence is achieved through explicit capability classes and isolated mandates, not through uniform trust assumptions.

The next research step is empirical rather than another layer of synthetic confidence. It should calibrate provider participation, market-maker delivery, liquidation capacity, queue behavior, and common-factor losses on actual venue adapters with externally registered holdouts. Axient's present contribution is a falsifiable protocol design, a parity-tested reference implementation, and a transparent agent-based stress laboratory. It is not a promise of yield, insurance, production safety, or universal liquidation success.

\appendix
\section{Notation and Glossary}
\label{app:glossary}

\subsection{Sets, indices, and clocks}
\begin{longtable}{>{\raggedright\arraybackslash}p{.20\textwidth}>{\raggedright\arraybackslash}p{.71\textwidth}}
\toprule
Symbol & Definition\\
\midrule
\endhead
$p\in\Pset$ & Credit pool.\\
$v\in\Vset$ & External venue or protocol-native exposure adapter.\\
$e\in\Eset$ & Event or event-linked market.\\
$i\in\Iset$ & Financed trader position.\\
$m\in\Mset$ & Market maker or bonded execution provider.\\
$t$ & Contract checkpoint, block, or accounting time.\\
$u$ & Position-level hard-flat decision time inherited from the Debt-Free Finality mechanism.\\
$\nu$ & Match-completion time for a position-level execution policy.\\
$\sigma$ & Settlement-confirmed debt-extinction time.\\
$\tau_f$ & Final event payout-vector time.\\
$\tau_r$ & Confirmed redemption time.\\
$T_c^{\mathrm{sched}}$, $T_c^{\mathrm{act}}$ & Scheduled and actual cessation of risk-reducing venue execution.\\
\bottomrule
\end{longtable}

\subsection{Position quantities}
\begin{longtable}{>{\raggedright\arraybackslash}p{.20\textwidth}>{\raggedright\arraybackslash}p{.71\textwidth}}
\toprule
Symbol & Definition\\
\midrule
\endhead
$C_i$ & Trader collateral assigned to position $i$.\\
$L_i$ & Admitted leverage, bounded by $1\le L_i\le5$.\\
$N_i=L_iC_i$ & Gross acquisition notional.\\
$D_{i,t}$ & Settlement-recognized debt, including accrued interest.\\
$d_i$ & Position debt shares.\\
$I^B_{p,t}$ & Pool borrow index, so $D_{i,t}=d_iI^B_{p,t}$.\\
$K_{i,t}$ & Settled position cash.\\
$Q_{i,t}$ & Recognized position asset quantity or venue claim.\\
$P^{\mathrm{pend}}_{i,t}$ & Matched or submitted but unsettled proceeds.\\
$P^{\mathrm{set}}_{i,t}$ & Settlement-confirmed proceeds available for debt allocation.\\
$X_{i,t}$ & Total settlement-recognized position value controlled by the position account.\\
$E_{i,t}=[X_{i,t}-D_{i,t}]^+$ & Trader residual equity.\\
$\ell_{i,t}=[D_{i,t}-X_{i,t}]^+$ & Latent or recognized credit shortfall, depending on state.\\
$\underline B_{i,u,\Delta}(x)$ & Conservative lower settled-proceeds envelope for a sale of $x$, inherited from the position-level mechanism.\\
$\overline H_{i,u,\Delta}$ & Upper bound on debt and execution costs over the settlement horizon.\\
$m_i$ & Explicit position-level operational buffer.\\
$\Gamma_i(u)$ & Robust coverage ratio used only to rank emergency execution; not an additive solvency measure.\\
\bottomrule
\end{longtable}

\subsection{Pool quantities and claim classes}
\begin{longtable}{>{\raggedright\arraybackslash}p{.20\textwidth}>{\raggedright\arraybackslash}p{.71\textwidth}}
\toprule
Symbol & Definition\\
\midrule
\endhead
$A_{p,t}$ & Aggregate settled cash controlled by the local pool, including tagged sub-accounts.\\
$A^L_{p,t}$ & Unencumbered senior cash eligible for new lending.\\
$A^W_{p,t}$ & Unencumbered cash eligible for immediate withdrawal service.\\
$B_{p,t}$ & Recognized debt receivable.\\
$\mathcal A_{p,t}=A_{p,t}+B_{p,t}$ & Recognized local pool assets.\\
$V^S_{p,t}$ & Senior LP net asset value.\\
$V^J_{p,t}$ & LBP net asset value.\\
$R^M_{p,e,v,t}$ & Market, event, venue, or risk-bucket reserve.\\
$R^P_{p,t}$ & Pool reserve.\\
$G_t$ & Optional shared protocol-global reserve.\\
$T_{p,t}$ & Protocol treasury or accrued protocol-fee claim inside the local accounting boundary.\\
$\mathcal C_{p,t}$ & Aggregate recognized local claims and equity layers.\\
$Z^S_{p,t}$, $Z^J_{p,t}$ & Senior and LBP share supplies.\\
$\pi^S_{p,t}=V^S_{p,t}/Z^S_{p,t}$ & Senior share price.\\
$\pi^J_{p,t}=V^J_{p,t}/Z^J_{p,t}$ & Junior share price.\\
$W_{p,t}$ & Outstanding withdrawal claims.\\
$U_{p,t}=B_{p,t}/(A^L_{p,t}+B_{p,t})$ & Senior credit utilization.\\
$r^B(U)$ & Borrow-rate policy curve.\\
$H_{p,t}$ & Gross accrued borrower interest over an accounting interval.\\
$\alpha_S,\alpha_J,\alpha_R,\alpha_P,\alpha_M$ & Non-negative interest-allocation weights for senior, LBP, reserve, protocol, and MM claims.\\
$H^{\mathrm{acc}}_{p,t}$ & Interest accrued into debt receivables.\\
$H^{\mathrm{cash}}_{p,t}$ & Interest received as settled cash.\\
$H^{\mathrm{pend}}_{p,t},H^{\mathrm{liq}}_{p,t},H^{\mathrm{set}}_{p,t}$ & Conservative NAV holdbacks for pending-loss, liquidity, and settlement uncertainty.\\
$\widetilde{\mathcal A}_{p,t}$ & Conservative recognized assets used for withdrawal pricing after a risk trigger.\\
$\delta_t$ & Explicit rounding-dust balance.\\
$u_A,u_Z$ & One asset and one share base unit.\\
$v_A,v_Z$ & Virtual asset/share offsets used in initial-share protection.\\
\bottomrule
\end{longtable}

\subsection{Protection and execution quantities}
\begin{longtable}{>{\raggedright\arraybackslash}p{.20\textwidth}>{\raggedright\arraybackslash}p{.71\textwidth}}
\toprule
Symbol & Definition\\
\midrule
\endhead
$H_k$ & Capacity of waterfall layer $k$.\\
$a_k$ & Loss allocated to waterfall layer $k$.\\
$\rho_k$ & Residual shortfall after waterfall layer $k$.\\
$J_{p,t}$ & LBP junior capital balance; economically represented by $V^J_{p,t}$.\\
$F^{\mathrm{liq}}(a)$ & Liquidator compensation for valid settled action $a$.\\
$\kappa_m$ & Collateralized MM commitment tuple.\\
$Q_m$ & Maximum quantity under MM commitment $m$.\\
$\underline P_m$ & Minimum price or net-proceeds condition of a commitment.\\
$\Gamma_m$ & Admissible execution conditions of a commitment; unrelated to position ratio $\Gamma_i$.\\
$b_m$ & Stablecoin bond or atomic-delivery value supporting commitment $m$.\\
$c_m(x)$ & Incremental robust recovery credited to commitment $m$ for quantity $x$.\\
$y_m(x)$ & Realized incremental settled delivery.\\
$s_m(x,y)$ & Enforceable bond transfer after underperformance.\\
$K_p^\star$ & Maximum registered residual shortfall requiring protected capital.\\
$\Omega_p^\star$ & Finite registered scenario family for pool $p$.\\
\bottomrule
\end{longtable}

\subsection{Roles}
\begin{longtable}{>{\raggedright\arraybackslash}p{.30\textwidth}>{\raggedright\arraybackslash}p{.61\textwidth}}
\toprule
Term & Definition\\
\midrule
\endhead
Trader & Supplies first-loss collateral and receives residual event exposure after debt priority.\\
Senior Credit LP & Supplies revolving stablecoin principal, receives priority repayment and senior pool returns, and may suffer loss after earlier layers are exhausted.\\
Liquidator & Executes an admissible risk-reducing transition and receives a bounty; need not supply principal-risk capital.\\
Liquidation Backstop Provider (LBP) & Supplies junior loss-absorbing provider capital ahead of senior principal and receives designated income; may, but need not, operate a liquidator.\\
Market maker & Supplies ordinary quotes, hedging, RFQ, or collateralized executable commitments.\\
Keeper & Replaceable off-chain agent that observes state and submits admissible transactions.\\
Governance & Timelocked authority over prospective mandates, caps, adapter versions, and emergency controls.\\
\bottomrule
\end{longtable}

\subsection{Venue capability and backing terms}
\begin{longtable}{>{\raggedright\arraybackslash}p{.30\textwidth}>{\raggedright\arraybackslash}p{.61\textwidth}}
\toprule
Term & Definition\\
\midrule
\endhead
Venue capability vector $\chi_v$ & Binary or graded evidence for contract custody, lender lien, liquidation authority, verifiable settlement, verifiable redemption, and withdrawal lock.\\
Mode A: native physical backing & Contract-controlled recognized outcome assets with enforceable lien and liquidation.\\
Mode B: verifiable venue settlement & Publicly verifiable execution and settlement with some operator or signer dependence.\\
Mode C: attested or custodial backing & Assets or claims depend on a custodian, controlled account, or attestation.\\
Mode D: protocol-native synthetic & Axient issues the user claim and hedges gross or net exposure externally; not equivalent to physical backing.\\
Strict isolation & No shared cash, reserves, junior capital, bonds, collateral, guarantees, or write access across pools.\\
Protection contagion & Reduction in one pool's future shared protection caused by another pool's draw, without direct transfer of the first pool's liability.\\
\bottomrule
\end{longtable}

\subsection{Selected state names}
\begin{longtable}{>{\raggedright\arraybackslash}p{.30\textwidth}>{\raggedright\arraybackslash}p{.61\textwidth}}
\toprule
State & Meaning\\
\midrule
\endhead
\state{Requested} & User or LP intent received but not yet admitted.\\
\state{Admitted} & All current admission gates passed; funding may proceed.\\
\state{FundingPending} & Credit and collateral transfers or venue acquisition are pending.\\
\state{Open} & Position assets and debt are recognized.\\
\state{ReduceOnly} & Risk may be reduced but not increased.\\
\state{Liquidating} & Risk-reducing execution is in progress.\\
\state{SettlementPending} & Match exists but repayment cash is not yet recognized.\\
\state{PartiallySettled} & Some unique settlement evidence has been applied but debt or matched capacity remains.\\
\state{EvidenceConflict} & Settlement or loss evidence is contradictory, duplicated, or invalid pending reconciliation.\\
\state{Repaid} & Debt shares have been fully burned against settled value.\\
\state{Shortfall} & Recognized position recovery is below debt.\\
\state{WaterfallFinalized} & Loss allocations and share-price effects are final.\\
\state{EventCloseException} & Venue became non-executable before ordinary debt-clearing completion.\\
\state{Recovery} & Pool is processing loss, reconciliation, or constrained withdrawals.\\
\bottomrule
\end{longtable}

\subsection{Endogenous-capital and agent-model quantities}
\begin{longtable}{>{\raggedright\arraybackslash}p{.22\textwidth}>{\raggedright\arraybackslash}p{.69\textwidth}}
\toprule
Symbol & Definition\\
\midrule
\endhead
$\Iset_S,\Iset_J,\Iset_T$ & Finite sets of Senior LPs, LBPs, and traders in the endogenous model.\\
$\bar s_i,\bar j_j,c_n$ & Senior endowment, LBP endowment, and trader collateral budget.\\
$o_i,o_j^J$ & Outside opportunity or required return of a capital provider.\\
$G_i,H_j$ & Bounded participation functions for Senior and LBP capital.\\
$y_t^S,y_t^J$ & Expected Senior and LBP provider compensation.\\
$\mu_t^S,\mu_t^J$ & Expected Senior and LBP impairment.\\
$q_t,w_t,\ell_t$ & Expected queue delay, observed withdrawal rate, and LBP encumbrance cost.\\
$S_t,J_t$ & Endogenous aggregate Senior and LBP capital.\\
$a_{n,t},\gamma_n,\zeta_n$ & Trader perceived edge, risk aversion, and rate sensitivity.\\
$U_{n,t}(L)$ & Trader utility of leverage tier $L$.\\
$B_t(r)$ & Aggregate requested or admitted credit demand at rate $r$, as stated.\\
$\Phi(U)$ & Utilization response map induced by supply, demand, and the rate curve.\\
$U^*$ & Fixed-point utilization satisfying $U^*=\Phi(U^*)$.\\
$\omega$ & Damping parameter in the utilization update.\\
$\mu$ & Expected-loss or risk-state input used in the explicit risk spread.\\
$w_L$ & Gross-exposure weight assigned to leverage tier $L$.\\
$G_t^{\max}$ & Maximum gross open interest supported by endogenous lendable capital and a leverage mix.\\
$m^o_{k,t},m^b_{k,t}$ & Ordinary revocable and bonded commitment capacity of market maker $k$.\\
$b_{k,t},s_{k,t}$ & MM bond and enforceable slashing amount.\\
$g_{k,t}$ & Private gain from strategic non-delivery.\\
$q_{\ell,t},c_{\ell,t},p^{\mathrm{succ}}_{\ell,t}$ & Liquidator capacity, cost, and estimated success probability.\\
$\mathcal A_t$ & Active liquidator set; not to be confused with pool assets $\mathcal A_{p,t}$.\\
$Q_t$ & Aggregate liquidation notional required at time $t$.\\
$M,D(M),h(M)$ & Credited backstop capacity, admitted debt, and unit shortfall.\\
$\mathcal L(M)$ & Aggregate expected shortfall as a function of credited capacity.\\
$x_{i,t},\bar x_t,F$ & Individual and aggregate withdrawal fractions and the aggregate run response map.\\
$R_t^*,\widehat{\mathrm{ES}}_t$ & Dynamic reserve target and registered expected-shortfall proxy.\\
$Z_t,\beta_p,\varepsilon_{p,t}$ & Common loss factor, pool loading, and idiosyncratic loss.\\
\bottomrule
\end{longtable}

\subsection{Agent evidence terms}
\begin{longtable}{>{\raggedright\arraybackslash}p{.30\textwidth}>{\raggedright\arraybackslash}p{.61\textwidth}}
\toprule
Term & Definition\\
\midrule
\endhead
Path-level Senior-loss incidence & Fraction of 104-week synthetic paths in which any Senior impairment occurs; not a weekly default probability.\\
Pool-week loss incidence & Fraction of pool-week observations with positive Senior impairment.\\
Accepted-demand ratio & Admitted financed credit divided by synthetic requested credit.\\
Run-week incidence & Fraction of pool-week observations crossing the registered withdrawal-run threshold.\\
LBP impairment incidence & Fraction of paths with at least one LBP impairment.\\
Cumulative LBP draw / capital & Sum of LBP loss allocations over time divided by mean deployed LBP capital; can exceed one when capital is replenished and impaired repeatedly.\\
Release-registered hypothesis & Threshold stored in the versioned parameter registry before the final execution; not externally preregistered.\\
Scalar invariant evaluation & One evaluated state assertion for one simulated state element; counts are computational evidence, not independent theorem proofs.\\
\bottomrule
\end{longtable}

\section{Supplementary Proofs and Identities}
\label{app:proofs}

\subsection{Fee-routing conservation}
\begin{proof}[Proof of \Cref{prop:fee-conservation}]
Using \eqref{eq:interest-weights},
\[
\Delta V^S+\Delta V^J+\Delta R+\Delta T+\Delta M
=H(\alpha_S+\alpha_J+\alpha_R+\alpha_P+\alpha_M)=H.
\]
Non-negativity of the weights gives non-negative allocations. \qedhere
\end{proof}

\subsection{Share-price monotonicity}
\begin{proof}[Proof of \Cref{cor:share-monotonicity}]
By \Cref{prop:no-dilution}, proportional deposits do not change $\pi^S$. A withdrawal that burns shares at the prevailing price removes equal proportions of NAV and share supply and also leaves the price unchanged. Between such transitions, retained non-negative income weakly increases $V^S$ while $Z^S$ is fixed, and the absence of senior loss prevents a negative NAV jump. Therefore $\pi^S$ is weakly increasing. \qedhere
\end{proof}

\subsection{Credit geometry}
\begin{proof}[Proof of \Cref{prop:credit-geometry}]
For $L>1$,
\[
\delta'(L)=\frac{1}{L^2}>0,
\qquad
\kappa'(L)=-\frac{1}{(L-1)^2}<0.
\]
Moreover,
\[
\delta(L)\kappa(L)=
\frac{L-1}{L}\frac{L}{L-1}=1.
\]
Thus the financed fraction rises and gross exposure per unit of credit falls with leverage. \qedhere
\end{proof}

\subsection{Proportional reserve allocation}
\begin{proof}[Proof of \Cref{prop:pro-rata-reserve}]
If $\sum_j\ell_j\le R$, then $g_i=\ell_i$ and all claims are immediate. Otherwise, $0<R/\sum_j\ell_j<1$, so $0\le g_i<\ell_i$ for every positive shortfall. Summing gives
\[
\sum_i g_i=R\frac{\sum_i\ell_i}{\sum_j\ell_j}=R.
\]
The coverage ratio $g_i/\ell_i$ equals the common factor $R/\sum_j\ell_j$. \qedhere
\end{proof}

\subsection{Cumulative-capacity characterization of the waterfall}
Let $C_k=\sum_{j=1}^{k}H_j$.
\begin{lemma}[Closed-form residual]
\label{lem:waterfall-closed-form}
For the waterfall in \Cref{def:waterfall},
\begin{equation}
\rho_k=\pospart{\ell-C_k},
\qquad
\sum_{j=1}^{k}a_j=\min\{\ell,C_k\}.
\label{eq:waterfall-closed-form}
\end{equation}
\end{lemma}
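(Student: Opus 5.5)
The plan is induction on $k$, with $C_0=0$. The second identity in \eqref{eq:waterfall-closed-form} will then follow from the first through the telescoping identity already used in \Cref{thm:waterfall}.

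\emph{Base case.} At $k=0$ we have $\rho_0=\ell=\pospart{\ell-0}$, which holds because $\ell\ge0$.

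\emph{Inductive step.} Assume $\rho_{k-1}=\pospart{\ell-C_{k-1}}$. By \eqref{eq:waterfall-allocation}--\eqref{eq:waterfall-residual},
\[
\rho_k=\rho_{k-1}-\min\{H_k,\rho_{k-1}\}=\pospart{\rho_{k-1}-H_k}.
\]
It therefore suffices to prove the scalar identity
\[
\pospart{\pospart{x}-h}=\pospart{x-h}\qquad\text{for all } h\ge0,
\]
applied with $x=\ell-C_{k-1}$ and $h=H_k$, since $C_{k-1}+H_k=C_k$. The identity follows from two cases.
\begin{itemize}
\item If $x\ge0$, the inner positive part is just $x$, and the two sides coincide.
\item If $x<0$, the left side is $\pospart{-h}=0$. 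The right side is also $0$, because $x-h<0$.
\end{itemize}
This case split uses non-negativity of $H_k$. It is the only place where the hypothesis $H\in\R_+^K$ enters, and it is the one point needing care; the rest is mechanical.

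\emph{Cumulative allocation.} Summing $a_j=\rho_{j-1}-\rho_j$ for $j=1,\ldots,k$ gives
\[
\sum_{j=1}^{k}a_j=\rho_0-\rho_k=\ell-\pospart{\ell-C_k}=\min\{\ell,C_k\}.
\]
The last equality uses the elementary identity $y-\pospart{y-c}=\min\{y,c\}$. Taking $k=K$ recovers \eqref{eq:waterfall-conservation} in closed form. This also gives a direct alternative route to \Cref{thm:finite-scenario-protection}: if $\ell\le C_{s-1}$, then $\rho_{s-1}=0$, so the senior layer $s$ is never reached.
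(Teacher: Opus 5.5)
Your proof is correct and follows the paper's route. Both argue by induction on $k$ through $\rho_k=\pospart{\rho_{k-1}-H_k}$ and obtain the allocation identity from conservation (telescoping); you additionally start at $k=0$ and make explicit the step $\pospart{\pospart{x}-h}=\pospart{x-h}$ for $h\ge0$, which the paper's final equality uses without comment.
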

\begin{proof}
For $k=1$, $\rho_1=\ell-\min\{H_1,\ell\}=[\ell-H_1]^+$. Suppose $\rho_{k-1}=[\ell-C_{k-1}]^+$. Then
\[
\rho_k=\rho_{k-1}-\min\{H_k,\rho_{k-1}\}
=[\rho_{k-1}-H_k]^+
=[\ell-C_k]^+.
\]
The allocation identity follows from conservation. \qedhere
\end{proof}

\subsection{Necessity of collateral for credited MM substitution}
\begin{proposition}[Uncollateralized promise cannot improve a robust lower bound]
\label{prop:uncollateralized-mm}
If a market-maker commitment admits a failure path with $y_m=0$ and no enforceable transfer $s_m=0$, then no positive amount can be credited to a worst-case recovery lower bound over a set containing that path.
\end{proposition}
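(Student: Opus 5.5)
The plan is to argue directly from the definition of a worst-case (robust) lower bound. A credited amount is admissible only if it is attained on every path in the scenario set. Fix a scenario set $\Omega$ that contains the failure path $\omega_0$, on which $y_m(x)=0$ and $s_m(x,y)=0$. On any path $\omega$, the incremental recovery attributable to commitment $m$ is the same quantity used in the proof of \Cref{thm:mm-substitution}, namely $y_m(x;\omega)+s_m(x,y;\omega)$. The worst-case incremental recovery over $\Omega$ is therefore
\[
\inf_{\omega\in\Omega}\bigl(y_m(x;\omega)+s_m(x,y;\omega)\bigr).
\]

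The key steps are as follows.

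\begin{enumerate}
\item Note that $y_m\ge0$ and $s_m\ge0$ on every path, so the infimum is non-negative.
\item Evaluate the summand at $\omega_0$, where it equals $0$. This gives an upper bound of zero on the infimum.
\item Conclude that the worst-case incremental recovery equals exactly $0$.
\item Suppose a credit $c_m(x)>0$ were entered into the robust lower bound, i.e.\ into $\underline B$ or the certificate in \eqref{eq:position-certificate}, in addition to the uncommitted envelope. The claimed lower bound would then exceed the realized recovery on $\omega_0$ by $c_m(x)$. This contradicts the defining property that a lower bound holds on every path in $\Omega$.
\end{enumerate}

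So the only admissible credit is zero. This is the contrapositive counterpart of \Cref{thm:mm-substitution}: inequality \eqref{eq:mm-shortfall-slash} fails at $\omega_0$ whenever $c_m>0$, because $0<c_m-0$.

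The one point needing care, which I expect to be the main (mild) obstacle, is to make precise that the incremental recovery from the commitment is additive relative to the uncommitted robust book envelope. Otherwise one might try to argue that the commitment indirectly raises recovery through other channels on $\omega_0$. I would handle this by invoking the definition of $c_m(x)$ as the increment relative to the uncommitted envelope, together with the no-double-counting rule of the commitment allocation subsection. Under these, any displayed depth on $\omega_0$ is already counted in the base envelope and contributes nothing to the increment. Given that, the argument is a one-line infimum bound, and no off-chain promise, reputation or litigation value enters, since by hypothesis $s_m=0$.
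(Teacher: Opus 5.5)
Your argument is correct and is essentially the paper's proof: on the failure path the realized incremental recovery $y_m+s_m$ is zero, so any credited amount $c>0$ would exceed realized recovery on a path in the set and violate lower-bound validity. Your infimum formulation and your remark that $c_m(x)$ is measured as an increment over the uncommitted envelope make explicit what the paper's three-line proof leaves implicit.
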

\begin{proof}
On the failure path, realized incremental recovery is zero. Any credited amount $c>0$ would exceed realized recovery, contradicting lower-bound validity. Therefore the largest valid credited amount is zero. \qedhere
\end{proof}

\subsection{Global-reserve allocation and network conservation}
Suppose two pools draw $g_1$ and $g_2$ sequentially from a global reserve $G$, with $g_1\le G$ and $g_2\le G-g_1$. The reserve decline is exactly $g_1+g_2$. If the draws reduce recognized pool shortfalls by the same amounts, network book closure is preserved when the global reserve is included in the consolidated accounting boundary. Local pool closure is preserved by recording a contribution from the global layer and an equal shortfall allocation.

\subsection{Queue theorem at variable price}
The bounds in \Cref{thm:queue-clearance} do not require knowledge of service order. If each paid share costs a contemporaneous price inside $[\underline\pi,\overline\pi]$, any complete service path costs between $Z_Q\underline\pi$ and $Z_Q\overline\pi$. The sufficient and impossible regions therefore remain valid under FIFO, epoch pro rata, or deterministic priority. The interval between them is deliberately indeterminate because the actual price path matters.

\subsection{Interest allocation and senior return}
A positive mean senior return in the synthetic experiment is not guaranteed by \eqref{eq:senior-return}. For a given path,
\[
R_T^S<0
\quad\Longleftrightarrow\quad
L_T^S>I_T^S+F_T^S.
\]
The LBP inequality is analogous. This identity is why the simulation reports negative-return incidence in addition to principal-loss incidence.

\subsection{Implementation-parity results}
\paragraph{Trader residual.}
For settled amount $y$, allocations $y^D$ and $y^F$ are capped by debt and approved fees. The residual $y^T=y-y^D-y^F$ is non-negative. The position identity assigns all recognized value above debt to trader equity; any other allocation requires a separate authorized transfer. This proves \Cref{prop:trader-residual}.

\paragraph{Settlement idempotency.}
The evidence-consumption flag and financial journal are updated atomically. An already consumed identifier cannot satisfy the transition precondition. A distinct identifier remains capped by unmatched capacity. Hence no settled unit can be credited twice, proving \Cref{prop:settlement-idempotency}.

\paragraph{Queued-share loss participation.}
Escrow changes neither total share supply nor the queued claim's rank. Applying a conservative holdback or loss before service reduces the common NAV numerator. Payment at current price therefore applies the same reduction to queued and unqueued shares, proving \Cref{prop:no-first-exit} under its timing assumptions.

\paragraph{Integer pro rata.}
The floor stage allocates no more than the target and leaves fewer than one base unit per positive claimant. The largest-remainder stage distributes exactly the residual units without exceeding claimant caps. Deterministic tie-breaks establish uniqueness, proving \Cref{prop:integer-pro-rata}.

\paragraph{Dust closure.}
Every rounded journal equals its ideal journal minus explicit residual. Posting the residual to dust restores equality. Each downward recipient allocation contributes less than one asset base unit, and summation yields \eqref{eq:dust-bound}. This proves \Cref{thm:integer-closure}.

\subsection{Endogenous-capital results}
\begin{proof}[Proof of \Cref{thm:utilization-existence}]
Continuity of $r$, $B$, and $S$, together with strict positivity of $S$, implies continuity of $\Phi$ on $[0,\bar U]$. Admission clipping makes $\Phi$ a continuous self-map of the non-empty compact convex interval $[0,\bar U]$. Brouwer's fixed-point theorem in one dimension therefore gives at least one $U^*$ with $U^*=\Phi(U^*)$. Equivalently, $f(U)=\Phi(U)-U$ is continuous and cannot map both interval endpoints strictly outward. \qedhere
\end{proof}

\begin{proof}[Proof of \Cref{prop:utilization-contraction}]
Condition~\eqref{eq:utilization-contraction} makes $\Phi$ a contraction on the complete metric space $[0,\bar U]$. Banach's fixed-point theorem gives uniqueness and global convergence of direct iteration. The damped map is $\Psi(U)=(1-\omega)U+\omega\Phi(U)$, with derivative $\Psi'(U^*)=(1-\omega)+\omega\Phi'(U^*)$. The stated absolute-value condition gives local contraction and hence local convergence. \qedhere
\end{proof}

\begin{proof}[Proof of \Cref{prop:leverage-mix-crowding}]
In \eqref{eq:endogenous-gross-capacity}, the numerator is fixed. Shifting mass $\epsilon>0$ from $L_a$ to $L_b>L_a$ changes the denominator by $\epsilon[\delta(L_b)-\delta(L_a)]\ge0$ because $\delta$ is increasing. The reciprocal therefore weakly falls. It falls strictly whenever the shift is positive and the two financed fractions differ. \qedhere
\end{proof}

\subsection{Strategic-execution results}
\begin{proof}[Proof of \Cref{prop:mm-incentive-compatibility}]
Relative to strategic non-delivery, delivery yields commitment and execution compensation and avoids enforceable slashing, while incurring execution cost and foregoing the private gain from refusal. Delivery weakly dominates exactly when the left side of \eqref{eq:mm-incentive-condition} is at least the right side. If the private gain is unbounded, every finite bond and premium is dominated on some state. \qedhere
\end{proof}

\begin{proof}[Proof of \Cref{thm:liquidator-coverage}]
Every active liquidator satisfies its private entry condition and submits at most $q_{\ell,t}$. If the adapter admits the intended risk-reducing transition and the sum of submitted capacities is at least $Q_t$, the requested notional can be assigned across active liquidators without exceeding individual capacities. This proves capacity sufficiency only; fill, price, and settlement require the separate market and adapter conditions stated in the theorem. \qedhere
\end{proof}

\begin{proof}[Proof of \Cref{prop:capacity-induced-shortfall}]
Differentiating \eqref{eq:capacity-shortfall} gives
\[
\mathcal L'(M)=D'(M)h(M)+D(M)h'(M).
\]
For positive $D$ and $h$, $\mathcal L'(M)>0$ if and only if $D'/D>-h'/h$, which is \eqref{eq:capacity-paradox-condition}. \qedhere
\end{proof}

\subsection{Withdrawal, reserve, and common-factor results}
\begin{proof}[Proof of \Cref{prop:run-contraction}]
The assumptions make $F$ a contraction from the complete metric space $[0,1]$ to itself. Banach's theorem gives one fixed point and convergence to it from every initial withdrawal rate. \qedhere
\end{proof}

\begin{proof}[Proof of \Cref{prop:request-time-neutrality}]
Let the two providers hold equal shares immediately before the request. Escrowing one provider's shares changes neither total share supply nor participation in the common NAV loss. Because payment has not finalized, both share lots are multiplied by the same post-loss share price. Their claims therefore remain equal. \qedhere
\end{proof}

\begin{proof}[Proof of \Cref{thm:reserve-replenishment}]
While $R_t<R^*$, no draw occurs and the net inflow is at least $\epsilon$, so $R_{t+1}\ge R_t+\epsilon$. By induction, after $n$ periods $R_{t_0+n}\ge R_{t_0}+n\epsilon$. Choosing $n=\lceil(R^*-R_{t_0})/\epsilon\rceil$ reaches or exceeds the target. The positive-part operator cannot reduce the value because the recursion is non-negative. \qedhere
\end{proof}

\begin{proof}[Proof of \Cref{thm:common-factor-contagion}]
Using \eqref{eq:common-factor-loss}, bilinearity gives
\[
\operatorname{Cov}(L_p,L_q)=\beta_p\beta_q\operatorname{Var}(Z)
+\beta_p\operatorname{Cov}(Z,\varepsilon_q)
+\beta_q\operatorname{Cov}(\varepsilon_p,Z)
+\operatorname{Cov}(\varepsilon_p,\varepsilon_q).
\]
Independence makes the last three terms zero, yielding \eqref{eq:common-factor-covariance}. \qedhere
\end{proof}

\subsection{Endogenous-capital results}
\begin{proof}[Proof of \Cref{thm:utilization-existence}]
By assumption, $r$, $B$, and $S$ are continuous and $S$ is strictly positive. Therefore $\Phi(U)=B(r(U),\chi)/S(r(U),\mu,q,w)$ is continuous wherever evaluated. Admission clips the image to the compact convex interval $[0,\bar U]$. Every continuous self-map of a compact interval has a fixed point. Equivalently, $g(U)=\Phi(U)-U$ is continuous, $g(0)\ge 0$, and $g(\bar U)\le 0$, so the intermediate-value theorem gives $U^*$ with $g(U^*)=0$. \qedhere
\end{proof}

\begin{proof}[Proof of \Cref{prop:utilization-contraction}]
If $\sup_U|\Phi'(U)|<1$, the mean-value theorem implies that $\Phi$ is a contraction on $[0,\bar U]$. Banach's fixed-point theorem gives uniqueness. The derivative of the damped update map $T(U)=(1-\omega)U+\omega\Phi(U)$ at the fixed point is
\[
T'(U^*)=(1-\omega)+\omega\Phi'(U^*).
\]
The standard one-dimensional local-stability condition is $|T'(U^*)|<1$, which is the stated inequality. \qedhere
\end{proof}

\begin{proof}[Proof of \Cref{prop:leverage-mix-crowding}]
The denominator in \eqref{eq:endogenous-gross-capacity} is the weighted average $\bar\delta=\sum_Lw_L\delta(L)$. Since $\delta(L)=(L-1)/L$ is strictly increasing, a mean-preserving transfer of weight from a lower to a higher leverage tier weakly increases $\bar\delta$. The numerator is fixed, so $G_t^{\max}=\bar U S_t^{\mathrm{lend}}/\bar\delta$ weakly decreases. \qedhere
\end{proof}

\subsection{Strategic-execution results}
\begin{proof}[Proof of \Cref{prop:mm-incentive-compatibility}]
Under delivery, the market maker receives $\pi^{\mathrm{com}}+\pi^{\mathrm{exec}}$ and incurs $c^{\mathrm{exec}}$. Under non-delivery it receives private gain $g$ and loses the enforceable slash $s$. Delivery weakly dominates precisely when
\[
\pi^{\mathrm{com}}+\pi^{\mathrm{exec}}-c^{\mathrm{exec}}\ge g-s,
\]
which rearranges to \eqref{eq:mm-incentive-condition}. If $g$ has no finite upper bound, then for every finite $s\le b$ there exists a state with $g$ large enough to violate the inequality. \qedhere
\end{proof}

\begin{proof}[Proof of \Cref{thm:liquidator-coverage}]
By definition, every member of $\mathcal A_t$ has entered and can submit at most $q_{\ell,t}$. If the adapter accepts the required transition and aggregate active capacity is at least $Q_t$, the active set can partition the required notional into non-negative pieces whose total is $Q_t$ and no piece exceeds the corresponding capacity. This proves capacity sufficiency. It does not imply execution price, fill, or settlement, which are not premises of the capacity statement. \qedhere
\end{proof}

\begin{proof}[Proof of \Cref{prop:capacity-induced-shortfall}]
Differentiate $\mathcal L(M)=D(M)h(M)$:
\[
\mathcal L'(M)=D'(M)h(M)+D(M)h'(M).
\]
For positive $D$ and $h$, $\mathcal L'(M)>0$ exactly when
\[
\frac{D'(M)}{D(M)}> -\frac{h'(M)}{h(M)}.
\]
Thus aggregate shortfall can rise even while unit shortfall falls. \qedhere
\end{proof}

\subsection{Run, reserve, and contagion results}
\begin{proof}[Proof of \Cref{prop:run-contraction}]
The assumption $\sup_x|F'(x)|<1$ makes $F$ a contraction on the complete metric space $[0,1]$. Banach's fixed-point theorem gives a unique fixed point and convergence of every iterated withdrawal state to it. \qedhere
\end{proof}

\begin{proof}[Proof of \Cref{prop:request-time-neutrality}]
Let two providers hold equal shares immediately before a common NAV loss. A request moves one provider's shares to escrow but does not burn them or remove them from the NAV denominator. The common loss therefore reduces the share price applied to both holdings by the same factor. If the request is unpaid before the loss, its post-loss economic claim equals the otherwise identical unqueued claim. \qedhere
\end{proof}

\begin{proof}[Proof of \Cref{thm:reserve-replenishment}]
While $R_t<R^*$, the assumptions imply $R_{t+1}\ge R_t+\epsilon$. By induction, after $n$ periods,
\[
R_{t_0+n}\ge R_{t_0}+n\epsilon.
\]
The smallest integer $n$ for which the right-hand side reaches $R^*$ is the ceiling in \eqref{eq:reserve-replenishment-bound}. \qedhere
\end{proof}

\begin{proof}[Proof of \Cref{thm:common-factor-contagion}]
For distinct pools,
\begin{align*}
\operatorname{Cov}(L_p,L_q)
&=\operatorname{Cov}(\beta_pZ+\varepsilon_p,\beta_qZ+\varepsilon_q)\\
&=\beta_p\beta_q\operatorname{Var}(Z)
+\beta_p\operatorname{Cov}(Z,\varepsilon_q)
+\beta_q\operatorname{Cov}(\varepsilon_p,Z)
+\operatorname{Cov}(\varepsilon_p,\varepsilon_q).
\end{align*}
Independence makes the last three terms zero, yielding \eqref{eq:common-factor-covariance}. \qedhere
\end{proof}

\section{Reference Algorithms}
\label{app:algorithms}

\begin{algorithm}[H]
\caption{Position admission and funding}
\label{alg:admission}
\begin{algorithmic}[1]
\Require trader collateral $C$, requested leverage $L$, market $e$, venue adapter $v$, pool $p$
\State verify account tier and jurisdiction adapter
\State verify $v$ and backing mode satisfy pool mandate
\State compute exact entry cost and robust aggregate exit certificate
\State compute $L^{\mathrm{eff}}$ from \eqref{eq:effective-leverage}
\If{$L>L^{\mathrm{eff}}$} \State reduce quote or reject \EndIf
\State compute credit $D_0=(L-1)C$ and all fee and reserve encumbrances
\State verify lendable cash, utilization, queue buffer, OI, and concentration limits
\State reserve shared-book and bonded-MM capacity atomically or by versioned allocation
\State lock trader collateral and transfer $D_0$ from \texttt{CreditPool}
\State acquire recognized position assets through $v$
\If{asset receipt and settlement evidence are invalid}
  \State enter \state{FundingFailed}; unwind or reconcile without minting live debt
\Else
  \State mint debt shares; enter \state{Open}; emit mandate and certificate hashes
\EndIf
\end{algorithmic}
\end{algorithm}

\begin{algorithm}[H]
\caption{Settlement-confirmed repayment}
\label{alg:repayment}
\begin{algorithmic}[1]
\Require position $i$, settled evidence $E$, amount $y$
\State verify $E$ under the position's versioned adapter and prevent replay
\State reject consumed evidence; cap cumulative settlement by matched capacity
\State credit the unique settled chunk to the position account
\State pay accrued interest and principal in admitted priority
\State burn corresponding debt shares with explicit rounding
\State route only the remaining residual to the trader claim
\If{$D_i=0$}
  \State release residual position assets to trader or debt-free finality account
  \State enter \state{Repaid}
\Else
  \State remain \state{Open}, \state{ReduceOnly}, or \state{Liquidating}
\EndIf
\end{algorithmic}
\end{algorithm}

\begin{algorithm}[H]
\caption{Aggregate emergency execution order}
\label{alg:aggregate-priority}
\begin{algorithmic}[1]
\Require positions sharing one executable book or commitment set
\For{each position $i$}
  \State compute $\Gamma_i$ from \eqref{eq:robust-coverage-ratio}
\EndFor
\State sort by ascending $\Gamma_i$, then earlier hard-flat deadline, then position ID
\For{each position in sorted order}
  \State allocate remaining shared-book and bonded-MM capacity
  \State submit admissible risk-reducing execution
  \State debit capacity only after deterministic allocation; never reuse consumed levels
\EndFor
\State reconcile matched and settled amounts; compute recognized shortfalls
\end{algorithmic}
\end{algorithm}

\begin{algorithm}[H]
\caption{Loss waterfall and shared-layer allocation}
\label{alg:waterfall}
\begin{algorithmic}[1]
\Require finalized position shortfalls $\ell_i$, ordered local capacities, shared reserve $R$
\State apply position-specific and market-specific layers to each $\ell_i$
\State aggregate residual shortfalls entering the shared reserve
\State allocate $R$ by \eqref{eq:pro-rata-reserve}
\For{each position $i$}
  \State apply remaining pool and senior layers using \eqref{eq:waterfall-allocation}
  \State emit every allocation and resulting share-price effect
\EndFor
\State assert conservation \eqref{eq:waterfall-conservation}
\end{algorithmic}
\end{algorithm}

\begin{algorithm}[H]
\caption{Bonded market-maker commitment settlement}
\label{alg:mm-settlement}
\begin{algorithmic}[1]
\Require commitment $m$, credited capacity $c_m$, delivered settled value $y_m$
\State verify quantity, time, price, venue, and evidence conditions
\State compute deficit $d_m=[c_m-y_m]^+$
\State transfer $\min\{d_m,b_m^{\mathrm{available}}\}$ from bond to protected account
\State reduce bond and commitment capacity by the consumed amounts
\If{delivered value plus transfer is below credited capacity}
  \State recognize uncovered commitment failure and route to waterfall
  \State suspend further credit to commitment $m$
\EndIf
\end{algorithmic}
\end{algorithm}

\begin{algorithm}[H]
\caption{Share-denominated withdrawal queue service}
\label{alg:queue}
\begin{algorithmic}[1]
\Require queued share claims, deterministic service rule, withdrawal-eligible cash $A^W$
\State escrow queued shares without removing them from loss participation
\While{$A^W>0$ and queue non-empty}
  \State select next request or epoch according to pool rule
  \State apply active conservative NAV holdbacks and compute payment at current class share price
  \State pay no more than $A^W$; update claim and NAV consistently
  \State emit partially-paid or paid state
  \State refresh $A^W$ after mandatory buffers and finalized settlements
\EndWhile
\end{algorithmic}
\end{algorithm}

\begin{algorithm}[H]
\caption{Pool risk-mode update}
\label{alg:risk-mode}
\begin{algorithmic}[1]
\Require pool state, reserve coverage, queue coverage, venue status, settlement status
\If{invalid adapter evidence or unresolved accounting difference}
  \State enter \state{Paused} or \state{Recovery}
\ElsIf{capital or protection coverage below mandate floor}
  \State enter \state{ReduceOnly}; reject new debt
\ElsIf{utilization or queue pressure above rate-limit threshold}
  \State enter \state{RateLimited}; lower caps and raise borrow rate per policy
\Else
  \State remain or return to \state{Active} after timelocked or objective recovery conditions
\EndIf
\end{algorithmic}
\end{algorithm}

\begin{algorithm}[H]
\caption{Exact integer shared-reserve allocation}
\label{alg:integer-reserve}
\begin{algorithmic}[1]
\Require integer shortfalls $\ell_i$, integer reserve capacity $R$, deterministic IDs
\State set target $T\leftarrow\min\{R,\sum_i\ell_i\}$
\For{each positive claimant $i$}
  \State $g_i\leftarrow\left\lfloor T\ell_i/\sum_j\ell_j\right\rfloor$
  \State record fractional remainder and claimant ID
\EndFor
\State allocate remaining base units by descending remainder, then ID
\State assert $0\le g_i\le\ell_i$ and $\sum_i g_i=T$
\end{algorithmic}
\end{algorithm}

\begin{algorithm}[H]
\caption{Loss-participating withdrawal request}
\label{alg:loss-participating-queue}
\begin{algorithmic}[1]
\Require owner shares $z$, active pool mandate
\State move $z$ to non-transferable escrow; do not reduce total share supply
\State retain queued shares in every income, holdback, and loss calculation
\State before service, apply valid conservative NAV holdbacks and finalized losses
\State pay $\left\lfloor z\widetilde\pi^S\right\rfloor$ only from eligible settled cash
\State burn escrowed shares only when payment is finalized
\end{algorithmic}
\end{algorithm}

\begin{algorithm}[H]
\caption{Damped endogenous rate and credit admission}
\label{alg:endogenous-rate}
\begin{algorithmic}[1]
\Require provider state, trader traits, risk state, leverage policy, utilization ceiling
\State initialize $U^{(0)}$ from outstanding debt and lendable capital
\For{$k=0,\ldots,K-1$}
  \State compute utilization component $r_U(U^{(k)})$ and explicit risk spread $r_R(\mu)$
  \State evaluate Senior and LBP participation and trader tier demand
  \State apply pool-cash, utilization, leverage, concentration, and executable-capacity gates
  \State compute clipped response $\Phi(U^{(k)})$
  \State $U^{(k+1)}\leftarrow(1-\omega)U^{(k)}+\omega\Phi(U^{(k)})$
  \If{$|U^{(k+1)}-U^{(k)}|<\varepsilon$} \State \textbf{break} \EndIf
\EndFor
\State admit debt only up to the final gated capacity and emit convergence diagnostics
\end{algorithmic}
\end{algorithm}

\begin{algorithm}[H]
\caption{Weekly agent-market transition}
\label{alg:agent-week}
\begin{algorithmic}[1]
\Require prior pool, agent, loan, queue, reserve, and regime states; fixed shock path
\State transition the common regime and reveal common and idiosyncratic shocks
\State accrue interest; move maturing debt into settlement processing
\State compute ordinary MM capacity, bonded delivery, slashing, and active liquidator capacity
\State realize settlement recovery; apply position, reserve, LBP, global, and Senior waterfall layers
\State distribute settled income; update expected loss, queue delay, and dynamic reserve targets
\State update Senior and LBP participation; service queued shares at current loss-participating NAV
\State execute \Cref{alg:endogenous-rate}; select trader tiers and admit new credit
\State assert cash, debt, share, reserve, non-negativity, cap, and waterfall invariants
\State emit weekly and event-level evidence
\end{algorithmic}
\end{algorithm}

\begin{algorithm}[H]
\caption{Risk-sensitive reserve adjustment}
\label{alg:dynamic-reserve}
\begin{algorithmic}[1]
\Require outstanding credit $B_t$, perceived loss $\widehat{\mathrm{ES}}_t$, current reserve $R_t$, settled interest $I_t$
\State compute target $R_t^*=\alpha B_t+\beta\widehat{\mathrm{ES}}_t$
\If{$R_t<R_t^*$}
  \State increase reserve interest share by at most the registered adjustment cap
  \State fund the increment from the configured Senior and protocol shares
\Else
  \State apply the base fee split
\EndIf
\State add reserve interest and valid slashing receipts; subtract finalized draws
\State prevent reserve redemption and record any target deficit separately
\end{algorithmic}
\end{algorithm}

\begin{algorithm}[H]
\caption{Endogenous weekly capital-market update}
\label{alg:endogenous-week}
\begin{algorithmic}[1]
\Require prior pool state, registered agent traits, protocol configuration, paired shock path
\State update normal/stress/crisis regime and external incident channels
\State accrue debt; mature eligible cohorts; separate matched, settled, and delayed value
\State compute revocable MM depth, bonded commitment delivery, and active liquidator capacity
\State apply settlement-confirmed repayment and the exact loss waterfall
\State route interest and slashing; update dynamic reserve target
\State update Senior and LBP participation from expected return, impairment, queue, and peer signals
\State enqueue withdrawals without burning shares; service from eligible settled cash
\State solve damped utilization response and compute leverage-tier demand
\State admit credit subject to cash, queue buffer, execution capacity, pool limits, and risk state
\State evaluate state invariants and emit evidence rows
\end{algorithmic}
\end{algorithm}

\begin{algorithm}[H]
\caption{Independent daily robustness cycle}
\label{alg:daily-robustness}
\begin{algorithmic}[1]
\Require two-pool state, venue states, provider agents, market makers, registered policy
\State update global stress and venue impairment Markov states
\State accrue position interest and shift maturity buckets
\State draw ordinary quote withdrawal and bonded-MM response
\State realize settlement recovery, principal shortfall, and protection-layer allocation
\State update conservative provider wealth, loss signals, and reserve signals
\State submit and service Senior and LBP withdrawal queues from eligible cash
\State update provider participation and admission cash buffer
\State originate new $2\times$, $3\times$, and $5\times$ cohorts within policy gates
\State record material loss, run, rejection, hard-flat, and provider-wealth metrics
\end{algorithmic}
\end{algorithm}

\section{Registered Synthetic Parameters}
\label{app:synthetic-parameters}

All values in this appendix are author-specified mechanism-comparison inputs. They are not fitted to live Axient or venue observations.

\subsection{Global settings}
\begin{table}[H]
\centering
\caption{Synthetic experiment registry.}
\label{tab:synthetic-registry}
\begin{tabularx}{\textwidth}{lX}
\toprule
Item & Registered value\\
\midrule
Paper version & r0.2.2 (stochastic generator unchanged from r0.2.0; integer fixtures unchanged from r0.2.1)\\
Evidence class & author-specified synthetic mechanism comparison\\
Senior supply & 10,000,000 settlement units\\
Monte Carlo seed / paths & 20260714 / 200,000\\
Queue seed / paths per cell & 20260715 / 20,000\\
Contagion seed / paths & 20260716 / 500,000\\
Borrow cycle & 30 days\\
Absolute leverage set & $\{2\times,3\times,5\times\}$\\
\bottomrule
\end{tabularx}
\end{table}

\subsection{Borrow-rate policy}
The rate curve in \eqref{eq:rate-curve} uses
\begin{equation}
r_0=0.06,
\quad U^\star=0.80,
\quad s_1=0.12,
\quad s_2=0.65.
\end{equation}
Utilization is sampled as $U\sim\mathrm{Beta}(5,2)$ and clipped to $[0.15,0.97]$.

\subsection{Leverage mixes}
\begin{table}[H]
\centering
\caption{Gross-OI weights by leverage tier.}
\label{tab:leverage-mix-parameters}
\begin{tabular}{lccc}
\toprule
Mix & $2\times$ & $3\times$ & $5\times$\\
\midrule
$2\times$ only & 1.00 & 0.00 & 0.00\\
Phased & 0.55 & 0.30 & 0.15\\
$5\times$ heavy & 0.25 & 0.25 & 0.50\\
\bottomrule
\end{tabular}
\end{table}

\subsection{Synthetic haircut generator}
In \eqref{eq:synthetic-haircut}:
\begin{align*}
X_0&\sim\mathrm{Beta}(2,10),\\
T_5&\sim\text{Student-}t(5),\\
I_1&\sim\mathrm{Bernoulli}(0.035), & X_1&\sim\mathrm{Beta}(2,4),\\
I_2&\sim\mathrm{Bernoulli}(0.012), & X_2&\sim\mathrm{Beta}(2,3),\\
I_3&\sim\mathrm{Bernoulli}(0.004), & X_3&\sim\mathrm{Beta}(2,2),\\
\varepsilon_j&=0.008\,\mathrm{Beta}(2,6), &
(s_{2x},s_{3x},s_{5x})&=(0,0.010,0.030).
\end{align*}
The common terms create within-path correlation across leverage tiers. The tier shift is a synthetic size-effect proxy, not an empirical estimate.

\subsection{Protection configurations and interest splits}
\begin{table}[H]
\centering
\scriptsize
\caption{Capital layers and gross-interest allocation. Values are fractions of senior supply or gross interest.}
\label{tab:protection-parameters}
\begin{tabularx}{\textwidth}{Xrrrrrrr}
\toprule
Configuration & Reserve & LBP & Senior & LBP income & Reserve & Protocol & MM\\
\midrule
Senior only & 0.00 & 0.00 & 0.90 & 0.00 & 0.00 & 0.10 & 0.00\\
Reserve & 0.01 & 0.00 & 0.82 & 0.00 & 0.13 & 0.05 & 0.00\\
Reserve + LBP & 0.01 & 0.03 & 0.76 & 0.06 & 0.13 & 0.05 & 0.00\\
Reserve + LBP + bonded MM & 0.01 & 0.03 & 0.73 & 0.06 & 0.13 & 0.05 & 0.03\\
\bottomrule
\end{tabularx}
\end{table}
Bonded capacity equals 1.5 percent of $3\times$ gross notional plus 4.5 percent of $5\times$ gross notional in the final configuration. The values are deliberately transparent policy fixtures.

\subsection{Finite scenario grid}
The grid fixes gross notional at 1,000,000 units, reserve capacity at 1 percent of senior supply, LBP capacity at 3 percent, and bonded capacity at 4 percent of gross notional for $3\times$ and $5\times$ rows in the bonded-MM configuration. Recovery ratios are $1.00$, $0.90$, $0.82$, $0.80$, $0.70$, $0.50$, and $0.00$.

\subsection{Withdrawal queue}
The queue experiment uses 100 equal-principal loans, an 8 percent cash buffer, utilization levels $0.50$, $0.70$, $0.80$, $0.90$, and $0.95$, and withdrawal fractions $0.10$, $0.20$, $0.40$, and $0.60$. Synthetic loan maturities follow a ceiling-rounded lognormal distribution with log-mean $\log(21)$ and log-standard deviation $0.65$. One percent of loans receive an additional uniformly sampled delay from 15 to 60 days. The experiment assumes constant share price, no new borrowing, and no credit loss during queue service.

\subsection{Two-pool contagion}
Each pool has senior supply 5,000,000 units, local reserve 0.5 percent, and LBP capital 2 percent. Strict isolation adds a separate 1 percent reserve to each pool. The shared regime replaces those amounts with one 2 percent global reserve, consumed first by pool one and then by pool two. Synthetic shortfalls contain common positive Student-$t(4)$ exposure, a 3 percent ordinary incident channel, and a 0.4 percent severe incident channel.

\subsection{Implementation-parity registry}
The r0.2.1 integer verifier, retained unchanged in r0.2.2, uses one settlement-asset base unit, a $10^{27}$ debt-index scale, virtual offsets $v_A=1$ and $v_Z=10^6$, and a configured public minimum-deposit fixture of $10^6$ asset base units. These are reference arithmetic parameters, not recommended production values. Twelve named fixtures and 27,441 deterministic checks are generated independently from the stochastic experiments.

\subsection{Agent-based registry r0.3.1}
The agent experiment is a separate evidence layer. Its parameter registry is \path{agent_config/agent_parameters_r0.3.1.json}, with explicit PCG64 seed 20260720. It executes 96 paths of 104 weekly periods. Every path contains macro, election, and sports pools. Each pool contains 64 Senior LPs, 24 LBPs, 12 market makers, 16 liquidators, and 96 traders.

\begin{table}[H]
\centering
\caption{Agent pool registry. Monetary quantities are synthetic settlement units.}
\label{tab:agent-pool-registry}
\small
\begin{tabular}{lrrrrr}
\toprule
Pool & Senior target & Trader collateral & Haircut & Risk & Max $L$\\
\midrule
Macro & 10,000,000 & 4,600,000 & 1.0\% & 0.65 & $5\times$\\
Election & 8,000,000 & 3,600,000 & 1.4\% & 0.90 & $5\times$\\
Sports & 7,000,000 & 3,200,000 & 2.0\% & 1.20 & $3\times$\\
\bottomrule
\end{tabular}
\end{table}

The rate curve retains $(r_0,U^*,s_1,s_2)=(0.06,0.80,0.12,0.65)$; the static-rate configurations use 0.15. Loan-duration probabilities for one, two, and four weeks are $(0.35,0.40,0.25)$. The leverage-policy weights are those in \Cref{tab:leverage-mix-parameters}.

\begin{table}[H]
\centering
\scriptsize
\caption{Agent configuration mechanisms and interest shares.}
\label{tab:agent-configuration-registry}
\begin{tabular}{lcccccc}
\toprule
Configuration & Dynamic rate & Dynamic reserve & LBP & Bonded MM & Loss queue & Risk leverage\\
\midrule
Senior only & No & No & No & No & No & No\\
Segmented & No & No & Yes & No & Yes & No\\
Adaptive & Yes & Yes & Yes & Yes & Yes & No\\
Full Axient & Yes & Yes & Yes & Yes & Yes & Yes\\
\bottomrule
\end{tabular}
\end{table}

The Markov regime transition matrix is
\[
\begin{pmatrix}
0.955&0.042&0.003\\
0.220&0.720&0.060\\
0.120&0.380&0.500
\end{pmatrix},
\]
for normal, stress, and crisis. Regime multipliers are $(1.0,2.5,6.0)$. Per-week probabilities by regime are:
\begin{itemize}[leftmargin=*]
\item common venue: $(0.001,0.005,0.030)$;
\item stablecoin depeg: $(0.0005,0.0025,0.018)$;
\item oracle delay: $(0.002,0.010,0.050)$; and
\item strategic MM withdrawal: $(0.010,0.050,0.180)$.
\end{itemize}

The dynamic reserve target used in the reference model is the sum of configured market and pool target fractions applied to outstanding credit plus $2.5$ times perceived loss applied to Senior NAV. When below target, the reserve share can increase by at most eight percentage points, funded 60 percent from the Senior income share and 40 percent from the protocol income share. This rule is a transparent fixture, not an optimal policy claim.

\subsection{Agent hypothesis registry}
The seven thresholds H1--H7 are reproduced in \Cref{tab:agent-hypothesis-definitions}. Their final realized values are in \Cref{tab:agent-hypothesis-results}. The parameter-file hash and output hashes are recorded in the agent manifest. Registration before the final execution is verifiable from the release files but was not performed by an independent registry.

\subsection{Release-registered weekly agent model}
The r0.3.1 agent experiment uses explicit PCG64 seed 20260720, 96 paths, 104 weekly periods, three isolated pools, four protocol configurations, and three leverage policies. Each pool contains 64 Senior LPs, 24 LBPs, 12 market makers, 16 liquidators, and 96 traders. The regime transition matrix is given in \eqref{eq:agent-regime-matrix}. The pool parameters, interest shares, shock channels, and seven hypothesis thresholds are stored in \texttt{agent\_config/agent\_parameters\_r0.3.1.json}. The registration is release-local and was completed before final execution; it is not an external preregistration.

The seven thresholds are: 35 percent relative Senior-loss-incidence reduction; 70 percent preservation of benchmark accepted credit; 60 percent normal-regime LBP participation; 15 percent strategic-withdrawal shortfall reduction from bonded MM support; request-time advantage at most $10^{-10}$; common-shock correlation above 0.40; and at least 25 percent relative Senior-loss-incidence increase under the $5\times$-heavy mix versus $2\times$ only.

\subsection{Independent daily robustness model}
The secondary implementation uses seeds 20260720, 20260721, and 20260722. The ordinary panel contains 600 paths per policy over 365 days. The forced-scenario panel contains 100 paths per policy-scenario over 180 days. Sensitivity cells contain 120 paths over 150 days. It uses two pools and two venue states, with 40 Senior agents, 20 LBP agents, and six market makers per pool. Its parameter registry is \texttt{config/agent\_based\_parameters.json}. The daily model is a robustness implementation, not a second preregistered test family; it does not alter the weekly model's hypothesis verdicts.

\section{Reproducibility and Release Structure}
\label{app:reproducibility}

\subsection{Software environment}
The full research package contains:
\begin{itemize}[leftmargin=*]
\item the complete LaTeX manuscript and BibTeX database;
\item the r0.2.2 exact-decimal and integer implementation-parity suites;
\item the release-registered weekly agent model and parameter registry;
\item the independently coded daily robustness model and parameter registry;
\item CSV, JSON, compressed path outputs, generated TeX tables, and figure PDFs;
\item invariant, hypothesis, release-parity, and agent-release verification reports; and
\item SHA-256 manifests at the experiment and release levels.
\end{itemize}
The requirements file pins NumPy, pandas, and Matplotlib. Figures are generated as vector PDFs with embedded TrueType fonts and fixed metadata.

\subsection{Reproduction commands}
From the release root:
\begin{lstlisting}[language=bash]
python -m venv .venv
. .venv/bin/activate
pip install -r requirements.txt
python src/simulate.py
python src/implementation_parity.py
python src/verify_release.py
# Weekly r0.3.1 evidence is reproduced in the pinned container.
./scripts/reproduce_r031_container.sh

# Independent daily robustness model and publication layer.
python src/agent_based.py
python src/build_daily_robustness_tables.py
python visualization/render_axient_figures.py
python tools/verify_r032_release.py
pdflatex main.tex
bibtex main
pdflatex main.tex
pdflatex main.tex
\end{lstlisting}
The convenience script \texttt{reproduce.sh} executes the local exact-accounting, independent-daily, visualization, and manuscript checks, subject to local LaTeX availability. The registered weekly model is reproduced separately through \texttt{scripts/reproduce\_r031\_container.sh}, because its scientific identity includes the pinned Linux image, dependency lock, locale, timezone, single-thread numerical-library settings, and explicit PCG64 generator. Runtime is not a protocol performance claim.

\subsection{Original exact and stochastic outputs}
The r0.2.2 layer writes:
\begin{itemize}[leftmargin=*]
\item \texttt{deterministic\_fixtures.csv} and \texttt{implementation\_parity\_fixtures.csv};
\item integer rounding and implementation-parity summaries;
\item the finite scenario grid, Monte Carlo summary, leverage-mix table, queue stress, and two-pool contagion outputs;
\item seven original figure PDFs; and
\item \texttt{simulation\_manifest.json} and \texttt{release\_parity.txt}.
\end{itemize}
The 28 fixtures and 31,082 deterministic checks are retained unchanged from r0.2.2.

\subsection{Weekly agent outputs}
The canonical weekly source is \path{src/agent_market.py}. It reads the r0.3.1 registry at \path{agent_config/agent_parameters_r0.3.1.json}, uses an explicit PCG64 generator in the fixed Linux runtime, and writes:
\begin{itemize}[leftmargin=*]
\item run, pool, regime, event, weekly-quantile, and hypothesis tables under \texttt{agent\_outputs/};
\item the request-time queue fixture and invariant reports;
\item six vector figures under \texttt{agent\_figures/};
\item exact LaTeX table rows generated by \path{src/build_agent_tables.py}; and
\item \texttt{agent\_simulation\_manifest.json}, including the parameter hash, source hashes, path count, horizon, hypothesis count, and 70,207,488 scalar invariant evaluations.
\end{itemize}
The weekly execution is the controlling source for H1--H7.

\subsection{Independent daily robustness outputs}
The secondary source \texttt{src/agent\_based.py} reads \texttt{config/agent\_based\_parameters.json} and writes:
\begin{itemize}[leftmargin=*]
\item ordinary policy paths and summaries;
\item forced venue, withdrawal, and combined scenario panels;
\item cash-buffer/social-run, LBP-capital, and leverage-mix sensitivity tables;
\item six vector figures under \texttt{figures/abm\_*};
\item a 27-check structural verification report; and
\item a separate manifest with parameter, source, figure, and output hashes.
\end{itemize}
The daily implementation is a robustness model. Its numerical levels are not pooled with the weekly hypothesis experiment.

\subsection{Release verifiers}
The r0.3.1 weekly evidence is checked against its 26-file registered bundle, manifest, exact H1--H7 values, zero-failure invariant report, and the common hash from three clean container replays. The inherited r0.2.2 verifier independently confirms 28 fixtures, 31,082 deterministic accounting checks, 84 scenario rows, 200,000 Monte Carlo paths, 20,000 withdrawal-queue paths per cell, and 500,000 two-pool contagion paths. The daily robustness manifest and 27-check structural report remain separate. The top-level r0.3.3 publication verifier composes these evidence classes, verifies that manuscript tables are byte-identical to the registered weekly tables, checks the publication figure and PDF manifests, and scans the revised prose for stale release values. A passing verifier establishes artifact and code-output consistency, not behavioral or empirical validity.

\subsection{Determinism and evidence labels}
The decimal and integer fixtures are deterministic under the same Python arithmetic. The registered weekly model uses an explicit PCG64 generator under its pinned Linux and dependency contract; the independent daily model uses its own fixed-seed implementation and is retained as a separate robustness layer. Every agent output is labeled synthetic. Reproduction confirms that the code implements the registered generators; it does not establish that those generators match real providers, market makers, venues, or traders.

Release registration records H1--H7 before final execution but is not external preregistration. Any empirical calibration must use a new version, explicit holdout data, and distinct artifact names rather than silently replacing the r0.3.0 or r0.3.1 synthetic outputs.

\subsection{Reference-code boundary}
The tagged private r0.2.2 snapshot imports the same named accounting fixtures and invariant IDs and exercises reference Solidity modules through unit, bounded-fuzz, stateful-invariant, and Python-to-Solidity differential tests. The r0.3.1 deterministic agent layer is research code, not a smart-contract implementation target in that tag. A later engineering release should import agent parameters and output contracts only where they correspond to enforceable protocol policies; behavioral response rules remain off-chain research assumptions.

\subsection{Release classes}
The minimal arXiv package contains only source files required to compile the manuscript. The public research package adds code, parameters, outputs, figures, and public QA. A confidential package may additionally contain private repository snapshots or internal audits. The cumulative Axient archive nests prior immutable releases rather than overwriting them. Readers should verify SHA-256 manifests before relying on any table or figure.

\subsection{Publication visualization layer}
The r0.3.3 reproducible publication identity does not alter the r0.3.1 registered numerical CSV, JSON, or LaTeX-table artifacts. It re-renders manuscript figures from those canonical artifacts and from the unchanged r0.2.2 and independent-daily robustness outputs using the versioned renderer at \path{visualization/render_axient_figures.py}. The renderer applies the Axient academic palette, explicit Inter typography, direct labeling, and vector PDF/SVG output inside a pinned Linux runtime. Figure-level presentation changes are therefore separated from the deterministic numerical identity. Any future change to a plotted datum requires a new numerical release identity rather than a visualization-only revision.

\section{Repository-Linked Reference Snapshot}
\label{app:repository-reference}

\subsection{Immutable identity}
The r0.2.2 research release links the manuscript to the following access-controlled engineering snapshot. Repository privacy is an access-control property, not part of the mathematical claim; the complete source export is included in the full research package.

\begin{table}[H]
\centering
\small
\caption{Canonical reference snapshot metadata.}
\label{tab:repository-snapshot}
\begin{tabularx}{\textwidth}{lX}
\toprule
Field & Value\\
\midrule
Repository & \texttt{AxientLab/axient-workspace} (private)\\
Branch & \texttt{research/paper2-r0.2.1-implementation-parity}\\
Tagged commit & \texttt{1cbbd25d3fd8e0dd7569111c0f4c824f564b857f}\\
Annotated tag & \texttt{paper2-r0.2.1-reference.1}\\
Tag-object SHA & \texttt{e9087c1b4f9932b412655d30592aa009da802f71}\\
Source-export SHA-256 & \texttt{052594276545005882824b61513af2a7}\newline\texttt{fe592e2edc61852228b5cd6c23fd89a3}\\
Public-claim snapshot SHA-256 & \texttt{eb71866675f5f7a58e1a9da891771aefe}\newline\texttt{987834bcf9de8dc0a1cac4575bf031f}\\
\bottomrule
\end{tabularx}
\end{table}

The GitHub-generated source export carries the tagged commit as its archive comment. The annotated-tag object identifier is release metadata recorded by the project owner; the full release preserves both values rather than inferring one from the other.

\subsection{Canonical commands and evidence}
The repository workflow uses Python 3.11, Foundry, and a clean LaTeX runner. The canonical research/reference gates are:
\begin{lstlisting}[language=bash]
python3.11 tools/build.py
python3.11 tools/scan.py --json
make verify-paper2-r021
python3.11 contracts/scripts/generate_paper2_differential_vectors.py
forge test --root contracts

env TEXINPUTS=. pdflatex -interaction=nonstopmode -halt-on-error main.tex
\end{lstlisting}
The tagged reports record 28 fixtures, 31,082 deterministic checks, 45 passing Foundry test functions, 10 fuzz functions with 2,560 bounded cases, six stateful invariants configured for 96 runs by 48 calls, and three deterministic Python-to-Solidity differential tests. These counts are release evidence, not economic or security guarantees.

\subsection{Traceability chain}
The release links each normative implementation-parity invariant to a claim identifier, a named reference fixture, a protocol-specification section, a Solidity module, and an executable test boundary. The four-dimensional claim model records research status, reference-code status, independent-audit status, and production status separately. Public documentation consumes an immutable claim snapshot rather than reinterpreting manuscript prose.

\subsection{Snapshot verification note}
The supplied project-level checksum list contains 288 entries. Against the GitHub-generated source archive, 286 entries verify directly. One generated documentation-build manifest is not present in the source export, and one nested checksum file changed after the project-level list was generated. The r0.2.2 package therefore does not treat that list as the canonical archive checksum. It records the source-export SHA-256 above and supplies a new independent file manifest over the archived snapshot. This qualification concerns release packaging, not the 28-fixture and 31,082-check execution result, which was reproduced from the imported source suites.

\subsection{Deterministic agent-evidence snapshot}
The agent-based numerical evidence used by this publication has a separate immutable identity from the r0.2.2 contract-reference snapshot. The project owner supplied the following private-repository metadata, while the uploaded source archive independently carries the tagged commit in its ZIP comment and the public research package preserves the complete deterministic evidence bundle.

\begin{table}[H]
\centering
\small
\caption{Deterministic agent-evidence metadata.}
\label{tab:agent-repository-snapshot}
\begin{tabularx}{\textwidth}{lX}
\toprule
Field & Value\\
\midrule
Repository & \texttt{AxientLab/axient-workspace} (private)\\
Branch & \texttt{research/paper2-r0.3.0-agent-economics}\\
Tagged commit & \nolinkurl{38a0360feb66953e78a4113358b37258da3e113f}\\
Annotated tag & \texttt{paper2-r0.3.1-deterministic}\\
Tag-object SHA & \nolinkurl{7463b0b4235861e0a0d1af63db8dc186e3b87a52}\\
Canonical numerical-bundle SHA-256 & \texttt{d240f0a805fe7996e5f17e87284c1b01}\newline\texttt{597fb8b885e5b989f92ff6a4c1df8429}\\
Base image digest & \texttt{python@sha256:}\newline\texttt{b18992999dbe963a45a8a4da40ac2b1}\newline\texttt{975be1a776d939d098c647482bcad5cba}\\
Dependency-lock SHA-256 & \texttt{5c65e9817793c5f7a93c3cf46dceba3}\newline\texttt{f8d3034d7f447532de84fe07ae91f104a}\\
Historical r0.3.1 manuscript PDF SHA-256 & \texttt{486fb9982d17227e0eae45ff92631ecc}\newline\texttt{09f8fd012a2b83b7ee84024247d8c06f}\\
\bottomrule
\end{tabularx}
\end{table}

The r0.3.0 numerical artifacts are preserved as historical outputs but are not the controlling evidence for this manuscript. They were generated under an incomplete historical runtime/RNG contract and differed from three mutually identical clean runs in the pinned environment. The r0.3.1 identity fixes the generator to PCG64, pins Python and dependencies, and obtains the same canonical bundle hash across all clean replays. The present r0.3.3 reproducible publication identity changes manuscript serialization and figure rendering only; it does not modify the r0.3.1 CSV, JSON, or generated LaTeX-table values.

The r0.3.1 release-decision note contains a stale manuscript-PDF checksum. The authoritative r0.3.1 value is recorded in \Cref{tab:agent-repository-snapshot}. The r0.3.2 packages are retained as untagged historical publication candidates. The r0.3.3 package uses newly generated independent manifests and does not rely on either historical publication PDF for its identity.

\subsection{Access and archival boundary}
The minimal arXiv source package contains only manuscript compilation assets. The full research release includes the private-repository source export, its independent manifest, the claim snapshot, test reports, and repository metadata. A reader can therefore inspect and execute the tagged reference source without GitHub access. No private credentials, Git history, environment files, or deployment secrets are included.

\section{Agent-Based Capital-Market Model}
\label{app:agent-model}

This appendix specifies the release-registered weekly agent model used in \Cref{sec:agent-design,sec:agent-results}. The model is a synthetic mechanism laboratory. It is not fitted to Axient, a prediction venue, or a deployed lending protocol. Its purpose is to expose feedback among capital supply, credit demand, execution capacity, withdrawal coordination, and loss allocation under transparent rules.

\subsection{Simulation identity and scope}
The canonical source is \path{src/agent_market.py}; the parameter registry is \path{agent_config/agent_parameters_r0.3.1.json}. The final execution uses explicit PCG64 seed 20260720, 96 paired paths, 104 weekly periods, three isolated pools, four protocol configurations, and three leverage policies. The parameter-file SHA-256 and every generated artifact hash are recorded in \path{agent_outputs/agent_simulation_manifest.json}.

The three pools are:
\begin{table}[H]
\centering
\caption{Registered pool parameters in the weekly agent model.}
\label{tab:agent-pool-parameters}
\small
\begin{tabular}{lrrrrr}
\toprule
Pool & Senior target & Trader collateral & Haircut & Risk & Max $L$\\
\midrule
Macro & 10.0M & 4.6M & 1.0\% & 0.65 & $5\times$\\
Election & 8.0M & 3.6M & 1.4\% & 0.90 & $5\times$\\
Sports & 7.0M & 3.2M & 2.0\% & 1.20 & $3\times$\\
\bottomrule
\end{tabular}
\end{table}

Each pool contains 64 Senior LP agents, 24 LBP agents, 12 market makers, 16 liquidators, and 96 trader agents. Endowments and behavioral coefficients are heterogeneous fixed-seed draws. Common random numbers are reused across protocol configurations so that comparisons are paired at the path level.

\subsection{Regime process}
The common regime $z_t\in\{0,1,2\}$ represents normal, stress, and crisis states and follows the transition matrix
\begin{equation}
P=
\begin{pmatrix}
0.955 & 0.042 & 0.003\\
0.220 & 0.720 & 0.060\\
0.120 & 0.380 & 0.500
\end{pmatrix}.
\label{eq:agent-regime-matrix}
\end{equation}
The regime multiplies base loss and execution intensities by $(1,2.5,6.0)$. Conditional incident channels include common venue failure, stablecoin impairment, oracle delay, strategic market-maker withdrawal, and idiosyncratic pool shocks. The probabilities are author-specified and are not empirical frequency estimates.

\subsection{Senior participation rule}
Senior agent $i$ has active capital $s_{i,t}$, maximum capital $\bar s_i$, reservation return $o_i$, impairment aversion $\lambda_i$, queue aversion $\kappa_i$, and peer-withdrawal sensitivity $\eta_i$. Its desired participation fraction is a bounded logistic response to
\begin{equation}
\Delta^S_{i,t}=\widehat y^S_t-o_i-
\lambda_i\widehat\mu^S_t-
\kappa_i\widehat q_t-
\eta_i\widehat w_t.
\label{eq:agent-senior-score}
\end{equation}
Positive changes create deposits subject to pool capacity. Negative changes create share-denominated queued withdrawal requests. In configurations with loss-participating queues, requested shares remain in NAV and loss allocation until payment.

\subsection{LBP participation rule}
LBP agent $j$ compares expected enhanced compensation with outside return, impairment, and encumbrance:
\begin{equation}
\Delta^J_{j,t}=\widehat y^J_t-o^J_j-
\lambda^J_j\widehat\mu^J_t-
\kappa^J_j\widehat\ell_t.
\label{eq:agent-lbp-score}
\end{equation}
A logistic response maps this score to desired junior capital. LBP capital is not lendable Senior cash. It is a separate provider claim, can be encumbered, and is impaired before Senior principal.

\subsection{Trader demand and leverage selection}
Trader $n$ has collateral endowment, perceived edge, risk aversion, and leverage preference. It evaluates the admissible set $\mathcal L_t\subseteq\{2,3,5\}$ with the utility in \eqref{eq:trader-tier-utility}. The full configuration compresses the set in stress and crisis states and applies the pool-specific cap. Requested debt equals collateral times $L-1$. Admission is the minimum of:
\begin{enumerate}[leftmargin=*]
\item available Senior cash after the withdrawal buffer;
\item utilization headroom;
\item ordinary and bonded execution capacity after haircuts;
\item pool, market, event-class, and leverage limits; and
\item the risk-sensitive state cap.
\end{enumerate}
Rejected demand is recorded rather than assumed to disappear without measurement.

\subsection{Market-maker and liquidator rules}
Ordinary MM quote participation is a logistic function of expected spread income minus inventory and adverse-selection costs. It is revocable. Bonded capacity is admitted only when the commitment premium and expected execution compensation exceed lock cost and expected slashing. A strategic withdrawal shock removes ordinary capacity and can trigger bonded delivery or slashing.

Liquidator $\ell$ enters when the bounty condition in \eqref{eq:liquidator-entry} holds. Aggregate submitted capacity is capped by active liquidator resources. The model distinguishes submitted capacity, venue recovery, and settlement recovery; it does not equate entry with successful fill.

\subsection{Position cohorts and settlement}
New positions are aggregated into maturity cohorts of one, two, or four weeks with probabilities $(0.35,0.40,0.25)$. Each cohort stores principal, gross exposure, accrued interest, MM-supported quantity, and bond allocation by pool and leverage tier. When a cohort matures, the model:
\begin{enumerate}[leftmargin=*]
\item realizes regime-, pool-, leverage-, venue-, and close-pressure-dependent recovery;
\item applies ordinary and bonded execution capacity;
\item recognizes delivered or slashed MM value;
\item applies settlement and oracle-delay effects;
\item repays recognized debt and routes interest shares; and
\item applies any remaining principal shortfall through the exact waterfall.
\end{enumerate}
No matched or provisional value reduces debt before the synthetic settlement step.

\subsection{Dynamic rate and reserve controller}
The adaptive rate follows the two-slope curve from \Cref{sec:interest}, with registered parameters
\begin{equation}
r_0=0.06,
\qquad U^\star=0.80,
\qquad s_1=0.12,
\qquad s_2=0.65.
\label{eq:agent-rate-parameters}
\end{equation}
The rate is capped at 150 percent annualized in the simulator to prevent numerical runaway; the cap is a synthetic policy parameter. The dynamic reserve target equals the local market and pool target applied to outstanding credit plus $2.5$ times the perceived-loss signal applied to Senior NAV. When below target, up to eight percentage points of gross interest allocation can be redirected from Senior and protocol revenue to reserve replenishment under the registered rule.

\subsection{Weekly transition algorithm}
\begin{algorithm}[H]
\caption{Release-registered weekly agent transition}
\label{alg:agent-weekly}
\begin{algorithmic}[1]
\Require prior pool states, common regime, fixed agent traits, protocol configuration
\State transition the common regime and draw common and idiosyncratic shocks
\State accrue interest and identify maturity cohorts
\State obtain ordinary MM depth, bonded capacity, and active liquidator capacity
\State realize execution, settlement, MM delivery or slashing, and principal shortfall
\State apply the deterministic waterfall and fee routing
\State update loss, queue, MM-default, and reserve signals
\State update Senior and LBP desired capital; add deposits or queue withdrawals
\State service the share-denominated queue from eligible settled cash
\State solve the damped utilization response and admit new tiered credit
\State check non-negativity, capacity, queue, debt, reserve, and waterfall invariants
\State emit pool-week, run-level, event, and hypothesis evidence
\end{algorithmic}
\end{algorithm}

\subsection{Configurations and interest allocation}
The Senior-only configuration uses a static 15 percent borrow rate, no LBP or reserve, revocable ordinary MM capacity, and a non-loss-participating queue. Segmented adds local reserve layers, LBP capital, and a loss-participating queue. Adaptive adds the utilization rate, dynamic reserve target, and bonded MM commitments. Full Axient adds risk-sensitive leverage gates and class-specific leverage caps.

The phased gross-exposure mix is $(0.55,0.30,0.15)$ across $(2\times,3\times,5\times)$. The $5\times$-heavy mix is $(0.25,0.25,0.50)$. Interest shares by configuration are stored in the parameter registry and sum exactly to one.

\subsection{Evidence and verification}
The canonical execution writes run-, pool-, regime-, event-, and weekly-quantile tables. It evaluates 5,850,624 scalar state checks per configuration-policy cell, or 70,207,488 in aggregate, with zero failures in the release run. These checks cover arithmetic and state constraints only. They do not establish behavioral validity, parameter realism, or production safety.

\section{Independent Daily Robustness Model}
\label{app:agent-diagnostics}

The weekly release-registered agent model is the controlling experiment for the hypothesis verdicts in \Cref{sec:agent-results}. This appendix reports an independently coded daily robustness model. Its purpose is not to average two calibrations or to select preferred numbers. It asks whether selected qualitative mechanisms survive a materially different time scale, pool aggregation, path count, and scenario design.

\subsection{Design differences}
The daily model in \texttt{src/agent\_based.py} uses two pools, two venue states, 40 heterogeneous Senior agents and 20 LBP agents per pool, six market makers per pool, and trader demand aggregated over $2\times$, $3\times$, and $5\times$ cohorts. The main panel uses 600 paths per policy over 365 days. The registered scenario panel uses 100 paths per policy-scenario over 180 days, and each sensitivity cell uses 120 paths over 150 days. The seed registry is 20260720--20260722. All paths and parameters are synthetic.

The four policies are:
\begin{enumerate}[leftmargin=*]
\item static untranched;
\item adaptive isolated without LBP capital;
\item full isolated with reserve and LBP capital; and
\item full with a shared global reserve.
\end{enumerate}
The daily model marks provider wealth as outside cash plus the value of active and queued shares under conservative holdbacks. A negative provider return can therefore arise from liquidity or uncertainty marking even when no finalized Senior principal write-down occurs.

\subsection{Ordinary synthetic panel}
\begin{table}[H]
\centering
\scriptsize
\caption{Independent daily robustness panel. Each policy contributes 1,200 pool-path observations. ``Material Senior loss'' exceeds 0.1 percent of initial Senior supply. Results are synthetic and are not directly comparable in level with the weekly agent model.}
\label{tab:daily-robustness-pooled}
\begin{tabular}{lrrrrrr}
\toprule
Policy & Senior ret. & Material loss & Run & Rejected & Hard-flat & LBP ret.\\
\midrule
Static untranched & 0.44\% & 3.08\% & 51.2\% & 94.4\% & 99.83\% & -- \\
Adaptive isolated & 2.13\% & 0.67\% & 0.0\% & 35.1\% & 99.90\% & -- \\
Full isolated & 2.71\% & 0.00\% & 0.0\% & 8.2\% & 99.84\% & 1.34\% \\
Full + shared reserve & 2.73\% & 0.00\% & 0.0\% & 6.8\% & 99.86\% & 1.36\% \\
\bottomrule

\end{tabular}
\end{table}

The untranched policy exhibits both material credit loss and frequent queue stress. In the ordinary synthetic panel, the full policies eliminate material Senior write-downs within the registered path set while admitting materially more demand. This is a scenario-conditional comparison, not a universal guarantee.

\subsection{Adversarial scenario panel}
\begin{table}[H]
\centering
\scriptsize
\caption{Daily robustness scenario panel. Returns are conservative provider-wealth changes over the scenario horizon, not annualized APY.}
\label{tab:daily-robustness-scenarios}
\begin{tabular}{llrrrrr}
\toprule
Scenario & Policy & Senior ret. & Material loss & Run & Hard-flat & Rejected\\
\midrule
Stochastic & Adaptive & 1.1\% & 1.5\% & 0.0\% & 99.9\% & 39.0\% \\
Stochastic & Full isolated & 1.4\% & 0.0\% & 0.0\% & 99.8\% & 7.6\% \\
Stochastic & Full shared & 1.4\% & 0.0\% & 0.0\% & 99.9\% & 7.5\% \\
Venue stress & Adaptive & 1.1\% & 1.5\% & 0.0\% & 98.6\% & 43.7\% \\
Venue stress & Full isolated & 0.6\% & 0.0\% & 0.0\% & 98.6\% & 8.0\% \\
Venue stress & Full shared & 0.9\% & 0.0\% & 0.0\% & 98.5\% & 11.8\% \\
Withdrawal contagion & Adaptive & 1.3\% & 1.0\% & 0.0\% & 99.8\% & 35.9\% \\
Withdrawal contagion & Full isolated & 1.6\% & 0.0\% & 0.0\% & 99.8\% & 12.7\% \\
Withdrawal contagion & Full shared & 1.6\% & 0.0\% & 0.0\% & 99.8\% & 11.4\% \\
Combined & Adaptive & -11.6\% & 3.5\% & 13.0\% & 97.9\% & 48.3\% \\
Combined & Full isolated & -25.4\% & 0.0\% & 11.0\% & 97.6\% & 25.5\% \\
Combined & Full shared & -24.4\% & 0.0\% & 14.0\% & 97.5\% & 25.9\% \\
\bottomrule

\end{tabular}
\end{table}

Under the combined venue-and-withdrawal shock, the full policies preserve Senior principal against the registered credit-loss threshold but generate negative conservative provider-wealth returns and non-zero run incidence. This distinction is economically important: a credit waterfall can protect principal against finalized position shortfall without guaranteeing withdrawal liquidity, mark stability, or immediate redeemability.

\begin{figure}[H]
\centering
\includegraphics[width=0.80\textwidth]{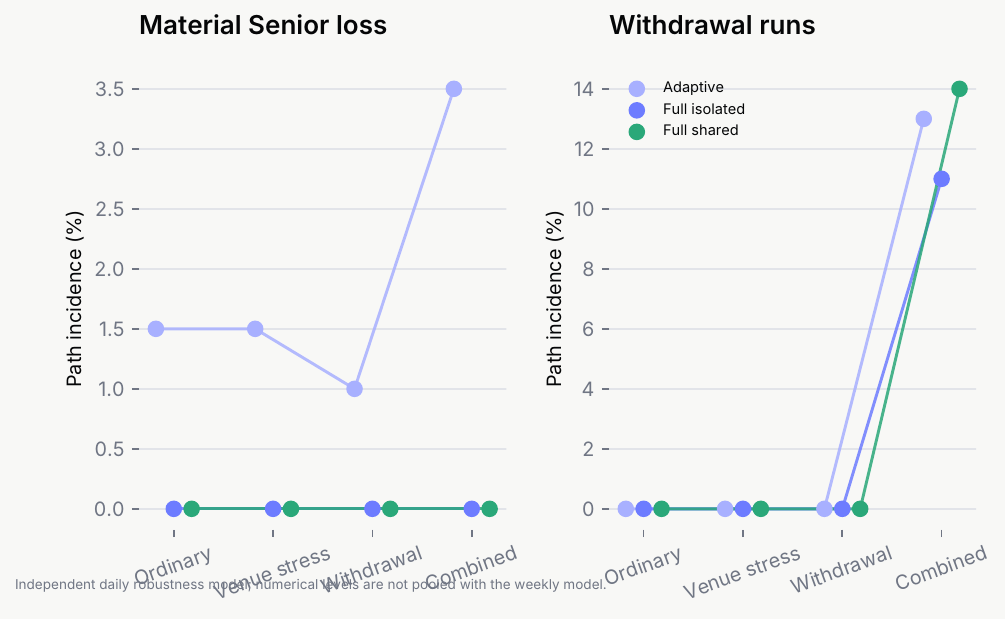}
\caption{Material Senior-loss incidence and withdrawal-run incidence in the independent daily scenario panel. Absolute values are specific to its author-specified generator.}
\label{fig:daily-scenario-panel}
\end{figure}

\subsection{Withdrawal phase transition}
The sensitivity sweep varies a social-withdrawal multiplier and the cash buffer held back from origination. The resulting phase map shows a sharp transition: with a two percent cash buffer, the run incidence rises from below one percent at unit social sensitivity to 35 percent at $2\times$ sensitivity and above 97 percent at $4\times$ or $6\times$. A ten percent buffer prevents registered runs in every cell of this grid, although it also reduces lendable capital.

\begin{figure}[H]
\centering
\includegraphics[width=0.78\textwidth]{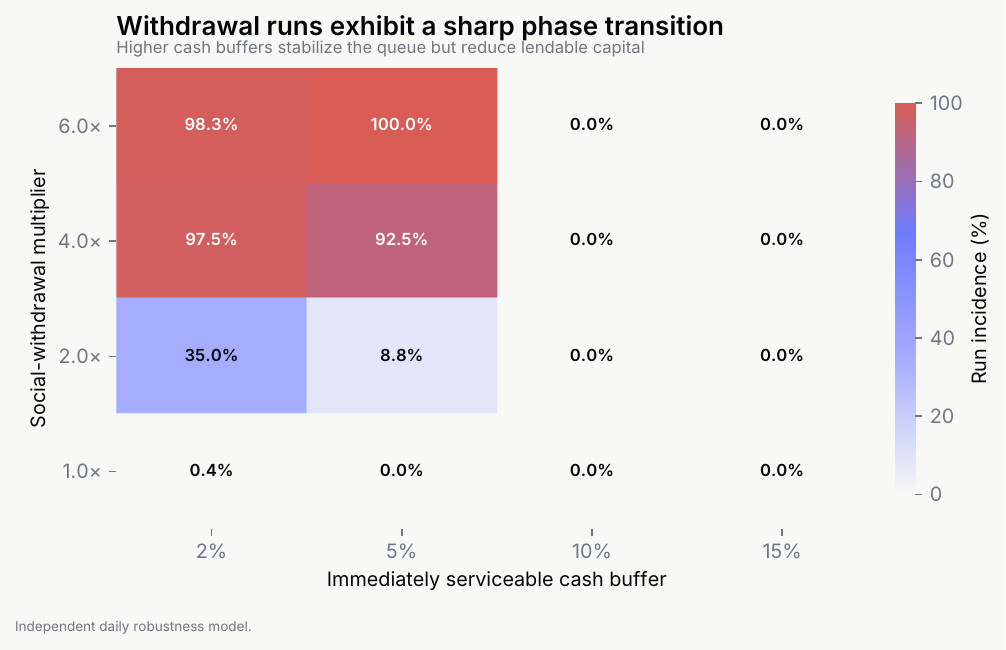}
\caption{Synthetic withdrawal-run phase map. A cash buffer is a liquidity policy, not a solvency proof.}
\label{fig:daily-run-map}
\end{figure}

The finding is consistent with \Cref{prop:run-contraction}: once peer-response amplification exceeds service capacity, the queue can become self-reinforcing. The buffer changes the service coefficient but carries an opportunity cost through lower utilization.

\subsection{Leverage-mix sensitivity}
\begin{table}[H]
\centering
\scriptsize
\caption{Daily full-shared policy under the combined scenario as the $5\times$ gross-demand share increases.}
\label{tab:daily-leverage-sensitivity}
\begin{tabular}{rrrrrr}
\toprule
$5\times$ share & Tier weights & Material loss & Senior ret. & Rejected & Hard-flat\\
\midrule
0\% & 0.647/0.353/0.000 & 0.0\% & -24.9\% & 20.0\% & 98.5\% \\
15\% & 0.550/0.300/0.150 & 0.0\% & -25.1\% & 22.0\% & 97.5\% \\
30\% & 0.453/0.247/0.300 & 0.0\% & -25.6\% & 23.8\% & 96.3\% \\
50\% & 0.324/0.176/0.500 & 0.0\% & -27.2\% & 23.9\% & 95.1\% \\
\bottomrule

\end{tabular}
\end{table}

The registered waterfall prevents material Senior principal loss across this particular sensitivity grid, but higher $5\times$ weight worsens provider wealth, increases rejection, and reduces hard-flat success from 98.5 to 95.1 percent. The result reinforces the capital-crowding proposition without claiming that this grid locates a production-safe leverage mix.

\begin{figure}[H]
\centering
\includegraphics[width=0.76\textwidth]{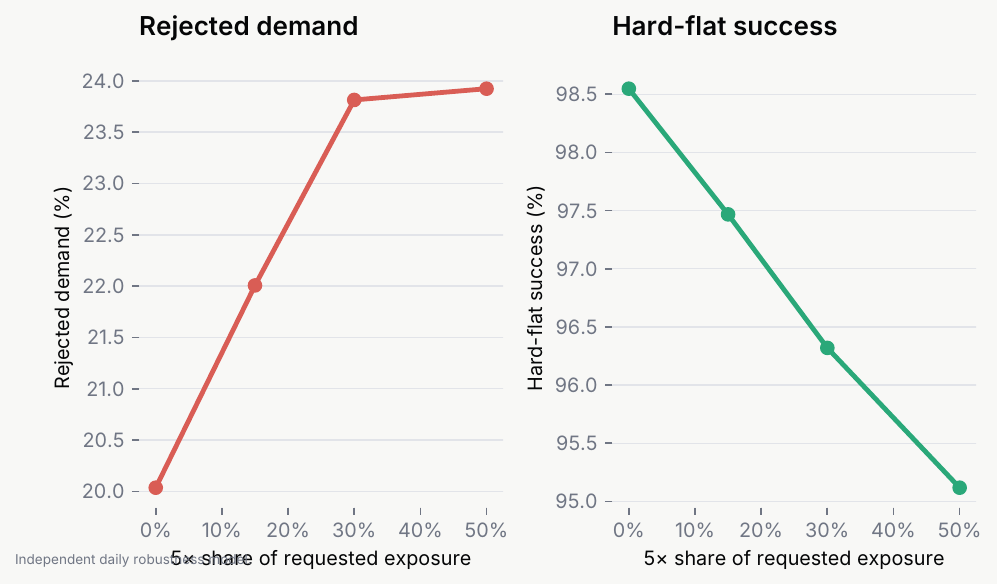}
\caption{Independent daily leverage-mix trade-off.}
\label{fig:daily-leverage-mix}
\end{figure}

\subsection{Cross-model interpretation}
The two agent implementations differ too substantially for pooled numerical inference. Their qualitative agreement is narrower:
\begin{enumerate}[leftmargin=*]
\item layered protection reduces registered Senior credit loss but does not eliminate provider liquidity or mark risk;
\item higher $5\times$ weight worsens capital efficiency and hard-flat conditions;
\item withdrawal stability depends jointly on behavioral amplification and immediately serviceable cash;
\item shared protection can reduce local credit loss while retaining common-factor and queue contagion; and
\item LBP capital is a priced, impairable provider layer rather than a guaranteed insurance fund.
\end{enumerate}
These are robustness observations across two synthetic mechanisms. They remain hypotheses for empirical calibration.

\section*{Acknowledgments}
The author acknowledges the ForesightFlow research programme for the empirical and theoretical foundation on event-linked perpetuals, informed-flow detection, manipulation risk, and prediction-market microstructure on which this work builds. This paper is a standalone Axient protocol-design and agent-based validation study and is not numbered within the four-paper ForesightFlow Event-Linked Perpetuals programme. The manuscript text is intended for distribution under CC BY 4.0 and the accompanying code under the MIT License.

\paragraph{Computational assistance disclosure.}
Automated language and code-assistance tools, including generative language models, were used during manuscript preparation for editorial review, consistency checks, LaTeX typesetting support, and development and testing of deterministic and synthetic verification scripts. The author specified and reviewed the research question, protocol design, assumptions, formal statements, proofs, numerical configurations, interpretation, and final manuscript, and takes responsibility for the complete work and accompanying materials. These tools were not treated as evidentiary sources and are not credited with authorship.

\bibliographystyle{plainnat}
\bibliography{references}

@article{brunnermeier2009,
  author={Brunnermeier, Markus K. and Pedersen, Lasse Heje}, title={Market Liquidity and Funding Liquidity}, journal={Review of Financial Studies}, volume={22}, number={6}, pages={2201--2238}, year={2009}
}

@article{diamond1983,
  author={Diamond, Douglas W. and Dybvig, Philip H.}, title={Bank Runs, Deposit Insurance, and Liquidity}, journal={Journal of Political Economy}, volume={91}, number={3}, pages={401--419}, year={1983}
}

@article{eisenberg2001,
  author={Eisenberg, Larry and Noe, Thomas H.}, title={Systemic Risk in Financial Systems}, journal={Management Science}, volume={47}, number={2}, pages={236--249}, year={2001}
}

@article{hanson2003,
  author={Hanson, Robin}, title={Combinatorial Information Market Design}, journal={Information Systems Frontiers}, volume={5}, pages={107--119}, year={2003}
}

@article{wolfers2004,
  author={Wolfers, Justin and Zitzewitz, Eric}, title={Prediction Markets}, journal={Journal of Economic Perspectives}, volume={18}, number={2}, pages={107--126}, year={2004}
}

@article{manski2006,
  author={Manski, Charles F.}, title={Interpreting the Predictions of Prediction Markets}, journal={Economics Letters}, volume={91}, number={3}, pages={425--429}, year={2006}
}

@misc{compound2019,
  author={Leshner, Robert and Hayes, Geoffrey}, title={Compound: The Money Market Protocol}, year={2019}, howpublished={White paper}
}

@misc{aave2026,
  author={{Aave}}, title={Aave Protocol Technical Documentation}, year={2026}, howpublished={\url{https://aave.com/docs}}, note={Accessed 2026-07-14; background for pooled lending, utilization rates, reserve factors, and liquidation design}
}

@inproceedings{werner2022,
  author={Werner, Sam M. and Perez, Daniel and Gudgeon, Lewis and Klages-Mundt, Ariah and Harz, Dominik and Knottenbelt, William J.}, title={SoK: Decentralized Finance (DeFi)}, booktitle={Proceedings of the 4th ACM Conference on Advances in Financial Technologies}, year={2022}
}

@inproceedings{gudgeon2020,
  author={Gudgeon, Lewis and Perez, Daniel and Harz, Dominik and Livshits, Benjamin and Gervais, Arthur}, title={The Decentralized Financial Crisis}, booktitle={2020 Crypto Valley Conference on Blockchain Technology}, year={2020}
}

@misc{perez2020,
  author={Perez, Daniel and Werner, Sam M. and Xu, Jiahua and Livshits, Benjamin}, title={Liquidations: DeFi on a Knife-edge}, year={2020}, eprint={2009.13235}, archivePrefix={arXiv}, primaryClass={cs.CR}, note={arXiv:2009.13235}
}

@misc{qin2023miqado,
  author={Qin, Kaihua and Ernstberger, Jens and Zhou, Liyi and Jovanovic, Philipp and Gervais, Arthur}, title={Mitigating Decentralized Finance Liquidations with Reversible Call Options}, year={2023}, eprint={2303.15162}, archivePrefix={arXiv}, primaryClass={cs.CR}, note={arXiv:2303.15162}
}

@misc{bastankhah2024agilerate,
  author={Bastankhah, Mahsa and Nadkarni, Viraj and Wang, Xuechao and Viswanath, Pramod}, title={AgileRate: Bringing Adaptivity and Robustness to DeFi Lending Markets}, year={2024}, eprint={2410.13105}, archivePrefix={arXiv}, primaryClass={cs.CR}, note={arXiv:2410.13105}
}

@misc{bastankhah2024fastslow,
  author={Bastankhah, Mahsa and Nadkarni, Viraj and Wang, Xuechao and Jin, Chi and Kulkarni, Sanjeev and Viswanath, Pramod}, title={Thinking Fast and Slow: Data-Driven Adaptive DeFi Borrow-Lending Protocol}, year={2024}, eprint={2407.10890}, archivePrefix={arXiv}, primaryClass={cs.CR}, note={arXiv:2407.10890}
}

@misc{sadeghi2026,
  author={Sadeghi, Agathe and Feinstein, Zachary}, title={Liquidation Dynamics in DeFi and the Role of Transaction Fees}, year={2026}, eprint={2602.12104}, archivePrefix={arXiv}, primaryClass={q-fin.MF}, note={arXiv:2602.12104}
}

@misc{nechepurenko2026axient,
  author={Nechepurenko, Maksym}, title={Axient: Debt-Free Finality for Leveraged Binary Event Markets}, year={2026}, note={Axient research manuscript, version r0.3.1}
}

@misc{nechepurenko2026resolutionaware,
  title={Resolution-Aware Perpetual Futures on Binary Prediction Markets: An Empirical Risk-Design Framework Using {Polymarket} Data}, author={Nechepurenko, Maksym}, year={2026}, eprint={2605.10400}, archivePrefix={arXiv}, primaryClass={q-fin.TR}, doi={10.48550/arXiv.2605.10400}, note={arXiv:2605.10400}
}

@misc{nechepurenko2026taxonomy,
  title={A Taxonomy of Event-Linked Perpetual Futures: Variant Designs Beyond the Single-Market Binary Case}, author={Nechepurenko, Maksym}, year={2026}, eprint={2605.10428}, archivePrefix={arXiv}, primaryClass={q-fin.TR}, doi={10.48550/arXiv.2605.10428}, note={arXiv:2605.10428}
}

@misc{nechepurenko2026manipulation,
  title={Manipulation, Insider Information, and Regulation in Leveraged Event-Linked Markets}, author={Nechepurenko, Maksym}, year={2026}, eprint={2605.10486}, archivePrefix={arXiv}, primaryClass={q-fin.TR}, doi={10.48550/arXiv.2605.10486}, note={arXiv:2605.10486}
}

@misc{nechepurenko2026fillside,
  title={Fill-Side Non-Retail Trading on {Polymarket}: An Empirical Study of Behavioral Tiers and Microstructure Signatures Under Quote-Attribution Constraints}, author={Nechepurenko, Maksym}, year={2026}, eprint={2605.11640}, archivePrefix={arXiv}, primaryClass={q-fin.TR}, doi={10.48550/arXiv.2605.11640}, note={arXiv:2605.11640}
}

@incollection{tesfatsion2006,
  author={Tesfatsion, Leigh}, title={Agent-Based Computational Economics: A Constructive Approach to Economic Theory}, booktitle={Handbook of Computational Economics}, volume={2}, pages={831--880}, publisher={Elsevier}, year={2006}, doi={10.1016/S1574-0021(05)02016-2}
}

@article{farmer2009,
  author={Farmer, J. Doyne and Foley, Duncan}, title={The Economy Needs Agent-Based Modelling}, journal={Nature}, volume={460}, pages={685--686}, year={2009}, doi={10.1038/460685a}
}

@article{brock1998,
  author={Brock, William A. and Hommes, Cars H.}, title={Heterogeneous Beliefs and Routes to Chaos in a Simple Asset Pricing Model}, journal={Journal of Economic Dynamics and Control}, volume={22}, number={8--9}, pages={1235--1274}, year={1998}
}

@article{lux1999,
  author={Lux, Thomas and Marchesi, Michele}, title={Scaling and Criticality in a Stochastic Multi-Agent Model of a Financial Market}, journal={Nature}, volume={397}, pages={498--500}, year={1999}, doi={10.1038/17290}
}

@article{fagiolo2007,
  author={Fagiolo, Giorgio and Moneta, Alessio and Windrum, Paul}, title={A Critical Guide to Empirical Validation of Agent-Based Models in Economics}, journal={Computational Economics}, volume={30}, pages={195--226}, year={2007}, doi={10.1007/s10614-007-9106-0}
}

@article{goldstein2005,
  author={Goldstein, Itay and Pauzner, Ady}, title={Demand-Deposit Contracts and the Probability of Bank Runs}, journal={Journal of Finance}, volume={60}, number={3}, pages={1293--1327}, year={2005}, doi={10.1111/j.1540-6261.2005.00762.x}
}

@article{gai2010,
  author={Gai, Prasanna and Kapadia, Sujit}, title={Contagion in Financial Networks}, journal={Proceedings of the Royal Society A}, volume={466}, number={2120}, pages={2401--2423}, year={2010}, doi={10.1098/rspa.2009.0410}
}

@article{acemoglu2015,
  author={Acemoglu, Daron and Ozdaglar, Asuman and Tahbaz-Salehi, Alireza}, title={Systemic Risk and Stability in Financial Networks}, journal={American Economic Review}, volume={105}, number={2}, pages={564--608}, year={2015}, doi={10.1257/aer.20130456}
}

@article{glosten1985,
  author={Glosten, Lawrence R. and Milgrom, Paul R.}, title={Bid, Ask and Transaction Prices in a Specialist Market with Heterogeneously Informed Traders}, journal={Journal of Financial Economics}, volume={14}, number={1}, pages={71--100}, year={1985}, doi={10.1016/0304-405X(85)90044-3}
}

@article{kyle1985,
  author={Kyle, Albert S.}, title={Continuous Auctions and Insider Trading}, journal={Econometrica}, volume={53}, number={6}, pages={1315--1335}, year={1985}, doi={10.2307/1913210}
}

@misc{chaudhary2022abm,
  author={Chaudhary, Amit and Pinna, Daniele},
  title={A Multi-Asset, Agent-Based Approach Applied to {DeFi} Lending Protocol Modelling},
  year={2022},
  eprint={2211.08870},
  archivePrefix={arXiv},
  primaryClass={q-fin.MF},
  note={arXiv:2211.08870}
}

@misc{lehalle2016adverse,
  author={Lehalle, Charles-Albert and Mounjid, Othmane},
  title={Limit Order Strategic Placement with Adverse Selection Risk and the Role of Latency},
  year={2016},
  eprint={1610.00261},
  archivePrefix={arXiv},
  primaryClass={q-fin.TR},
  note={arXiv:1610.00261}
}

@misc{fodra2012inventory,
  author={Fodra, Pietro and Labadie, Mauricio},
  title={High-Frequency Market-Making with Inventory Constraints and Directional Bets},
  year={2012},
  eprint={1206.4810},
  archivePrefix={arXiv},
  primaryClass={q-fin.TR},
  note={arXiv:1206.4810}
}

\end{document}